\newcommand{\eq}[1]{\begin{align}#1\end{align}}
\newcommand{\eps}[0]{\ensuremath{\varepsilon}}
\newcommand{\dt}{\delta}
\newcommand{\what}{\widehat}
\newcommand{\bt}{\beta}
\newcommand{\cov}{\mathrm{cov}}
\newcommand{\argmin}{\mathrm{argmin}}
\newcommand{\beq}{\begin{eqnarray*}}
\newcommand{\eeq}{\end{eqnarray*}}
\newtheorem{thm}{Theorem}[section]
\newtheorem{lem}{Lemma}[section]
\newtheorem{assum}{Assumption}[section]
\numberwithin{equation}{section}
\theoremstyle{definition}
\newtheorem{exm}{Example}[section]
\newtheorem{remark}{Remark}[section]
\def\@biblabel#1{\hspace*{-\labelsep}}
\begin{document}

%\title{Change-Point in High-Dimensional Models}
%\title{$L_1$-Penalized M-Estimation with a Possible Change-Point}
%\title{High-Dimensional M-Estimation with Unknown Change Point}
%\title{Change-Point in  High-Dimensional M-Estimation}
%\title{Penalized M-Estimation of Structural Change in Sparsity for high-dimensional Data}
\title{Structural Change in Sparsity\thanks{This work was supported in part  
by the European Research Council (ERC-2009-StG-240910-ROMETA)
and by  the Social Sciences and Humanities Research Council of Canada (SSHRCC).}}

\author{Sokbae Lee\thanks{Department of Economics, Seoul National University, 1 Gwanak-ro,
Gwanak-gu, Seoul, 151-742, Republic of Korea. Email:\texttt{sokbae@gmail.com}.}, Yuan Liao\thanks{Department of Mathematics, University of Maryland, College Park, MD 20742, USA. Email: \texttt{yuanliao@umd.edu}.}, 
Myung Hwan Seo\thanks{Department of Economics, London School of Economics, Houghton
Street, London, WC2A 2AE, UK. Email: \texttt{m.seo@lse.ac.uk}.}, and Youngki Shin\thanks{Department of Economics, University of Western Ontario, 1151 Richmond Street N, London, ON N6A 5C2, Canada. Email: \texttt{yshin29@uwo.ca}.}}
\date{18 November 2014}

\maketitle

\begin{abstract}
In the high-dimensional sparse modeling literature,  it has been crucially assumed that   the sparsity structure of the model is homogeneous over the entire population. That is,  the identities of important regressors are invariant across the population and across  the individuals in the collected sample. 
In practice, however,  the sparsity structure may not always be  invariant in the population, due to  heterogeneity  across different sub-populations. We consider a general, possibly non-smooth M-estimation framework, allowing a possible structural change regarding the identities of important regressors in the population.  
Our penalized M-estimator not only selects covariates but also discriminates between a model with homogeneous sparsity and a model with a structural change in sparsity.
As a result, it is not necessary to know or pretest whether the structural change is present, or where it occurs.   % Extensive asymptotic analysis is carried out.  
We derive asymptotic bounds on the estimation loss of the penalized M-estimators, and achieve the oracle properties.  We also show that  when there is a structural change, the estimator of the threshold parameter is super-consistent. 
% It is shown that the estimation errors achieve the minimax rate of convergence only up to a $(\log p)(\log n)$ term, which arises since the structural change is unknown. 
 If the signal is relatively strong,  the rates of convergence can be further improved 
 and asymptotic distributional properties of the estimators including the  threshold estimator can be established 
 using an adaptive penalization.
  The proposed methods are then applied to quantile regression and logistic regression models and are illustrated via Monte Carlo experiments. \\

\noindent
\emph{Keywords}: structural change, change-point, variable selection, quantile regression, high-dimensional M-estimation, sparsity, LASSO, SCAD

%\noindent
%\emph{AMS 2000 subject classifications}: Primary 62H12, 62J05; secondary 62J07

\end{abstract}

\onehalfspacing

\section{Introduction}

%Estimation of a change-point and variable selection with an unknown change-point may arise in high-dimensional models. 

Sparsity is  one of the most fundamental conditions in high-dimensional regression models which assumes that only a relatively small portion of the regressors are active in the model. It has been crucial to assume that  the sparsity structure of the model is homogeneous over the entire population. That is,  the identities of contributing regressors (such as genes, control variables, and environmental variables) are the same across the population and across  the individuals in the collected sample. %;  see e.g., \cite{kaneko2012gene, tibshirani2009univariate,Bradic2011}. %wu2014variable}.  
 Under this condition, various methods, such as Lasso, Dantzig selector, and folded-concave penalizations, have been developed to identify the contributing variables and their effects on the response variable. The literature includes, for instance, \cite{Tibshirani96, Fan01, zou2005regularization, candes2007dantzig,   negahban2012,  Bickeletal, meinshausen2009lasso, zhang2010nearly, belloni2011square, bulmann}, among many others. Due to the invariance of the sparsity structure in the population, we may call such a standard sparsity condition \textit{homogeneous sparsity}. 

In practice, however,  the sparsity structure may not always be  invariant in the population, due to the  heterogeneity  across different sub-populations. 
For instance,  when analyzing high-dimensional gene expression data for disease classifications,    the identities of contributing genes may depend on the environmental or demographical variables, e.g., exposed temperature, age, weights or received treatments. In analyzing the effects of macroeconomic variables on the GDP growth rates, the contributing regressors may depend on the level of the base-GDP \citep{lee2012lasso}.  Let $Q$ be an observed environmental variable, which divides the population into two sub-populations $\{Q>\tau_0\}$ and $\{Q\leq \tau_0\}$ for some unknown threshold parameter  $\tau_0$. We consider a high-dimensional sparse model where the sparsity structure (e.g., identities and effects of important or contributing regressors) may differ  between the two sub-populations, which allows a possible structural change of the statistical model. In particular, we  allow no structural change as a special case, which corresponds to the usual sparse model. Our framework is expected to be extendable to allow for multi-changes with more than two different sparsity structures in the population.

To describe our estimation framework, let $Y \in \mathbb{R}$ be a response variable,
$Q \in \mathbb{R}$ be an environmental variable that determines a possible 
structural change,  and
$X \in \mathbb{R}^{p}$ be a $p$-dimensional vector of covariates. Here, $Q$ can 
be a component of $X$, and $p$ is potentially much larger than the sample size $n$.   
Let $\{(Y_i,Q_i,X_i):i=1,\ldots,n\}$ denote independent and identically distributed (i.i.d.) copies of $(Y,Q,X)$. We consider a general possibly non-smooth M-estimation framework that includes 
 non-differentiable losses (such as quantile regression) and binary response models (e.g., logistic regression) as special cases.  A statistical model with a possible structural change in the sparsity can be described as follows: the model involves  $\beta_0$ and $\theta_0=\beta_0+\delta_0$ as the sparse structural parameters respectively for the sub-populations $\{Q\leq \tau_0\}$ and $\{Q>\tau_0\}$, where $\tau_0$ is an unknown threshold value that determines the ``boundary" of the sub-populations. The model is associated with a known loss function $\rho(t_1, t_2): \mathbb{R}\times\mathbb{R}\rightarrow\mathbb{R}^+$, which is assumed to be convex and Liptschitz continuous with respect to $t_2$ for each $t_1$.  The unknown parameters $(\beta_0,\delta_0,%
\tau_0)$ are defined as a minimizer of the expected loss (for simplicity, we assume that  there is a unique minimizer): 
\begin{equation*}
(\beta_0,\delta_0,\tau_0)\equiv\argmin_{(\beta,\delta) \in \mathcal{A},\tau \in \mathcal{T}} 
\mathbb{E} \left[ \rho(Y,X^T\beta+X^T%
\delta1\{Q>\tau\}) \right],
\end{equation*}
where  $\mathcal{A} \times \mathcal{T}$ is the  parameter space for $\alpha_0 \equiv (\beta_0^T,\delta_0^T)^T$ and $\tau_0$. 

For instance, in quantile regression models, for certain known $\gamma\in (0,1)$,
\begin{equation}\label{eq1.2}
Y=X^{T}\beta _{0}+X^{T}\delta _{0}1\{Q>\tau _{0}\}+U, \quad P(U\leq 0|X,Q)=\gamma,
\end{equation}
 $$\rho(t_1,t_2) = (t_1-t_2)(\gamma-1\{t_1-t_2\leq 0\}).
  $$ Here $\rho(\cdot, \cdot)$ is the ``check function" for quantile regressions.   For a sparse vector $v\in\mathbb{R}^p$, we denote the active set of $v$ as $$J(v) \equiv \{j: v_j\neq0\}.$$
  Write $\theta_0 \equiv \beta_0+\delta_0$. Let $\beta_{0J}$ and $\theta_{0J}$ respectively denote the  subvectors of nonzero components of $\beta_0$ and $\theta_0$. Accordingly, let $X_{J(\beta_0)}$ and $X_{J(\theta_0)}$ denote the subvectors of $X$ whose indices are in $J(\beta_0)$ and $J(\theta_0)$. Then model (\ref{eq1.2}) corresponds to the  quantile regression model 
with a  structural-change regarding the identifies and effects of the contributing  regressors: 
$$
Y=\begin{cases}X_{J(\beta_{0})}^{T}\beta _{0J}+U, & Q\leq \tau _{0},\\
X_{J(\theta_0)}^{T}\theta_{0J}+U, & Q>\tau _{0},
\end{cases}
$$ 
 where the identities of $X_{J(\beta_{0})}$, $X_{J(\theta_0)}$,  the change point $\tau_0$, and   regression coefficients $\beta_{0J}$, $\theta_{0J}$ are all unknown.

We consider estimating regression coefficients 
$(\beta_0,\theta_0)$ as well as  the threshold parameter $\tau_0$ and selecting the contributing regressors in each sub-population based on $\ell_1$-penalized M-estimators.   One of the strengths of our proposed procedure  is that it  does not require to know or pretest whether $\delta_0=0$ or not, that is, whether the population's sparsity structure and regression effects are invariant or not. Neither do we need to know whether the  threshold $\tau_0$  is present in the model  in order to establish oracle properties for the prediction risk and
the estimation rates. As a result, the usual  high-dimensional M-estimation  without structural change is nested as a special case.  Technically, we allow the loss function to be possibly non-smooth, with the quantile regression as a leading example, which broadens the scope of applications for penalized M-estimation. Moreover, the objective function is non-convex with respect to the threshold parameter $\tau_0$, which is another technical challenge to handle.

Our paper is closely related to the statistical  literature on models with unknown change points (e.g., \cite{tong1990non, chan1993consistency, hansen2000sample, Pons2003, kosorok2007, Seijo:Sen:11a,Seijo:Sen:11b}).  
However,  the model being considered  is different both conceptually and technically, as it involves two high-dimensional parameters $\beta_0$ and $\delta_0$ with a  change  of sparsity at an unknown threshold value $\tau_0$.  Moreover, recent related works on high-dimensional models  are found in \cite{enikeeva2013high, chan2013group}, \cite{Frick-et-al:14} and \cite{cho2012multiple}, but they do not consider structural changes in the sparsity or possibly non-smooth general loss functions as we do in this paper.   One exception is \cite{lee2012lasso}, who studied a high-dimensional  Gaussian mean regression with a  change point in  a deterministic design.    However, as is clear from \cite{BC11}, unlike the mean regressions, sparse quantile regression analyzes the   effects of active regressors on  different parts of the conditional distribution of a response variable, which provides a  different angle of studying the regression effects. Dealing with non-smooth penalized loss functions with an unknown change point calls for a  different   technique.  We also consider  random designs and  several oracle properties. Here the meaning of ``oracle" is enriched compared with that of the homogeneous sparsity: in our problem,  it is unknown to us whether the structural change is present or if it is present, where it occurs.

As we shall show, with possibly non-smooth and non-quadratic loss functions, the impact of not knowing the threshold value $\tau_0$ leads to an additional term $(\log p)(\log n)$ in the asymptotic bounds, on top of those in the existing literature (e.g., \cite{Bickeletal, BC11}), and thus a slightly slower rate of convergence. But with a relatively stronger signal, using an adaptive double penalization, we can achieve the oracle rate. Furthermore, we establish another oracle property in that the estimation error in estimating $\tau_0$ does not affect the asymptotic distribution of the estimate of $\alpha_0$ and vice versa. 
%They did not develop an estimator that has the oracle property. 

The remainder of the paper is organized as follows. 
  Section \ref{sec:model-estimator}  provides an informal description of  our model and the estimation methodology. Section \ref{sec:Lasso-theory}  establishes conditions under which the proposed estimator is consistent in terms of its excess risk and   the estimated $\widehat \tau$. In addition, we derive the rate of convergence  of 
  the $\ell_1$ estimation error for   $\widehat\alpha$ and  achieve the super-convergence rate for $\widehat{\tau}$ in the presence of sparsity-structural-change. 
The same rate of convergence for the excess risk as well as the $\ell_1$ estimation error for   $\widehat\alpha$ can be achieved even when there is no structural change. 
 Section \ref{sec:oracle-inference}  %we employ an adaptively  weighted $l_1$ penalization, based on the folded-concave penalty, and establish conditions under which
  achieves the variable selection consistency, %for the estimator of $\alpha_0$, 
again regardless of the existence of structural changes on the sparsity.
 We also establish conditions under which 
our estimators of $\alpha_0$ and $\tau_0$ have the oracle property. 
  Section \ref{sec:app} verifies all the  regularity conditions on the loss function for  quantile and logistic regression  models. Finally
Section \ref{sec:MC} gives the results of some simulations. % and Section \ref{sec:conclusion} concludes.
%The paper has supplementary online materials.
 Appendices \ref{sec:proof-lasso}, \ref{sec:proof-oracle-inference}, %\ref{sec:proof-AN}
and \ref{sec:proof-examples} that contain the proofs of all the theoretical results.

\textbf{Notation.}
Throughout the paper, we use $|v|_q$ for the $\ell_q$ norm for a vector $v$ with $q=0,1,2$.  For two sequences $a_n$ and  $b_n$, we write $a_n\ll b_n$ and equivalently $b_n\gg a_n$ if $a_n=o(b_n)$.
Let $\lambda_{\min}(A)$ denote the minimum eigenvalue of a matrix $A$.
We use w.p.a.1 to mean ``with probability approaching one.''
The true parameter vectors $\beta_0$ and $\delta_0$ except $\tau_0$ are
implicitly indexed by the sample size $n$, and we allow that the dimensions of $J(\beta_0)$, $J(\delta_0)$, and $J(\theta_0)$ can go to infinity as $n \rightarrow \infty$.
For simplicity, we omit their dependence on $n$ in our notation.

\section{The Model and Estimators}\label{sec:model-estimator}

In this section, we describe our model and the proposed  estimation methodology. 

\subsection{Model}

Recall that  $\rho:  \mathbb{R}\times\mathbb{R}\rightarrow\mathbb{R}^+$ is a  loss function under consideration, whose analytical form is clear in  specific models, and that  the true parameters are defined as the unique minimizer of the expected loss:
\begin{equation}\label{eq2.1add}
(\beta_0,\delta_0,\tau_0)\equiv\argmin_{(\beta,\delta) \in \mathcal{A},\tau \in \mathcal{T}} 
\mathbb{E} \left[ \rho(Y,X^T\beta+X^T%
\delta1\{Q>\tau\}) \right],
\end{equation}
where $\mathcal{A}$ and $\mathcal{T}$ denote the parameter spaces for $(\beta_0, \delta_0)$ and $\tau_0$.  The equation (\ref{eq2.1add}) is usually satisfied by statistical models with a properly chosen loss function (We shall use quantile and logistic regressions as the main examples). Moreover, 
for each $(\beta,\delta)\in\mathcal{A}$ and $\tau\in\mathcal{T}$,  define $2p\times 1$ vectors:
$$
\alpha \equiv (\beta ^{T},\delta ^{T})^{T}, \quad X(\tau
) \equiv (X^{T},X^{T}1\{Q>\tau \})^{T}.
$$
Let $\alpha_0 \equiv (\beta_0^T,\delta_0^T)^T$. Then  $X^T\beta+X^T%
\delta1\{Q>\tau\}=X(\tau)^T\alpha$, and  thus we can write (\ref{eq2.1add}) more compactly as:
 \begin{equation}\label{eq2.3add}
(\alpha_0,\tau_0)=\argmin_{\alpha \in \mathcal{A},\tau \in \mathcal{T}} 
\mathbb{E} \left[ \rho(Y,X(\tau)^T\alpha) \right].
\end{equation} 
Note that the loss $\rho(Y,X(\tau)^T\alpha) $ is not convex in $\tau$.

\subsection{The Lasso Estimator}

%One of the strengths of our proposed  estimation procedure  is that it  does not require to know or pretest whether the structural change is present, or where it occurs.  That is, the proposed method works no matter $\tau_0$ is identifiable or not.
 
 Suppose we observe i.i.d. samples $\{Y_i, X_i, Q_i\}_{i\leq n}$. Let $X_{i}(\tau )$ and $X_{ij}\left( \tau
\right) $ denote the $i$-th realization of $X(\tau )$ and $j$-th element of $%
X_{i}\left( \tau \right) ,$ respectively, $i=1,\ldots,n$ and $j=1,\ldots,2p$.  Motivated from (\ref{eq2.3add}), 
we estimate the unknown parameters via an $\ell_1$-penalized M-estimation:
\begin{align}\label{eq2.2add}
(\widehat\beta,\widehat\delta, \widehat\tau)\equiv (\widehat{\alpha},\widehat{\tau}) 
&\equiv \text{argmin}_{\alpha \in \mathcal{A},\tau \in \mathcal{T}} S_n(\alpha,\tau),
\end{align}%
where 
\begin{align*}
S_n(\alpha,\tau)&\equiv
\frac{1}{n}%
\sum_{i=1}^{n}\rho (Y_{i},X_{i}(\tau )^{T}\alpha )+\lambda_{n} \sum_{j=1}^{2p} D_{j}(\tau )| \alpha _{j}|\cr
&\equiv \frac{1}{n}\sum_{i=1}^{n}\rho (Y_{i},X_{i}^T\beta+ X_i^T\delta  1\{Q_i>\tau\} )+\lambda_n\sum_{j=1}^pd_j|\beta_j|+\lambda_n\sum_{j=1}^pd_j(\tau)|\delta_j|.
\end{align*}
Here $\lambda _{n}$ is the tuning parameter,  $D_{j}(\tau ) \equiv ( \frac{1}{n%
}\sum_{i=1}^{n}X_{ij}(\tau )^{2})^{1/2}$, $j=1,\ldots,2p$, are the data-dependent weights adequately balancing the regressors, and $X_{ij}(\tau) \equiv (X_{ij}, X_{ij}1\{Q_i>\tau\})$. Note that 
the weight  $d_j\equiv( \frac{1}{n%
}\sum_{i=1}^{n}X_{ij}^{2})^{1/2}$ regarding $|\beta_j|$ does not depend on $\tau$, while the weight $d_j(\tau)\equiv( \frac{1}{n%
}\sum_{i=1}^{n}X_{ij}^{2}1\{Q_i>\tau\})^{1/2}$ with respect to  $|\delta_j|$ does, which takes into account the effect of the threshold $\tau$ on the parameter change $\delta$.

\begin{remark}
It is worth noting that alternatively, one might penalize $\beta$ and $%
\beta+\delta$ instead of $\beta$, $\delta$. We opt to penalize $\delta$  directly since this  formulation makes it convenient to identify the set of regressors whose effects may have structural changes.  Specifically, if a component $\delta_j$ is identified to be nonzero, it implies that there is a structural change on the $j$th regressor; otherwise there is no change on its regressor. 
As a result, our formulation includes the usual sparse modeling without sparsity-structural-change as a special case, by allowing $\delta_0=0.$ 
Note  that when $\delta_0=0$,  
$$
 X^T\beta_0+X^T\delta_01\{Q>\tau_0\}=X^T\beta_0,
$$
hence  $\tau_0$ is non-identifiable.  Since in practice, we do not know \emph{ex ante} whether
$\delta_0=0$ holds,  we employ the same $\ell_1$-penalized M-estimation as in (\ref{eq2.2add}), and (\ref{2nd-step-SCAD}) below, which still penalizes both $\beta$ and $\delta$.  
We shall show that  in this case  $\beta_0$ and $\delta_0$ can still be consistently estimated, and their    zero components can be identified. 
%In particular, we have that $\widehat \delta=0$  with a high probability.
\end{remark}

\subsection{The Estimator with the Oracle Property}
When the signal strength is relatively strong, we can achieve the selection consistency. After the Lasso-step in (\ref{eq2.2add}),  we employ an adaptively  weighted $\ell_1$-penalization, based on a local linear approximation (LLA) to the folded-concave penalty.
Consider  an  objective function:
\begin{align}\label{2nd-step-SCAD}
\widetilde S_n(\alpha)&\equiv\frac{1}{n}\sum_{i=1}^n\rho(Y_i,X_i(\widehat\tau)^T%
\alpha)+\mu_n\sum_{j=1}^{2p}w_jD_j(\widehat\tau)|\alpha_j|,
%&=\frac{1}{n}\sum_{i=1}^{n}\rho (Y_{i},X_{i}^T\beta+ X_i^T\delta  1\{Q_i>\widehat\tau\} )+\lambda_n\sum_{j=1}^pw_jd_j|\beta_j|+\lambda_n\sum_{j=1}^pw_jd_j(\widehat\tau)|\delta_j|,
\end{align}
where $\widehat\tau$ is the first-step estimator obtained from (\ref{eq2.2add}). The weights $\{w_j\}$ are determined through an LLA algorithm (\cite{zouli}) of the SCAD penalties.  In usual sparse models without structural changes,  algorithms of this type have been shown to achieve a fixed point that pertains the oracle properties (see \cite{fan2014strong}).  In our context,  we show that estimating $\tau_0$ does not affect the oracle properties, no matter whether $\tau_0$ is identifiable or not.

For some tuning parameter $\mu_n$ and for $j=1,...,p$, let $w_j$
be the LLA of the SCAD-weight, namely, 
\begin{equation*}
w_j\equiv%
\begin{cases}
1, & |\widehat\alpha_j|<\mu_n \cr 
0, & |\widehat\alpha_j|>a\mu_n\cr 
\frac{a\mu_n-|\widehat\alpha_j|}{\mu_n(a-1)} & \mu_n\leq|\widehat\alpha_j|\leq a\mu_n.
\end{cases}%
\end{equation*}
where $\widehat\alpha_j$ is the first-step estimator obtained from (\ref{eq2.2add}).
Here $a>1$ is some prescribed constant, and $a=3.7$ is often used in the
literature (e.g., \cite{Fan01} and \cite{loh2013regularized}).

We now define %the \textit{estimator-of-oracle-property} 
$\widetilde\alpha $
to be the global minimizer of $\widetilde S_n(\alpha)$ on the parameter
space $\mathcal{A}$ of $\alpha$: 
\begin{align}\label{oracle-est-alpha}
\widetilde\alpha\equiv\arg\min_{\alpha \in \mathcal{A}}\widetilde S_n(\alpha).
\end{align}
The objective function is now convex, which facilities the computations.
Once the asymptotically oracle estimator $\widetilde\alpha $   is obtained, we can 
improve upon the first
Lasso estimator $\widehat{\tau}$. Define  $\widetilde\tau$ to be 
\begin{align}\label{oracle-est-tau}
\widetilde{\tau}\equiv\operatorname*{argmin}_{\tau \in \mathcal{T}}\frac{1}{n}\sum_{i=1}^{n}\rho\left(
Y_{i},X_{i}\left(  \tau\right)  ^{T}\widetilde{\alpha}\right). 
\end{align}
We will establish that the estimators $\widetilde\alpha$ and $\widetilde{\tau}$ have the oracle properties. In the literature (see, e.g, \cite{Fan01}, \cite{fan2014strong}), an 
estimator is said to have the oracle property if it has the same asymptotic distribution as the infeasible oracle estimator. In our setup, an oracle knows $J(\beta_0)$,  $J(\delta_0)$, $\tau_0$ (if $\delta_0 \neq 0$), as well as whether $\delta_0 \neq0$ or not. However, none of them are known to us. Hence,  the meaning of the oracle property is enriched here compared with that of the homogeneous sparsity. 

\subsection{The Computation Algorithm}

%Our method  does not require to know whether or where the structural change is present,  so computationally we conduct a one-dimensional grid search. 

Numerically, for each fixed $\tau \in \mathcal{T}$, minimizing $S_n(\alpha,\tau)$ 
over $\alpha \in \mathcal{A}$ is a standard Lasso problem, and many efficient algorithms are available for various loss functions  in the literature. Let $\widehat\alpha(\tau)=\text{argmin}_{\alpha \in \mathcal{A}} S_n(\alpha,\tau)$. Since $S_n(\widehat\alpha(\tau),\tau)$ takes on less than $n$ distinct values, $\widehat\tau$ can be defined uniquely as
$$
\widehat\tau=\arg\min_{\tau\in\widetilde{\mathcal{T}}_n} S_n(\widehat\alpha(\tau),\tau),
$$
where $\widetilde{\mathcal{T}}_n\equiv\mathcal{T}\cap\{Q_1,...,Q_n\}$. Hence the computation algorithm can be summarized as follows:
\begin{itemize}
\item[] \textbf{Step 1} For each $k=1,...,n$, set $\tau_k=Q_k$.
For each $k=1,...,n$ such that $\tau_k \in \widetilde{\mathcal{T}}_n$, 
solve the Lasso problem:
\begin{align*}
\widehat\alpha(\tau_k)=\text{argmin}_{\alpha \in \mathcal{A}} S_n(\alpha, \tau_k).
\end{align*}

\item[] \textbf{Step 2}  Set
\begin{align*}
k^*=\text{argmin}_{k=1,...,n: \tau_k \in \widetilde{\mathcal{T}}_n} S_n(\widehat\alpha(\tau_k), \tau_k),\quad \widehat\alpha=\widehat\alpha(\tau_{k^*}),\quad \widehat\tau=\tau_{k^*}
\end{align*}

\item[] \textbf{Step 3} Solve the LLA algorithm with $\widehat\alpha$ and $\widehat\tau$ obtained in step 2 to obtain $\widetilde\alpha$.

\item[] \textbf{Step 4} Obtain $\widetilde\tau$ with $\widetilde\alpha$ obtained in step 3:
\begin{align*}
\widetilde{\tau}=\operatorname*{argmin}_{\tau \in \widetilde{\mathcal{T}}}\frac{1}{n}\sum_{i=1}^{n}\rho\left(
Y_{i},X_{i}\left(  \tau\right)  ^{T}\widetilde{\alpha}\right). 
\end{align*}

\end{itemize}
In particular, steps 2 and 4 require only at most $n$ function evaluations. If $n$ is very large, $\widetilde{\mathcal{T}}_n$ can be approximated by a grid.  For some $N<n$, let $Q_{(j)}$ denote the $(j/N)$th quantile of the sample $\{Q_1,...,Q_n\}$, and let $\mathcal{T}_N=\mathcal{T}\cap\{Q_{(1)},..., Q_{(N)}\}$. Then in step 2, $\widehat\tau_N=\arg\min_{\tau\in\mathcal{T}_N} S_n(\widehat\alpha(\tau),\tau)$ is a good approximation to $\widehat\tau$
and the same applies to step 4.

\section{Theoretical Properties of the Lasso Estimator}\label{sec:Lasso-theory}

\subsection{Assumptions}

In this subsection, we collect regularity conditions that are needed to develop our theoretical results.  
Let $X_{ij}$ denote the $j$th element of $X_i$ and  $\mathcal{T}_0 \subset \mathcal{T}$  a neighborhood of $\tau_0$.

\begin{assum}[Setting]
\label{a:setting}
\begin{enumerate}[label=(\roman*)]
\item\label{a:setting:itm1}
 The data $\{(Y_{i},X_{i},Q_{i})\}_{i=1}^{n}$ are independent
and identically distributed with $\mathbb{E}\left\vert X_{ij}\right\vert
^{m}\leq \frac{m!}{2}K_{1}^{m-2}$ for all $j$ and some $K_{1}<\infty $.
\item\label{a:setting:itm2}  
$\alpha \in \mathcal{A} \equiv \left\{ \alpha :\left\vert \alpha \right\vert
_{\infty }\leq M_1 \right\} $ for some $M_1 <\infty $, and $\tau \in \mathcal{T} \equiv
\left[ \underline{\tau },\overline{\tau }\right] $, where the probability of 
$\left\{ Q < \underline{\tau }\right\} $ and that of $\left\{ Q > \overline{\tau 
}\right\} $ are strictly positive. 
\item\label{a:setting:itm3} 
 There exist universal constants $\underline{D}>0$ and $\overline{D}>0$ such
that with probability approaching one,
\begin{equation*}
0 < \underline{D}\leq \min_{j\leq 2p}\inf_{\tau \in \mathcal{T}}D_{j}(\tau )\leq
\max_{j\leq 2p}\sup_{\tau \in \mathcal{T}}D_{j}(\tau )\leq \overline{D} < \infty.
\end{equation*}%
\item\label{a:setting:itm4}  
There exists $\mathcal{T}_0$ such that $\sup_{j \leq p}\sup_{\tau \in \mathcal{T}_0 } \mathbb{E}[ X_{j}^2 | Q = \tau ] < \infty$.
\end{enumerate}
\end{assum}

Condition \ref{a:setting:itm1}   imposes mild moment restrictions on $X$. 
The compact parameter space in condition \ref{a:setting:itm2} is standard   in the literature on change-point and threshold models (e.g., \cite{Seijo:Sen:11a,Seijo:Sen:11b}). 
Condition \ref{a:setting:itm3} requires that each regressor be of the same magnitude uniformly
over the threshold $\tau$. As the data-dependent weights $D_j(\tau)$ are the
sample second moments of the regressors, it is not stringent to assume them to
 be bounded away from both zero and infinity, given the well-behaved population counterparts. 
Condition \ref{a:setting:itm4}  assumes that the conditional expectation of $\mathbb{E}[ X_{j}^2 | Q = \cdot ]$ is bounded on $\mathcal{T}_0$ uniformly in $j$.

We paraphrase Assumption \ref{a:dist-Q}, which restricts the distribution of $Q$, before we state it. Condition \ref{a:dist-Q:itm1}   imposes a weak restriction on the distribution of $Q$,
condition  \ref{a:dist-Q:itm2} implies that $\mathbb{P} \left\{ \left\vert Q-\tau
_{0}\right\vert <\varepsilon \right\} >0$ for any $\varepsilon >0$,
and condition  \ref{a:dist-Q:itm3} requires that the conditional distribution of $Q$ given $X$ satisfy some weak restrictions. 

\begin{assum}[Distribution of $Q$]
\label{a:dist-Q}
\begin{enumerate}[label=(\roman*)]
\item\label{a:dist-Q:itm1} 
$\mathbb{P}(\tau_1 < Q \leq \tau_2) \leq K_2 (\tau_2 - \tau_1)$
for any $\tau_1 < \tau_2$ and some positive  $K_2 < \infty$.
\item\label{a:dist-Q:itm2} 
$Q$ has a density function that is continuous and bounded away from
zero on $\mathcal{T}_0.$ 
\item\label{a:dist-Q:itm3}  
The conditional distribution of $Q$ given $\tilde{X}$ has a density function $f_{Q|\tilde{X}}(q|\tilde{x})$ that is   bounded uniformly for $q\in\mathcal{T}_0$ and $\tilde{x}$, where 
$\tilde{X}$ denotes the all the components of $X$ excluding $Q$ in case that $Q$ is an element of $X$. 
\end{enumerate}
\end{assum}

We now state assumptions with respect to the objective function.
Recall that $\theta_0 \equiv \beta_0+\delta_0$, and let $\beta, \delta$, and $\theta \equiv \beta+\delta$ denote the corresponding generic parameters. Also, recall that when $Q\leq \tau_0$, $X(\tau_0)^T\alpha_0=X^T\beta_0$, while
when $Q>\tau_0$, $X(\tau_0)^T\alpha_0=X^T\theta_0$. Hence we define the
``prediction balls" with radius $r$ and corresponding centers as follows: 
\begin{align}  \label{eq2.3}
\begin{split}
\mathcal{B}(\beta_0, r) &=\{\beta\in \mathcal{B} \subset \mathbb{R}^p:
\mathbb{E}[(X^T(\beta-\beta_0))^2 1\{Q\leq\tau_0\}]\leq r^2\}, \\
\mathcal{G}(\theta_0, r)&=\{\theta\in  \mathcal{G} \subset \mathbb{R}^p:
\mathbb{E}[(X^T(\theta-\theta_0))^2 1\{Q>\tau_0\}]\leq r^2\},
\end{split}
\end{align}
where $\mathcal{B}$ and $\mathcal{G}$ are parameter spaces for $\beta_0$
and $\theta_0$, respectively, which can be induced from $\mathcal{A}$. 
For a constant $\eta > 0$, define
\begin{align*}
r_1(\eta) \equiv &\sup_r \Big\{ r: \mathbb{E} \left( \left[  \rho \left( Y,X^{T}\beta \right) -\rho
\left( Y,X^{T}\beta_0 \right) \right] 1\left\{ Q\leq \tau
_{0}\right\} \right) \\
& \;\;\;\;\;\;\;\;\;\;\;
\geq \eta \mathbb{E}[(X^T(\beta-\beta_0))^2 1\{Q\leq\tau_0\}]  
\textrm{ for all $\beta \in \mathcal{B}(\beta_0, r)$}
\Big \}
\end{align*}
and
\begin{align*}
r_2(\eta) \equiv &\sup_r \Big\{ r: \mathbb{E} \left( \left[  \rho \left( Y,X^{T}\theta \right) -\rho
\left( Y,X^{T}\theta_0 \right) \right] 1\left\{ Q >  \tau
_{0}\right\} \right) \\
& \;\;\;\;\;\;\;\;\;\;\;
\geq \eta \mathbb{E}[(X^T(\theta-\theta_0))^2 1\{Q > \tau_0\}]  
\textrm{ for all $\theta \in \mathcal{G}(\theta_0, r)$}
\Big \}.
\end{align*}
Note that
$r_1(\eta)$ and $r_2(\eta)$ 
are the maximal radiuses over which  the excess risk 
can be bounded below by the quadratic loss on $\{Q \leq \tau_0\}$ and  
$\{Q > \tau_0\}$, respectively.

\begin{assum}[Objective Function]
\label{a:obj-ftn}
\begin{enumerate}[label=(\roman*)]
\item\label{a:obj-ftn:itm1}
Let $\mathcal{Y}$ denote the support of $Y$.  There is a Liptschitz constant $L>0$ such that for all $y \in \mathcal{Y}$, $\rho (y,\cdot)$ is convex,  and  \begin{equation*}
|\rho (y,t_{1})-\rho (y,t_{2})|\leq L|t_{1}-t_{2}|, \forall t_1, t_2\in\mathbb{R}.
\end{equation*}
\item\label{a:obj-ftn:itm2} 
For all $\alpha \in \mathcal{A}$, almost surely,
\begin{equation*}
\mathbb{E} \left[ \rho (Y,X(\tau _{0})^{T}\alpha )-\rho (Y,X(\tau _{0})^{T}\alpha
_{0})|Q \right]\geq 0,
\end{equation*}%
 \item\label{a:obj-ftn:itm3} 
 There exist constants $\eta^\ast > 0$ and $r^\ast > 0$ such that 
$r_1(\eta^\ast) \geq r^\ast$ and
$r_2(\eta^\ast) \geq r^\ast$.
 \item\label{a:obj-ftn:itm4} 
There is a constant $c_0 > 0$ such that for all $\tau \in \mathcal{T}_{0}$,
\begin{align*}
\mathbb{E}\left[ \left( \rho \left( Y,X^{T}\theta _{0}\right) -\rho \left( Y,X^{T}\beta
_{0}\right) \right) 1\left\{ \tau <Q\leq \tau _{0}\right\} \right] & \geq
c_0 \mathbb{E}\left[  (X^{T}\left( \beta _{0}-\theta _{0})\right) ^{2}1\left\{ \tau <Q\leq \tau
_{0}\right\} \right], \\
\mathbb{E}\left[ \left( \rho \left( Y,X^{T}\beta _{0}\right) -\rho \left( Y,X^{T}\theta
_{0}\right) \right) 1\left\{ \tau _{0}<Q\leq \tau \right\} \right] & \geq
c_0 \mathbb{E}\left[ (X^{T}\left( \beta _{0}-\theta _{0})\right) ^{2}1\left\{ \tau
_{0}<Q\leq \tau \right\} \right].
\end{align*}%

\end{enumerate}
\end{assum}

In this paper, we  focus on a convex Lipchitz  loss function, which is assumed in condition \ref{a:obj-ftn:itm1}.   It is possible to  relax the convexity and impose a ``restricted strong convexity  condition" as in \cite{loh2013regularized}. For simplicity, we focus on the case of a convex loss, which is satisfied by our leading examples. However,  unlike the framework of M-estimation in \cite{negahban2012} and \cite{loh2013regularized}, we do allow $\rho(t_1, t_2)$ to be non-differentiable, which admits quantile regression as a special case.

Condition \ref{a:obj-ftn:itm2} is a weak condition given that 
\begin{equation*}
\mathbb{E} \left[ \rho (Y,X(\tau)^{T}\alpha )-\rho (Y,X(\tau _{0})^{T}\alpha _{0}) \right] \geq 0,
\end{equation*}%
for any $\alpha \in \mathcal{A}$ and $\tau \in \mathcal{T}$.
Condition \ref{a:obj-ftn:itm3} requires that the excess risk can be bounded below by 
a quadratic function locally.
Condition \ref{a:obj-ftn:itm4}   is in the same spirit as Condition \ref{a:obj-ftn:itm3}. Conditions \ref{a:obj-ftn:itm3} and \ref{a:obj-ftn:itm4}, combined with the convexity of $\rho(Y,\cdot)$,
helps us derive the rates  of convergence (in the $\ell_1$ norm) of the Lasso estimators of $(\alpha_0,\tau_0)$.
We shall provide
primitive sufficient conditions for Assumption \ref{a:obj-ftn} for the quantile and logistic regression models  in Section \ref{sec:app}.

\begin{remark}
Condition \ref{a:obj-ftn:itm3}  of Assumption \ref{a:obj-ftn} is  similar to  \textit{the restricted nonlinear impact (RNI)} condition of  \cite{BC11}.
One may consider an alternative formulation as in  \cite{geer} and \cite{bulmann} (Chapter 6), which is known as the \textit{margin condition}. But their margin condition needs to be adjusted to account for structural changes as in Condition \ref{a:obj-ftn:itm4}. It would be an interesting future
research topic to develop a general theory of high-dimensional M-estimation with an unknown sparsity-structural-change with  general margin conditions.
\end{remark}

The following assumptions are needed to deal with the case when $\delta_0 \neq 0$.

\begin{assum}[Structural Change]
\label{a:threshold}
Suppose that $\delta_0 \neq 0$.
\begin{enumerate}[label=(\roman*)]
\item\label{a:threshold:itm1}
$\mathbb{E}\left[ \left( X^{T}\delta _{0}\right) ^{2}|Q=\tau \right]
\leq  M_2 |\delta_0|_2^2$ for all $\tau \in \mathcal{T}$ and for some $M_2$ satisfying $0 < M_2 <\infty $.
\item\label{a:threshold:itm2}  
For the same $c_0$  in Assumption \ref{a:obj-ftn} \ref{a:obj-ftn:itm4}, we have that $\mathbb{E}[(X^T \delta_0)^2 |Q = \tau] \geq c_0$
 for all $\tau \in \mathcal{T}_{0}$.
\item\label{a:threshold:itm3}
There exists $M_3 > 0$ such that either $M_3^{-1} \leq \mathbb{E}[(X^T\delta_0)^2|Q=\tau] \leq    M_3 $ or 
$M_3^{-1} |\delta_0|_2^2 \leq \mathbb{E}[(X^T\delta_0)^2|Q=\tau] \leq    M_3 |\delta_0|_2^2$ holds for all $\tau \in \mathcal{T}_0$.
\end{enumerate}
\end{assum}

Assumption \ref{a:threshold} is concerned with $\mathbb{E}[ \left( X^{T}\delta _{0}\right) ^{2}|Q=\tau ]$, which is an important quantity to develop asymptotic results
when $\delta_0 \neq 0$.
Condition \ref{a:threshold:itm1} puts some weak upper bound on $\mathbb{E}[ \left( X^{T}\delta _{0}\right) ^{2}|Q=\tau ]$ for all $\tau$ globally. 
Conditions \ref{a:threshold:itm2} and \ref{a:threshold:itm3} are local conditions with respect to $\tau$. 
Condition \ref{a:threshold:itm2} is satisfied, for example, when $Q$ has a
density function that is bounded away from zero in a neighborhood of $\tau
_{0}$ and $q \mapsto \mathbb{E}[(X^T \delta_0)^2 |Q = q]$ is uniformly continuous and  strictly positive at $q = \tau_0$.  
Recall that the dimension of nonzero elements of $\delta_0$ can grow with $n$. 
Condition \ref{a:threshold:itm3} requires that the growth rate of $\mathbb{E}[(X^T\delta_0)^2|Q=\tau]$ be uniform in $\tau$, thereby implying that 
$\sup_{\tau \in \mathcal{T}_0} \mathbb{E}[(X^T\delta_0)^2|Q=\tau] 
\leq C \inf_{\tau \in \mathcal{T}_0} \mathbb{E}[(X^T\delta_0)^2|Q=\tau]$ 
for some constant $C < \infty$.

\begin{remark}\label{remark-tau-whole-set}
Assumptions \ref{a:obj-ftn} \ref{a:obj-ftn:itm4} and \ref{a:threshold} \ref{a:threshold:itm2} together  imply that 
for all $\tau \in \mathcal{T}_{0}$,
\begin{align}\label{iden-tau0}
\begin{split}
\Delta_1(\tau) \equiv \mathbb{E}\left[ \left( \rho \left( Y,X^{T}\theta _{0}\right) -\rho \left( Y,X^{T}\beta
_{0}\right) \right) 1\left\{ \tau <Q\leq \tau _{0}\right\} \right] & \geq
c_0^2 \mathbb{P}\left[ \tau <Q\leq \tau
_{0} \right], \\
\Delta_2(\tau) \equiv \mathbb{E}\left[ \left( \rho \left( Y,X^{T}\beta _{0}\right) -\rho \left( Y,X^{T}\theta
_{0}\right) \right) 1\left\{ \tau _{0}<Q\leq \tau \right\} \right] & \geq
c_0^2 \mathbb{P}\left[  \tau
_{0}<Q\leq \tau  \right].
\end{split}
\end{align}%
Note that Assumption \ref{a:obj-ftn} \ref{a:obj-ftn:itm2} implies that
$\Delta_1(\tau)$
is monotonely non-increasing when $\tau<\tau_{0}$, and 
$\Delta_2(\tau)$ is monotonely non-decreasing
when $\tau>\tau_{0}$, respectively. Therefore, Assumptions \ref{a:obj-ftn} \ref{a:obj-ftn:itm2}, \ref{a:obj-ftn} \ref{a:obj-ftn:itm4} and \ref{a:threshold} \ref{a:threshold:itm2} all together imply that \eqref{iden-tau0} holds  for all $\tau$ in the $\mathcal{T}$, not just
in the $\mathcal{T}_{0}$ since $\mathcal{T}$ is compact.  Equation \eqref{iden-tau0} 
plays an important role in  achieving a super-efficient convergence rate for $\tau_0$.
\end{remark}

The following additional assumptions are useful to derive asymptotic results when 
$\delta_0 \neq 0$.

\begin{assum}[Moment bounds]
\label{a:moment}
\begin{enumerate}[label=(\roman*)]
\item\label{a:moment:itm1}
There exist $0<C_{1}\leq C_{2}<1$ such that for
all $\beta\in\mathbb{R}^p$ satisfying $\mathbb{E} |X^T\beta|\neq0$, 
\begin{equation*}
C_1 \leq \frac{\mathbb{E}[ |X^T\beta|1\{Q>\tau_0\}]}{\mathbb{E}|X^T\beta|} \leq C_2.
\end{equation*}
\item\label{a:moment:itm2}
There exist constants $M > 0$ and $r > 0$ and the neighborhood $\mathcal{T}_0$ of $\tau_0$  such that 
\begin{align*}  
\mathbb{E} \left[ (X^T[(\theta-\beta)-(\theta_0-\beta_0)])^2 \big|Q  = \tau \right] &\leq M, \\
\mathbb{E}[|X^T(\beta-\beta_0)|\big{|}Q=\tau] &\leq M,   \\
\mathbb{E}[|X^T(\theta-\theta_0)|\big{|}Q=\tau] &\leq M, \\
\sup_{\tau\in\mathcal{T}_0: \tau > \tau_0} \mathbb{E} \left[ |X^T(\beta-\beta_0)| \frac{ 1\{\tau _0 < Q \leq \tau \}}{ (\tau-\tau_0) } \right]
& \leq M  \mathbb{E}[|X^T(\beta-\beta_0)|1\{Q \leq \tau_0\}], \\ 
\sup_{\tau\in\mathcal{T}_0: \tau < \tau_0}  \mathbb{E}\left[ |X^T(\theta-\theta_0)| \frac{1\{\tau < Q \leq \tau_0 \} }{ (\tau_0-\tau) } \right]
& \leq M \mathbb{E}[|X^T(\theta-\theta_0)|1\{Q>\tau_0\}],
\end{align*}
uniformly in  $\beta\in\mathcal{B}(\beta_0, r)$, $\theta\in\mathcal{G}(\theta_0, r)$ and $\tau\in\mathcal{T}_0$.
\end{enumerate}
\end{assum}

\begin{remark}
Condition \ref{a:moment:itm1}
requires that $Q$ have non-negligible support on both sides of $%
\tau_0$. Note that it is equivalent to 
\begin{align}\label{eq2.4add}
\begin{split}
\left(\frac{1}{C_2}-1\right) \mathbb{E}[|X^T\beta|1\left\{ Q>\tau_0\right\}] &\leq
\mathbb{E}\left[ |X^T\beta|1\left\{ Q \leq \tau_0\right\} \right]  \\
&\leq \left(\frac{1}{C_1}-1\right)
\mathbb{E}|X^T\beta|1\left\{ Q>\tau_{0}\right\}.
\end{split}
\end{align}
Hence this assumption prevents the conditional expectation  of $%
X^T\beta $ given $Q$ from changing too dramatically across regimes.
Condition \ref{a:moment:itm2}  requires the boundedness and certain smoothness
of the conditional expectation functions $\mathbb{E} [ (X^T[(\theta-\beta)-(\theta_0-\beta_0)])^2 \big|Q  = \tau ]$,
$\mathbb{E}[|X^T(\beta-\beta_0)|\big{|}Q=\tau]$,
and $\mathbb{E}[|X^T(\theta-\theta_0)|\big{|}Q=\tau]$, 
 and  prohibits degeneracy in one regime.
The last two inequalities in condition \ref{a:moment:itm2} are satisfied if 
\[
\frac{\mathbb{E}\left[\left|X^{T}\beta\right||Q=\tau\right]}{\mathbb{E}\left[\left|X^{T}\beta\right|\right]}\leq M
\]
 for all $\tau\in\mathcal{T}_{0}$ and for all $\beta$ satisfying  $0<\mathbb{E}\left|X^{T}\beta\right|\leq c$
for some small $c>0$. In this view, we may regard \ref{a:moment:itm2} 
as a local version of \ref{a:moment:itm1}.
\end{remark}

\subsection{Risk consistency}\label{theory:risk}

Given the loss function $\rho(t_1, t_2)$,  define the \textit{excess risk} to be 
\begin{align}\label{def:excess-risk}
R(\alpha,\tau)\equiv\mathbb{E} \rho(Y, X(\tau)^T\alpha)-
\mathbb{E} \rho(Y,X(\tau_0)^T\alpha_0).
\end{align}
By the definition of $(\alpha_0,\tau_0)$ in \eqref{eq2.3add}, we have that
$R(\alpha,\tau) \geq 0$ for any $\alpha \in \mathcal{A}$ and $\tau \in \mathcal{T}$.
The risk consistency is concerned about the convergence of $%
R(\widehat{\alpha},\widehat{\tau})$.

For a given sparse vector $v$ with $v_j$ indicating the $j$-th element of $j$, recall that its index set of nonzero components is defined by 
$J(v) \equiv \{j: v_j\neq0\}$.
Recall that the sparse coefficients in the two sub-populations $\{Q\leq\tau_0\}$ and $\{Q>\tau_0\}$ are respectively $\beta_0, \theta_0$, whose nonzero index sets are $J(\beta_0)$ and $J(\theta_0)$. Note that $J(\beta_0)$ and $J(\theta_0)$ can be different, admitting structural changes in the sparsity. Moreover, recall that the penalized M-estimation directly estimates $\alpha_0=(\beta_0^T, \delta_0^T)^T$, with $\delta_0=\theta_0-\beta_0$. Hence the index set of all the nonzero components in the parameters is given by $J(\alpha_0)$.

 In what follows, we denote $s=|J(\alpha_0)|_0$, as the cardinality of $J(\alpha_0)$. 
 We allow that $s \rightarrow \infty$  as $n \rightarrow \infty$ and will give precise regularity conditions regarding its growth rates.
 The following result provides the 
consistency of the Lasso estimator in terms of the excess risk.

%Let $J(\beta_0)$ and $J(\delta_0)$ denote thenonzero indices of $\beta_0$ and $\delta_0$. Here $J(\beta_0)$ and $J(\delta_0)$ can be different, which allows the sets of relevant regressors to be possibly different. Note that $\delta_0$ are the $p+1$ to $2p$ components of $\alpha_0$, so the nonzero indices of $\alpha_0$ belong to 
%\begin{equation*}
%J=J(\beta_0)\cup\{p+j: j\in J(\delta_0)\}.
%\end{equation*}
%For any vector $\alpha\in\mathbb{R}^{2p}$, let $\alpha_J$ and $\alpha_{J^c}$ denote the subvectors whose indies are in $J$ and $J^{c}$ respectively.   On one hand, if $Q \leq \tau_0$, the set of regressors with nonzero coefficients is $\{j: \beta_{0j}\neq0\}$. On the other hand, if  $Q > \tau_0$, the corresponding set  is $\{j: \beta_{0j}+\delta_{0j} \neq 0\}$.  

\begin{thm}[Risk consistency]
\label{l2.1}
Let 
\begin{align}\label{omega_rate}
\omega_n\equiv(\log p)(\log n)\sqrt{\frac{\log p}{n}}. 
\end{align}
Let Assumptions \ref{a:setting} \ref{a:setting:itm1}-\ref{a:setting:itm3},
\ref{a:dist-Q} \ref{a:dist-Q:itm1}, 
\ref{a:obj-ftn} \ref{a:obj-ftn:itm1},
and \ref{a:threshold} \ref{a:threshold:itm1} hold.
Then, there exists some constant $C>0$ such that for $\lambda_n=  C\omega_n$, 
\begin{equation*}
R(\widehat{\alpha },\widehat{\tau})=O_P\left( \omega _{n}s\right) .
\end{equation*}
\end{thm}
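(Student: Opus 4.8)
The plan is to bound the excess risk $R(\widehat\alpha,\widehat\tau)$ via a standard penalized-M-estimation argument, but with uniformity over $\tau\in\mathcal T$ built in from the start, since $\widehat\tau$ is data-dependent and the loss is non-convex in $\tau$. First I would use the basic inequality coming from the definition of $(\widehat\alpha,\widehat\tau)$ as the minimizer of $S_n(\alpha,\tau)$: comparing $S_n(\widehat\alpha,\widehat\tau)$ with $S_n(\alpha_0,\tau_0)$ (or with $S_n(\alpha_0,\widehat\tau)$, which is more convenient since the empirical loss part is then evaluated at the same $\tau$), this yields
\begin{align*}
\frac1n\sum_{i}\big[\rho(Y_i,X_i(\widehat\tau)^T\widehat\alpha)-\rho(Y_i,X_i(\widehat\tau)^T\alpha_0)\big]
\leq \lambda_n\sum_{j}D_j(\widehat\tau)\big(|\alpha_{0j}|-|\widehat\alpha_j|\big).
\end{align*}
The right-hand side is controlled by $\lambda_n\overline D\,|\widehat\alpha_{J}-\alpha_{0J}|_1$ using Assumption \ref{a:setting} \ref{a:setting:itm3} and the fact that only the $s$ coordinates in $J(\alpha_0)$ contribute a positive term (the off-support coordinates only help). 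To turn this into a bound on the population excess risk I would add and subtract the population loss and pass to the empirical-process term
\[
\nu_n(\alpha,\tau)\equiv \frac1n\sum_i\big[\rho(Y_i,X_i(\tau)^T\alpha)-\rho(Y_i,X_i(\tau)^T\alpha_0)\big]-\mathbb E\big[\rho(Y,X(\tau)^T\alpha)-\rho(Y,X(\tau)^T\alpha_0)\big].
\]

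The key step is a uniform (over $\alpha\in\mathcal A$, $\tau\in\mathcal T$) bound on $|\nu_n(\alpha,\tau)|$ in terms of $|\alpha-\alpha_0|_1$. Using the Lipschitz property of $\rho$ in its second argument (Assumption \ref{a:obj-ftn} \ref{a:obj-ftn:itm1}), the increments $\rho(Y_i,X_i(\tau)^T\alpha)-\rho(Y_i,X_i(\tau)^T\alpha_0)$ are controlled by $L|X_i(\tau)^T(\alpha-\alpha_0)|\le L|X_i(\tau)|_\infty|\alpha-\alpha_0|_1$, so by a symmetrization/contraction argument the process reduces to suprema of the linear functionals $\frac1n\sum_i \varepsilon_i X_{ij}(\tau)$ over $j\le 2p$ and $\tau\in\mathcal T$. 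Here the indicator $1\{Q_i>\tau\}$ induces a VC-type class in $\tau$, and combined with the moment bound on $X_{ij}$ (Assumption \ref{a:setting} \ref{a:setting:itm1}, a Bernstein-type tail) one gets, via a maximal inequality for the supremum over a VC class of functions with sub-exponential envelopes, a bound of order $(\log p)(\log n)\sqrt{(\log p)/n}=\omega_n$ w.p.a.1 — the extra $(\log n)$ factor is precisely the price of the uniformity in $\tau$ (bracketing/chaining over the indicator class), as flagged in the introduction. This gives $|\nu_n(\widehat\alpha,\widehat\tau)|\le \omega_n|\widehat\alpha-\alpha_0|_1$ w.p.a.1.

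Combining, with $\lambda_n=C\omega_n$ and $C$ large, I obtain
\[
\mathbb E\big[\rho(Y,X(\widehat\tau)^T\widehat\alpha)-\rho(Y,X(\widehat\tau)^T\alpha_0)\big]\le C'\omega_n\,|\widehat\alpha-\alpha_0|_1,
\]
and then I must relate the left-hand side to $R(\widehat\alpha,\widehat\tau)$. Writing $\mathbb E[\rho(Y,X(\widehat\tau)^T\widehat\alpha)-\rho(Y,X(\tau_0)^T\alpha_0)]=\mathbb E[\rho(Y,X(\widehat\tau)^T\widehat\alpha)-\rho(Y,X(\widehat\tau)^T\alpha_0)]+\mathbb E[\rho(Y,X(\widehat\tau)^T\alpha_0)-\rho(Y,X(\tau_0)^T\alpha_0)]$, the second term equals $\Delta_1(\widehat\tau)$ or $\Delta_2(\widehat\tau)$ (depending on the sign of $\widehat\tau-\tau_0$), which by Assumption \ref{a:obj-ftn} \ref{a:obj-ftn:itm2} is nonnegative, hence $R(\widehat\alpha,\widehat\tau)\le C'\omega_n|\widehat\alpha-\alpha_0|_1+\text{(nonneg. term already absorbed)}$ — more carefully, $R(\widehat\alpha,\widehat\tau)\ge 0$ lets me move things around so that it suffices to bound $\omega_n|\widehat\alpha-\alpha_0|_1$. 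Finally, since $\alpha\in\mathcal A$ forces $|\widehat\alpha-\alpha_0|_\infty\le 2M_1$, and only $s$ coordinates of $\alpha_0$ are nonzero, one can split $|\widehat\alpha-\alpha_0|_1$ over $J(\alpha_0)$ and its complement; feeding the basic inequality back in shows the off-support $\ell_1$-mass is dominated by the on-support mass plus the risk, and the on-support mass is $O(s)$ by boundedness. This closes the bound at $R(\widehat\alpha,\widehat\tau)=O_P(\omega_n s)$.

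The main obstacle I anticipate is the uniform empirical-process bound over $\tau\in\mathcal T$: one must show that adjoining the indicator class $\{1\{Q>\tau\}:\tau\in\mathcal T\}$ to the $2p$ linear functionals only costs a $\log n$ factor (not a $\sqrt n$ or a $p$-dependent factor), which requires a careful maximal inequality — e.g., a peeling/chaining argument over $\mathcal T$ combined with a union bound over $j\le 2p$ and a Bernstein tail from Assumption \ref{a:setting} \ref{a:setting:itm1} for the sub-exponential summands $\varepsilon_i X_{ij}1\{Q_i>\tau\}$. Everything else (the basic inequality, the Lipschitz reduction, the nonnegativity manipulations, and the crude $\ell_1$-mass bound via compactness of $\mathcal A$) is routine.
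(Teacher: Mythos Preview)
Your empirical-process step and the overall architecture (basic inequality, symmetrization + contraction to reduce to $\sup_{\tau,j}|\tfrac1n\sum_i\varepsilon_iX_{ij}(\tau)|$, crude $\ell_1$-mass bound via $|\alpha_J-\alpha_{0J}|_1=O(s)$) are essentially what the paper does. However, there is a genuine gap in how you relate your basic inequality to the excess risk.

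You chose to compare $S_n(\widehat\alpha,\widehat\tau)$ with $S_n(\alpha_0,\widehat\tau)$. This yields an upper bound only on
\[
\mathbb E\big[\rho(Y,X(\widehat\tau)^T\widehat\alpha)-\rho(Y,X(\widehat\tau)^T\alpha_0)\big]
=R(\widehat\alpha,\widehat\tau)-R(\alpha_0,\widehat\tau),
\]
not on $R(\widehat\alpha,\widehat\tau)$ itself. Your attempted fix, observing that $R(\alpha_0,\widehat\tau)\ge 0$, goes the wrong direction: nonnegativity of this term means it \emph{adds} to $R(\widehat\alpha,\widehat\tau)$, so you cannot drop it from the upper bound. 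Without any consistency of $\widehat\tau$ (which is not yet available at this stage), $R(\alpha_0,\widehat\tau)$ can be of order one, and your argument stalls. Note also that Assumption \ref{a:obj-ftn}\ref{a:obj-ftn:itm2}, which you invoke, is not among the hypotheses of this theorem (the nonnegativity you use holds anyway by the definition of $(\alpha_0,\tau_0)$, but as just said it does not help).

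The paper instead compares with $S_n(\alpha_0,\tau_0)$. This gives $R(\widehat\alpha,\widehat\tau)$ directly on the left, at the price of an extra empirical term
\[
\nu_n(\alpha_0,\tau_0)-\nu_n(\alpha_0,\widehat\tau)
=\frac1n\sum_i\big[\rho(Y_i,X_i(\tau_0)^T\alpha_0)-\rho(Y_i,X_i(\widehat\tau)^T\alpha_0)\big]-\mathbb E[\cdots],
\]
which must be controlled uniformly in $\tau$. This is exactly where Assumptions \ref{a:dist-Q}\ref{a:dist-Q:itm1} and \ref{a:threshold}\ref{a:threshold:itm1} (which you never used) enter: after Lipschitz + contraction the process reduces to $\tfrac1n\sum_i\varepsilon_i X_i^T\delta_0(1\{Q_i>\tau\}-1\{Q_i>\tau_0\})$, a VC-indexed process whose envelope has $L_2$-norm of order $|\delta_0|_2\sqrt{\mathbb P(|Q-\tau_0|\le\eta)}=O(|\delta_0|_2\sqrt{\eta})$; a maximal inequality gives a bound $O_P(n^{-1/2}|\delta_0|_2)=O_P((s/n)^{1/2})=o_P(\omega_n s)$. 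Once this piece is in place, the rest of your outline (two cases according to whether $|\widehat\alpha-\alpha_0|_1\lessgtr|\alpha_0|_1$, with a peeling argument for the ratio bound in the second case) matches the paper.
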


Theorem \ref{l2.1} shows the risk consistency if $\omega_n s \rightarrow 0$ as $n \rightarrow \infty$. 
%That is, for the consistency of the excess risk,  the sparsity of the model cannot grow too fast: it has to satisfy $s = o \left[  (\log p)^{-1} (\log n)^{-1} \sqrt{n/\log p} \right]$. 
The restriction on $s$ is slightly stronger than that of the standard result 
$s = o ( \sqrt{n/\log p} )$
in the literature
for the M-estimation (see, e.g. \cite{bulmann}, Chapter 6.6). Our situation is also different from the setup studied by
%\cite{vandegeer07,geer} 
 \cite{geer}
since the objective function $\rho(Y, X(\tau)^T\alpha)$ is non-convex in $\tau$,  due to the unknown change-point.
The extra logarithmic factor $(\log p)(\log n)$ is due to the existence of the unknown and possibly non-identifiable threshold parameter $\tau_0$. 
%We have not tried to obtain the sharp rate for the extra logarithmic factor. 
In fact, an inspection of the proof of 
Theorem \ref{l2.1} reveals that it suffices to assume that  $\omega_n$ satisfies 
$\omega_n \gg \log_2 (p/s) [\log (np)/n ]^{1/2}$. 
The term $\log_2 (p/s)$ and the additional $(\log n)^{1/2}$ term inside the brackets are needed to  establish the stochastic continuity
of the empirical process
\begin{equation*}
\nu _{n}\left( \alpha ,\tau \right) \equiv\frac{1}{n}\sum_{i=1}^{n}\left[ \rho
\left( Y_{i},X_{i}\left( \tau \right) ^{T}\alpha \right) -
\mathbb{E} \rho \left(Y,X\left( \tau \right) ^{T}\alpha \right) \right].
\end{equation*}
uniformly over $(\alpha,\tau) \in \mathcal{A} \times \mathcal{T}$. 
\subsection{Threshold consistency}

In this subsection, we establish conditions under which 
the unknown change-point $\tau _{0}$ is  identifiable, and thus it can be  consistently estimated.  In addition, we present  theoretical analysis when $\tau_0$ is non-identifiable
in  Section \ref{subsec:delta0}. Intuitively,  if there is no structural change in the sparsity, $\delta_0=0$, then $\rho(Y, X^T\beta_0+X^T\delta_01\{Q>\tau_0\})$ will be observationally equivalent regardless of the value of $\tau_0\in\mathcal{T}$, and in that case it is impossible to consistently estimate $\tau_0$. As a result,  $\tau_0$ is identifiable only if $\delta_0$ is ``significantly" nonzero, which leads to a structural change.

\begin{thm}[Consistency of $\widehat{\protect\tau}$]
\label{th2.2} 
Let
Assumptions \ref{a:setting} \ref{a:setting:itm1}-\ref{a:setting:itm3},
\ref{a:dist-Q} \ref{a:dist-Q:itm1}-\ref{a:dist-Q:itm2}, 
\ref{a:obj-ftn} \ref{a:obj-ftn:itm1}-\ref{a:obj-ftn:itm4},
\ref{a:threshold} \ref{a:threshold:itm1}-\ref{a:threshold:itm2},
and \ref{a:moment} \ref{a:moment:itm1} hold.
Then, 
$\widehat{\tau}\overset{p}{\longrightarrow}\tau_{0}$.
\end{thm}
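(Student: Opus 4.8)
The plan is to combine the risk consistency of Theorem \ref{l2.1} with a population identification argument: I will show that the profiled excess risk $m(\tau)\equiv\inf_{\alpha\in\mathcal A}R(\alpha,\tau)$ is bounded away from zero whenever $\tau$ is bounded away from $\tau_0$. Since $R(\widehat\alpha,\widehat\tau)\geq m(\widehat\tau)$ and, by Theorem \ref{l2.1} (under the maintained rate condition $\omega_n s\to0$ with $\omega_n$ as in \eqref{omega_rate}), $R(\widehat\alpha,\widehat\tau)=O_P(\omega_n s)=o_P(1)$, once we establish $\inf_{|\tau-\tau_0|\geq\varepsilon}m(\tau)\geq c_\varepsilon>0$ we get $\{|\widehat\tau-\tau_0|\geq\varepsilon\}\subset\{R(\widehat\alpha,\widehat\tau)\geq c_\varepsilon\}$, hence $\mathbb P(|\widehat\tau-\tau_0|\geq\varepsilon)\to0$. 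Because $\{|\widehat\tau-\tau_0|\geq\varepsilon\}\subset\{|\widehat\tau-\tau_0|\geq\varepsilon'\}$ for $\varepsilon'<\varepsilon$, it suffices to treat $\varepsilon$ small enough that $[\tau_0-\varepsilon,\tau_0+\varepsilon]\subset\mathcal T_0$.

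For the population bound, fix $\alpha=(\beta^T,\delta^T)^T\in\mathcal A$ and $\tau\in\mathcal T$ with $\tau\leq\tau_0-\varepsilon$ (the case $\tau\geq\tau_0+\varepsilon$ is symmetric, interchanging the roles of $(\beta,\beta_0,r_1,\Delta_1)$ and $(\theta,\theta_0,r_2,\Delta_2)$ and using the other side of \eqref{eq2.4add}). Writing $\theta=\beta+\delta$ and partitioning over $\{Q\leq\tau\}$, $\{\tau<Q\leq\tau_0\}$, $\{Q>\tau_0\}$ — where $X(\tau)^T\alpha$ equals $X^T\beta$, $X^T\theta$, $X^T\theta$ while $X(\tau_0)^T\alpha_0$ equals $X^T\beta_0$, $X^T\beta_0$, $X^T\theta_0$, respectively — gives $R(\alpha,\tau)=T_1+T_2+T_3$, with $T_1=\mathbb E[(\rho(Y,X^T\beta)-\rho(Y,X^T\beta_0))1\{Q\leq\tau\}]$, $T_3=\mathbb E[(\rho(Y,X^T\theta)-\rho(Y,X^T\theta_0))1\{Q>\tau_0\}]$, and $T_2=\mathbb E[(\rho(Y,X^T\theta)-\rho(Y,X^T\beta_0))1\{\tau<Q\leq\tau_0\}]$. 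Assumption \ref{a:obj-ftn} \ref{a:obj-ftn:itm2} (conditional risk minimality of $\beta_0$ on $\{Q\leq\tau_0\}$ and of $\theta_0$ on $\{Q>\tau_0\}$) yields $T_1\geq0$, $T_3\geq0$; adding and subtracting $\rho(Y,X^T\theta_0)$ inside $T_2$, invoking \eqref{iden-tau0} of Remark \ref{remark-tau-whole-set} (valid for all $\tau\in\mathcal T$) and the Lipschitz bound of Assumption \ref{a:obj-ftn} \ref{a:obj-ftn:itm1} gives
\[
T_2\ \geq\ \Delta_1(\tau)-L\,\mathbb E\!\left[|X^T(\theta-\theta_0)|\,1\{\tau<Q\leq\tau_0\}\right]\ \geq\ c_0^2\,\mathbb P(\tau<Q\leq\tau_0)-L\,\mathbb E\!\left[|X^T(\theta-\theta_0)|\,1\{Q\leq\tau_0\}\right],
\]
and by Assumption \ref{a:dist-Q} \ref{a:dist-Q:itm2}, $\mathbb P(\tau<Q\leq\tau_0)\geq\mathbb P(\tau_0-\varepsilon<Q\leq\tau_0)\geq p_\varepsilon>0$ for a constant $p_\varepsilon$ proportional to $\varepsilon$.

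It remains to show the cross term is small whenever $R(\alpha,\tau)$ is small. Suppose $R(\alpha,\tau)\leq\epsilon$; since $T_1,T_2\geq0$ (the bound above is also bounded below by $0$ via Assumption \ref{a:obj-ftn} \ref{a:obj-ftn:itm2} applied on $\{Q\le\tau_0\}$), we have $T_3\leq\epsilon$. If $\theta\notin\mathcal G(\theta_0,r^\ast)$, set $\bar\theta=\theta_0+t(\theta-\theta_0)$ with $t=r^\ast/\sqrt{\mathbb E[(X^T(\theta-\theta_0))^21\{Q>\tau_0\}]}<1$, so that $\bar\theta$ lies on the boundary of $\mathcal G(\theta_0,r^\ast)$ ($\mathcal G$ being convex); convexity of $\rho(Y,\cdot)$ gives $\rho(Y,X^T\bar\theta)-\rho(Y,X^T\theta_0)\leq t\big(\rho(Y,X^T\theta)-\rho(Y,X^T\theta_0)\big)$, so Assumption \ref{a:obj-ftn} \ref{a:obj-ftn:itm3} ($r_2(\eta^\ast)\geq r^\ast$) forces $\eta^\ast(r^\ast)^2\leq t\,T_3\leq T_3$, contradicting $T_3\leq\epsilon$ once $\epsilon<\eta^\ast(r^\ast)^2$. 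Hence $\theta\in\mathcal G(\theta_0,r^\ast)$ and $\mathbb E[(X^T(\theta-\theta_0))^21\{Q>\tau_0\}]\leq T_3/\eta^\ast\leq\epsilon/\eta^\ast$; then Cauchy–Schwarz and \eqref{eq2.4add} give $\mathbb E[|X^T(\theta-\theta_0)|1\{Q\leq\tau_0\}]\leq(C_1^{-1}-1)\sqrt{\epsilon/\eta^\ast}$. Substituting back, $R(\alpha,\tau)\geq T_2\geq c_0^2p_\varepsilon-L(C_1^{-1}-1)\sqrt{\epsilon/\eta^\ast}$. Choosing $c_\varepsilon>0$ small enough that $c_\varepsilon<\eta^\ast(r^\ast)^2$ and $c_\varepsilon+L(C_1^{-1}-1)\sqrt{c_\varepsilon/\eta^\ast}<c_0^2p_\varepsilon$ makes $R(\alpha,\tau)\leq c_\varepsilon$ impossible; thus $m(\tau)\geq c_\varepsilon$ for all $\tau$ with $|\tau-\tau_0|\geq\varepsilon$, which closes the argument.

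The main obstacle is precisely this last step — making the lower bound on $R(\alpha,\tau)$ \emph{uniform over the high-dimensional nuisance} $\alpha$, since $\widehat\alpha$ is only known to be risk-consistent, not close to $\alpha_0$ in any norm. This is where the restricted-nonlinear-impact condition (Assumption \ref{a:obj-ftn} \ref{a:obj-ftn:itm3}), together with the convexity rescaling, is used to convert ``$T_3$ small'' into ``$\theta$ close to $\theta_0$ in the prediction seminorm on $\{Q>\tau_0\}$,'' and where Assumption \ref{a:moment} \ref{a:moment:itm1} is needed to transfer that closeness across the regime boundary so that the (possibly negative) contribution of the mismatch region is dominated by the identification term $\Delta_1(\tau)\geq c_0^2\mathbb P(\tau<Q\leq\tau_0)$. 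Conditions \ref{a:obj-ftn} \ref{a:obj-ftn:itm2}, \ref{a:obj-ftn} \ref{a:obj-ftn:itm4} and \ref{a:threshold} \ref{a:threshold:itm2} enter only through \eqref{iden-tau0} and the nonnegativity of $T_1,T_3$; the hypothesis $\delta_0\neq0$ (in force via Assumption \ref{a:threshold}) is essential, since otherwise $\Delta_1\equiv\Delta_2\equiv0$, $m(\tau)\equiv0$, and $\tau_0$ is not identified.
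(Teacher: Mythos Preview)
Your proof is correct and follows essentially the same route as the paper's own argument: the same three-term decomposition of $R(\alpha,\tau)$ (your display matches the paper's \eqref{eq2.1}), the nonnegativity of all three pieces via Assumption \ref{a:obj-ftn}\ref{a:obj-ftn:itm2}, the convexity-rescaling trick to convert a small $T_3$ into control of the prediction seminorm (the paper packages this as its Step~2 inequality $\mathbb{E}[|X^T(\beta-\beta_0)|1\{Q\le\tau_0\}]\le(\eta^\ast r^\ast)^{-1}R\vee(\eta^{\ast-1}R)^{1/2}$), the Lipschitz bound plus \eqref{eq2.4add} to dominate the cross-regime term by the identification term $\Delta_1(\tau)\ge c_0^2\,\mathbb P(\tau<Q\le\tau_0)$, and finally the appeal to risk consistency (Theorem~\ref{l2.1}). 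The only cosmetic differences are that you detail the side $\tau<\tau_0$ while the paper details $\tau>\tau_0$, and you phrase the identification step as ``$\inf_{|\tau-\tau_0|\ge\varepsilon}m(\tau)\ge c_\varepsilon$'' rather than its contrapositive ``$R<\varepsilon\Rightarrow|\tau-\tau_0|<\epsilon'$''.
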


We briefly provide  the logic behind the proof of Theorem \ref{th2.2} here.
Note that for all $
\alpha \equiv (\beta^T, \delta^T)^T\in\mathbb{R}^{2p}$ and $\theta \equiv \beta+\delta$,
the excess risk has the following decomposition: when $\tau_1<\tau_0$,
\begin{align}  \label{eq2.1}
\begin{split}
R\left( \alpha, \tau_1\right) & =\mathbb{E} \left( 
\left[\rho\left( Y,X^T\beta\right)
-\rho\left( Y,X^T\beta _{0}\right) \right] 1\left\{ Q\leq\tau_1\right\} \right) \\
&\;\;\;+\mathbb{E} \left(  \left[ \rho\left( Y,X^T\theta\right) -\rho\left( Y,X^T\theta_{0}\right)
\right] 1\left\{ Q>\tau_{0}\right\} \right) \\
&\;\;\;+\mathbb{E} \left(   \left[ \rho\left( Y,X^T\theta\right) -\rho\left( Y,X^T\beta_{0}\right)
\right] 1\left\{ \tau_1<Q\leq\tau_{0}\right\} \right),
\end{split} 
\end{align}
and when $\tau_2>\tau_0$, 
\begin{align}  \label{eq2.2}
\begin{split}
R\left( \alpha, \tau_2\right) & =\mathbb{E} \left( 
\left[\rho\left( Y,X^T\beta\right)
-\rho\left( Y,X^T\beta _{0}\right) \right] 1\left\{ Q\leq\tau_0\right\} \right) \\
&\;\;\;+\mathbb{E} \left(  \left[ \rho\left( Y,X^T\theta\right) -\rho\left( Y,X^T\theta_{0}\right)
\right] 1\left\{ Q>\tau_{2}\right\} \right) \\
&\;\;\;+\mathbb{E} \left(   \left[ \rho\left( Y,X^T\beta\right) -\rho\left( Y,X^T\theta_{0}\right)
\right] 1\left\{ \tau_0<Q\leq\tau_{2}\right\} \right).
\end{split} 
\end{align}
%The proof of the consistency of $\widehat{\tau}$ starts from  
The key observations are that  all the six terms
in the above decompositions are non-negative, and are   stochastically negligible when taking $%
\alpha =\widehat{\alpha}$, and $\tau _{1}=\widehat{\tau}$ if $\widehat{\tau}<\tau _{0}$
or $\tau _{2}=\widehat{\tau}$ if $\widehat{\tau}>\tau _{0}.$ This follows from the
risk consistency of $R(\widehat{\alpha},\widehat{\tau})$. %, which is obtained inTheorem \ref{l2.1}.  %The nonnegativeness holds by Condition (i) of Assumption \ref{ass2.2}.
Then, the identification conditions for $\alpha_0$ and $\tau_0$ (Assumptions \ref{a:obj-ftn} \ref{a:obj-ftn:itm2}-\ref{a:obj-ftn:itm4}), along with 
Assumption \ref{a:moment} \ref{a:moment:itm1},
 are useful to show that the risk consistency implies  the consistency of 
$\widehat \tau$.

\subsection{Rate of convergence and super-efficiency when $\tau_0$ is identifiable}\label{sec:rate-identifiable}

This subsection derives the rate of convergence for the excess risk as well as   $|\widehat\alpha-\alpha_0|_1$,
and proves that we can achieve the super-convergence rate for $\widehat{\tau}-\tau_0$ when $\tau_0$ is identifiable. 

We first make an assumption that is an extension of  the well-known \textit{compatibility condition} (see \cite{bulmann}, Chapter 6), which 
is related to  the ``restricted eigenvalue condition" of
\cite{Bickeletal}. Both conditions are  commonly assumed in high-dimensional sparse
literature. See, e.g. \cite{vdGeer:Buhlmann:09} for the relations  among these conditions on the design matrix.
In particular, the following condition is  a uniform-in-$\tau$ version of the compatibility condition.

For a $2p$ dimensional vector $\alpha$, we shall use $\alpha_J$ and $\alpha_{J^c}$ to denote its subvectors formed by indices in $J(\alpha_0)$ and $\{1,...,2p\}/J(\alpha_0)$, respectively.

\begin{assum} [Compatibility condition]
\label{ass2.7} There is a neighborhood $\mathcal{T}_0\subset\mathcal{T}$ of $\tau_0$, and  a constant $\phi>0 $ such that for all $\tau\in \mathcal{T}_0$ and all $\alpha  \in\mathbb{R}^{2p}$ satisfying 
$|\alpha_{J^c}|_1\leq 5|\alpha_J|_1$, 
\begin{equation}  \label{eq2.4}
\phi |\alpha_J|_1^2\leq s\alpha^T \mathbb{E}[X(\tau)X(\tau)^T] \alpha.
\end{equation}
\end{assum}
 
Note that Assumption \ref{ass2.7} requires that the compatibility condition hold uniformly in $\tau$ over a small neighbourhood of $\tau_0$. 
%\cite{LSS14} assumed a similar version of the restricted eigenvalue condition in the Gaussian linear model with a deterministic design. In this paper, we impose Assumption \ref{ass2.7} since we consider random design, and in this case,  the compatibility condition is slightly weaker than the restricted eigenvalue condition. 
 Note that Assumption \ref{ass2.7} is
imposed on the population covariance matrix $\mathbb{E}[X(\tau)X(\tau)^T]$, so a simple
sufficient condition of Assumption \ref{ass2.7} is that the   smallest
eigenvalue of $\mathbb{E}[X(\tau)X(\tau)^T]$ is bounded away from zero uniformly in $%
\tau\in\mathcal{T}_0.$ Even if $p>n$, the population covariance can still be
strictly positive definite while the sample covariance $\frac{1}{n}%
\sum_{i=1}^n X_i(\tau)X_i(\tau)^T$ is not.

\begin{exm}[Factor analysis with structural-changing loadings]
Suppose the regressors satisfy a factor-structure with a change point:
\begin{equation*}
X_{i}(\tau )=\Lambda (\tau )f_{i}+u_{i},\quad i=1,...,n,
\end{equation*}%
where $\Lambda (\tau )$ is a $2p\times k$ dimensional loading matrix that
may depend on the change point $\tau $, and $f_{i}$ is a $k$-dimensional vector of 
common factors that may not be observable; $u_{i}$ is  the error term for
factor analysis that is independent of $f_{i}$.  Let $\cov(f_i)$ denote the $k\times k$ covariance of $f_i$. Then the covariance of the random design matrix has the following decomposition:
$$
\mathbb{E}[X_i(\tau)X_i(\tau)^T]=\Lambda(\tau)\cov(f_i)\Lambda(\tau)^T+\mathbb{E}[u_iu_i^T].
$$
 Then a sufficient condition
of Assumption \ref{ass2.7} is that all the eigenvalue of   $\mathbb{E}[u_iu_i^T]$ are bounded below by a constant $c_{\min
}$, and  (\ref{eq2.4}) is satisfied for $\phi (J)=c_{\min }$. Note that assuming the minimum eigenvalue of  $\mathbb{E}[u_iu_i^T]$  to be bounded below is not stringent, because   for the identifiability purpose, $\mathbb{E}[u_iu_i^T]$ is often assumed to be diagonal (e.g., \cite{Lawley}). Then it is sufficient to have $\min_{j\leq 2p} \mathbb{E}[u_{ij}^2]>c_{\min}.$ $\square$
\end{exm}

\begin{remark}
In high-dimensional M-estimation,  
it is necessary to impose a version of margin condition in addition to the compatibility condition, when viewing the sparse recovery as a type of inverse problem. Note that the margin condition, together with the  compatibility condition, are sufficient to the so-called \textit{restricted strong convexity} condition in \cite{negahban2012} and \cite{loh2013regularized}. 
Recall that  Assumptions \ref{a:obj-ftn} \ref{a:obj-ftn:itm3} and \ref{a:obj-ftn:itm4}  are the corresponding conditions  for $\alpha_0$ and  $\tau_0$, respectively. 
\end{remark}

The following theorem presents the rates of convergence. Recall
\begin{equation}\label{eq2.10}
\omega_n=(\log p)(\log n)\sqrt{\frac{\log p}{n}},\quad \text{and }s=|J(\alpha_0)|_0.
\end{equation}

\begin{thm}[Rates of convergence]
\label{th2.3}  Suppose that $\omega_ns^2\log p=o(1)$.
Let  $\lambda_n=  C\omega_n$ for some constant $C > 0$. Then 
under Assumptions  \ref{a:setting}-\ref{ass2.7},  we have:
\begin{equation*}
|\widehat{\alpha}-\alpha _{0}|_{1}=O_P(\omega _{n}s),
\;
R(\widehat{\alpha},\widehat{\tau})=O_P(\omega _{n}^{2}s),
\ \ \text{ and } \ \ 
|\widehat{\tau}-\tau _{0}|=O_P (\omega _{n}^{2}s/\Delta_0),
\end{equation*}
where $\Delta_0 \equiv   \inf_{\tau \in \mathcal{T}_0}\mathbb{E}[(X^T\delta_0)^2|Q=\tau]$. 
\end{thm}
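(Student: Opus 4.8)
The plan is to work from the basic inequality $S_n(\widehat\alpha,\widehat\tau) \leq S_n(\alpha_0,\tau_0)$ and convert it into a bound on $|\widehat\alpha-\alpha_0|_1$, then bootstrap that bound into sharper bounds on the excess risk and on $|\widehat\tau-\tau_0|$. First I would establish uniform control of the centered empirical process $\nu_n(\alpha,\tau)$ over $\mathcal{A}\times\mathcal{T}$: using the Lipschitz property of $\rho$ in its second argument (Assumption \ref{a:obj-ftn} \ref{a:obj-ftn:itm1}), a contraction/symmetrization argument together with the subexponential moment bound on $X_{ij}$ (Assumption \ref{a:setting} \ref{a:setting:itm1}), and a chaining over $\tau$ that accounts for the $n$ distinct values $\{Q_i\}$ and the $p$ coordinates, one gets that $|\nu_n(\alpha,\tau)-\nu_n(\alpha_0,\tau)|$ is controlled, uniformly, at the scale $\omega_n |\widehat\alpha-\alpha_0|_1$ — this is where the extra $(\log p)(\log n)$ factor enters, exactly as flagged after Theorem \ref{l2.1}. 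Combining this with the penalty terms and the weight bounds (Assumption \ref{a:setting} \ref{a:setting:itm3}) in the standard Lasso fashion yields that $\widehat\alpha-\alpha_0$ lies (w.p.a.1) in the cone $|\alpha_{J^c}|_1 \leq 5|\alpha_J|_1$, and simultaneously that the excess risk is bounded by a multiple of $\lambda_n |\widehat\alpha-\alpha_0|_1$ plus a term controlled by $\omega_n|\widehat\alpha-\alpha_0|_1$.

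Next I would invoke risk consistency (Theorem \ref{l2.1}) and threshold consistency (Theorem \ref{th2.2}) to guarantee that, w.p.a.1, $\widehat\tau \in \mathcal{T}_0$, so that the compatibility condition (Assumption \ref{ass2.7}) applies at $\tau=\widehat\tau$. Then I would decompose the excess risk $R(\widehat\alpha,\widehat\tau)$ using \eqref{eq2.1}–\eqref{eq2.2}: the first two pieces are excess-risk terms on $\{Q\leq\tau_0\}$ and $\{Q>\tau_0\}$, which by Assumption \ref{a:obj-ftn} \ref{a:obj-ftn:itm3} are bounded below by $\eta^\ast$ times the quadratic prediction error (once we know the prediction error is within the radius $r^\ast$, which follows from risk consistency); the third ``cross'' term is nonnegative and, together with the margin-type Assumption \ref{a:obj-ftn} \ref{a:obj-ftn:itm4}, handled via Assumption \ref{a:moment}. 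Feeding the compatibility lower bound $\phi|\alpha_J|_1^2 \leq s\,(\widehat\alpha-\alpha_0)^T\mathbb{E}[X(\tau)X(\tau)^T](\widehat\alpha-\alpha_0)$ through the quadratic lower bounds on the excess risk, and combining with the basic-inequality upper bound of order $\lambda_n|\widehat\alpha-\alpha_0|_1 + \omega_n|\widehat\alpha-\alpha_0|_1$, a self-bounding argument gives $|\widehat\alpha-\alpha_0|_1 = O_P(\omega_n s)$ and then $R(\widehat\alpha,\widehat\tau)=O_P(\omega_n^2 s)$.

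For the threshold rate, I would use the representation \eqref{eq2.1}–\eqref{eq2.2} once more: when $\widehat\tau<\tau_0$ the excess risk contains the nonnegative term $\mathbb{E}\big(\big[\rho(Y,X^T\widehat\theta)-\rho(Y,X^T\beta_0)\big]1\{\widehat\tau<Q\leq\tau_0\}\big)$, and (using $\widehat\theta=\widehat\beta+\widehat\delta$ close to $\theta_0$, the just-obtained $\ell_1$ rate, Assumption \ref{a:threshold} and Assumption \ref{a:moment}) this can be shown to be at least of order $\Delta_1(\widehat\tau) - (\text{lower-order}) \gtrsim \Delta_0\,|\widehat\tau-\tau_0|$ via \eqref{iden-tau0} and Remark \ref{remark-tau-whole-set}; since the left side is $O_P(R(\widehat\alpha,\widehat\tau))=O_P(\omega_n^2 s)$, rearranging gives $|\widehat\tau-\tau_0|=O_P(\omega_n^2 s/\Delta_0)$, and symmetrically for $\widehat\tau>\tau_0$. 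The main obstacle I anticipate is the uniform-in-$\tau$ empirical-process step: controlling $\nu_n(\alpha,\tau)-\nu_n(\alpha_0,\tau)$ simultaneously over all $\tau$ when the map $\tau\mapsto 1\{Q_i>\tau\}$ is discontinuous requires a careful chaining/peeling argument (splitting over dyadic shells of $|\widehat\alpha-\alpha_0|_1$ and over the induced partition of $\mathcal{T}$ by the order statistics of $Q$), and it is precisely here that the $(\log p)(\log n)$ inflation of the rate is unavoidable; the secondary difficulty is keeping the cross-regime term in the risk decomposition from swamping the main quadratic term, which is what Assumptions \ref{a:obj-ftn} \ref{a:obj-ftn:itm4}, \ref{a:threshold}, and \ref{a:moment} are designed to prevent, but the bookkeeping with growing $s$ and the condition $\omega_n s^2\log p=o(1)$ needs care.
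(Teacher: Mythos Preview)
Your high-level architecture is right (empirical-process control, basic inequality, compatibility, margin condition for $\tau$), but there are two genuine gaps that would prevent the argument from closing as written.

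\textbf{The order is inverted.} You plan to obtain the sharp $\ell_1$-rate for $\widehat\alpha$ first and only then derive $|\widehat\tau-\tau_0|\lesssim R/\Delta_0$. That does not work: the self-bounding step for $\widehat\alpha$ already requires the relation $\bar c_0(\delta_0)\,|\widehat\tau-\tau_0|\lesssim R(\widehat\alpha,\widehat\tau)$. Compatibility (Assumption~\ref{ass2.7}) controls $|\,(\widehat\alpha-\alpha_0)_J|_1$ by $\sqrt{s}\,\|X(\widehat\tau)^T(\widehat\alpha-\alpha_0)\|_2$, whereas Assumption~\ref{a:obj-ftn}\ref{a:obj-ftn:itm3} lower-bounds $R$ by quadratic pieces over $\{Q\le\tau_0\}$ and $\{Q>\tau_0\}$. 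The mismatch between these two quadratics is exactly of order $\bar c_0(\delta_0)\,|\widehat\tau-\tau_0|$ (this is the content of the paper's Step~5). The paper breaks the circularity by establishing $\bar c_0(\delta_0)\,|\widehat\tau-\tau_0|\le 4R(\widehat\alpha,\widehat\tau)$ \emph{first} (Step~2), using only that $\widehat\beta\in\mathcal B(\beta_0,r)$, $\widehat\theta\in\mathcal G(\theta_0,r)$ (Step~1) together with Assumptions~\ref{a:obj-ftn}\ref{a:obj-ftn:itm4} and~\ref{a:moment}, \emph{not} the $\ell_1$ rate. Your $\tau$-argument at the end is essentially the right computation, but it must be run before, not after, the $\alpha$-step.

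\textbf{The cone condition is not automatic.} Your basic inequality omits the term
\[
c_\alpha \;=\; \lambda_n\big(|D_0\alpha_0|_1-|\widehat D\alpha_0|_1\big)\;+\;\big|\nu_n(\alpha_0,\tau_0)-\nu_n(\alpha_0,\widehat\tau)\big|,
\]
which is specific to the threshold model (it comes from the $\tau$-dependent weights $D_j(\tau)$ and from decomposing $\nu_n(\alpha_0,\tau_0)-\nu_n(\widehat\alpha,\widehat\tau)$). Since $c_\alpha$ need not be dominated by $\lambda_n|\widehat D(\widehat\alpha-\alpha_0)_J|_1$, the inequality $|\,(\widehat\alpha-\alpha_0)_{J^c}|_1\le 5|\,(\widehat\alpha-\alpha_0)_J|_1$ can fail. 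The paper therefore splits into two cases: when $c_\alpha<\lambda_n|\widehat D(\widehat\alpha-\alpha_0)_J|_1$ the cone holds and your plan goes through; when $c_\alpha\ge\lambda_n|\widehat D(\widehat\alpha-\alpha_0)_J|_1$, compatibility is unavailable and the rates are obtained instead by an iterative-improvement loop (Step~6) that alternately sharpens the bounds on $|D_0\alpha_0|_1-|\widehat D\alpha_0|_1$ and on $|\nu_{1n}(\widehat\tau)|$ via \eqref{emp3} and the already-established $|\widehat\tau-\tau_0|\lesssim R/\bar c_0(\delta_0)$. Without this case split your proof would stall precisely when $\widehat\alpha_J$ happens to be very close to $\alpha_{0J}$.
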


The required growth rate on the sparsity index $s$ can be rewritten as $s^4(\log p)^5(\log n)^2=o(n)$.
The achieved convergence rate for $\widehat\alpha$ is slightly slower than the usual rate for Lasso estimation (e.g., \cite{Bickeletal}), with an additional factor $(\log p)(\log n)$,  due to the unknown change-point $\tau_0.$ In  Section \ref{sec:oracle-inference},
 we will see that 
the rate of convergence for $\widehat\alpha$ can be improved (via a second-step regularization) to be the oracle rate of convergence  when signals are relatively strong.

\begin{remark}
It is worth noting that  the convergence rate of $\widehat\tau$ depends on $\Delta_0$, which is assumed to be bounded or  
to diverge to infinity at the rate of $\Delta_0 = O(|\delta_0|_2^2)$ (see Assumption \ref{a:threshold} \ref{a:threshold:itm3}).  
Moreover, note that $\widehat\tau$ converges to $\tau_0$ at least as fast as $R(\widehat{\alpha},\widehat{\tau})$ and its rate of convergence can be faster than the standard parametric rate of $n^{-1/2}$,  as long as  $s^2(\log p)^6(\log n)^4/\Delta_0  =o(n)$. 
The main reason we achieve the super-consistency for estimating $\tau_0$ is that our objective function behaves locally linearly around $\tau_0$ with a kink at $\tau_0$ (granted in Remark \ref{remark-tau-whole-set}), unlike in the regular estimation problem where an objective function behaves  locally quadratically around the true parameter value.
\end{remark}

\subsection{Rate of convergence when $\tau_0$ is non-identifiable}\label{subsec:delta0}

In this subsection, we  derive the rate of convergence for the excess risk as well as
  $|\widehat\alpha-\alpha_0|_1$
when  there is no structural change in the sparsity (that is, $\delta_0=0$).  Recall that when $\delta_0=0$,  
$$
 X^T\beta_0+X^T\delta_01\{Q>\tau_0\}=X^T\beta_0,
$$
hence  $\tau_0$ is non-identifiable. 

% However,  we consider a practical situation that 
%$\delta_0=0$ is an unknown fact to us.  So we employ the same $l_1$-penalized M-estimation as in (\ref{eq2.2add}), which still penalizes both $\beta$ and $\delta$.  This is a more relevant formulation in reality, because   often we do not know whether the sparsity-structual-change is present even if it is in fact not.

Most of the  required conditions are  similar as before,  except that the  identification and margin conditions with respect to  $\tau_0$ are not required. We shall see below that the required conditions are slightly stronger than those of the regular $\ell_1$-penalized regressions (e.g.,  \cite{bulmann}, Chapter 6.6), as we do not know whether any structural changes on  the sparsity are present. Some conditions are required to be  
 valid uniformly over the parameter space for $\tau$.

%It is necessary to modify regularity conditions to reflect the fact that $\delta_0 = 0$.

Specifically, define 
\begin{align}  \label{eq2.3-delta0}
\begin{split}
\mathcal{\tilde{B}}(\beta_0, r, \tau) 
&=\{\beta\in \mathcal{B} \subset \mathbb{R}^p: \mathbb{E}[(X^T(\beta-\beta_0))^2 1\left\{ Q\leq \tau \right\}]\leq r^2\}, \\
\mathcal{\tilde{G}}(\beta_0, r, \tau) 
&=\{\theta \in \mathcal{G} \subset \mathbb{R}^p: \mathbb{E}[(X^T(\theta-\beta_0))^2 1\left\{ Q > \tau \right\}]\leq r^2\}.
\end{split}
\end{align}
For a constant $\eta > 0$, define
\begin{align*}
\tilde{r}_1(\eta) \equiv &\sup_r \Big\{ r: \mathbb{E} \left( \left[  \rho \left( Y,X^{T}\beta \right) -\rho
\left( Y,X^{T}\beta_0 \right) \right] 1\left\{ Q\leq \tau \right\} \right) \\
& \;\;\;\;\;\;\;\;\;\;\;
\geq \eta \mathbb{E}[(X^T(\beta-\beta_0))^2 1\{Q\leq\tau \}]  
\textrm{ for all $\beta \in \mathcal{\tilde{B}}(\beta_0, r,\tau)$ and for all  $\tau \in \mathcal{T}$}
\Big \}
\end{align*}
and
\begin{align*}
\tilde{r}_2(\eta) \equiv &\sup_r \Big\{ r: \mathbb{E} \left( \left[  \rho \left( Y,X^{T}\theta \right) -\rho
\left( Y,X^{T}\beta_0 \right) \right] 1\left\{ Q >  \tau \right\} \right) \\
& \;\;\;\;\;\;\;\;\;\;\;
\geq \eta \mathbb{E}[(X^T(\theta-\beta_0))^2 1\{Q > \tau \}]  
\textrm{ for all $\theta \in \mathcal{\tilde{G}}(\beta_0, r, \tau)$ and for all  $\tau \in \mathcal{T}$}
\Big \}.
\end{align*}

Note that in this case $\beta_0$ is the vector of true regression coefficients, so we have the following assumption for identification.

\begin{assum}[Identification of $\protect\alpha _{0}$ when $\delta_0 = 0$]
\label{ass2.2-delta0} 
\begin{enumerate}[label=(\roman*)]
\item\label{ass2.2-delta0:itm1}
For all $\alpha \in \mathcal{A}$ and for all $\tau \in \mathcal{T}$, almost surely, 
\begin{align*}
\mathbb{E}[\rho(Y, X(\tau)^T\alpha)-\rho(Y,X^T\beta_0)|Q] \geq 0,  
\end{align*}%
\item\label{ass2.2-delta0:itm2} 
There exist constants $\eta^\ast > 0$ and $r^\ast > 0$ such that 
$\tilde{r}_1(\eta^\ast) \geq r^\ast$ and
$\tilde{r}_2(\eta^\ast) \geq r^\ast$.
\end{enumerate}
\end{assum}

Assumption \ref{ass2.2-delta0} is a uniform-in-$\tau$ ($\in \mathcal{T}$) version of Assumptions \ref{a:obj-ftn} \ref{a:obj-ftn:itm2} and \ref{a:obj-ftn:itm3} since $\tau_0$ is non-identifiable when $\delta_0 = 0$. Similarly, we impose below the compatibility condition uniformly over $\mathcal{T}$.   In this case $J(\alpha_0)=J(\beta_0)$, so $\alpha_J=(\alpha_j: j\in J(\beta_0))$.

\begin{assum} [Compatibility condition when $\delta_0 = 0$]
\label{ass2.7-delta0} There is a constant $\phi>0$  
such that for all $\tau\in \mathcal{T}$ and all $\alpha  \in\mathbb{R}^{2p}$ satisfying 
$|\alpha_{J^c}|_1\leq 4|\alpha_J|_1$, 
\begin{equation}  \label{eq2.4-delta0}
\phi |\alpha_J|_1^2\leq s\alpha^T \mathbb{E}[X(\tau)X(\tau)^T] \alpha.
\end{equation}
\end{assum}

The following theorem gives the rates of convergence for $|\widehat{\alpha}-\alpha _{0}|_{1}$ and $R(\widehat{\alpha},\widehat{\tau})$ when $\delta_0 = 0$. 
Recall \begin{equation*} 
\omega_n=(\log p)(\log n)\sqrt{\frac{\log p}{n}},\quad \text{and }s=|J(\beta_0)|_0.
\end{equation*}

\begin{thm}[Rates of convergence when $\delta_0 = 0$]\label{th2.3-delta0}
  Suppose $\omega_n s = o(1)$.
Let  $\lambda_n=  C\omega_n$ for some constant $C > 0$. Then 
under Assumptions \ref{a:setting} \ref{a:setting:itm1}-\ref{a:setting:itm3},
\ref{a:dist-Q} \ref{a:dist-Q:itm1}, 
\ref{a:obj-ftn} \ref{a:obj-ftn:itm1},
\ref{a:threshold} \ref{a:threshold:itm1},
\ref{ass2.2-delta0}, and \ref{ass2.7-delta0},  we have that 
\begin{equation*} 
|\widehat{\alpha}-\alpha _{0}|_{1}=O_P(\omega _{n}s)
\ \ \text{ and } \ \
R(\widehat{\alpha},\widehat{\tau})=O_P(\omega _{n}^{2}s).
\end{equation*}%
\end{thm}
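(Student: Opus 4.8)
The plan is to run the familiar Lasso oracle-inequality machinery — a basic inequality, a uniform empirical-process bound, the compatibility condition, and a convexity-based localization that licenses the margin condition — but throughout to carry the \emph{uniform-in-$\tau$} forms of the identification and compatibility conditions (Assumptions \ref{ass2.2-delta0} and \ref{ass2.7-delta0}) in place of their local counterparts. The conceptual simplification relative to Theorem \ref{th2.3} is that when $\delta_0=0$ one has $X(\tau)^T\alpha_0=X^T\beta_0$, hence $\mathbb{E}\rho(Y,X(\tau)^T\alpha_0)=\mathbb{E}\rho(Y,X^T\beta_0)$ for \emph{every} $\tau\in\mathcal T$; therefore $R(\widehat\alpha,\widehat\tau)=\mathbb{E}\rho(Y,X(\widehat\tau)^T\widehat\alpha)-\mathbb{E}\rho(Y,X(\widehat\tau)^T\alpha_0)$, the comparison is always between $\widehat\alpha$ and $\alpha_0$ at one and the same threshold $\widehat\tau$, and $\widehat\tau$ never has to be pinned down — which is why the relevant assumptions are stated over all of $\mathcal T$.

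First I would record the basic inequality. Since $(\widehat\alpha,\widehat\tau)$ minimizes $S_n$ and in particular $S_n(\widehat\alpha,\widehat\tau)\le S_n(\alpha_0,\widehat\tau)$, adding and subtracting $\mathbb{E}\rho$ yields
\[
R(\widehat\alpha,\widehat\tau)\ \le\ \big[\nu_n(\alpha_0,\widehat\tau)-\nu_n(\widehat\alpha,\widehat\tau)\big]\ +\ \lambda_n\textstyle\sum_j D_j(\widehat\tau)\big(|\alpha_{0j}|-|\widehat\alpha_j|\big),
\]
with $\nu_n$ the centered process displayed after Theorem \ref{l2.1}. The engine is a uniform empirical-process bound, roughly of the form: with probability approaching one,
\[
\sup_{\alpha\in\mathcal A,\ \tau\in\mathcal T}\ \frac{\big|\nu_n(\alpha,\tau)-\nu_n(\alpha_0,\tau)\big|}{|\alpha-\alpha_0|_1+\omega_n s}\ \le\ C_0\,\omega_n ,
\]
the analogue of the stochastic-continuity step behind Theorem \ref{l2.1}. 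It follows from symmetrization, the Ledoux--Talagrand contraction applied to the Lipschitz map $\rho(y,\cdot)$ (Assumption \ref{a:obj-ftn}\ref{a:obj-ftn:itm1}), a maximal inequality over the $2p$ coordinates of $X_i(\tau)$ using the Bernstein moment bound (Assumption \ref{a:setting}\ref{a:setting:itm1}, together with \ref{a:setting:itm3}), and a chaining argument over the VC class $\{1\{Q>\tau\}:\tau\in\mathcal T\}$, whose $\log n$ entropy and an extra $\log p$ account for the factor $(\log p)(\log n)$ in $\omega_n$. Substituting, using $\underline D\le D_j(\tau)\le\overline D$ to bound $\sum_j D_j(\widehat\tau)(|\alpha_{0j}|-|\widehat\alpha_j|)\le\overline D\,|(\widehat\alpha-\alpha_0)_J|_1-\underline D\,|(\widehat\alpha-\alpha_0)_{J^c}|_1$, and taking the constant $C$ in $\lambda_n=C\omega_n$ large enough, a short case split gives that \emph{either} $|\widehat\alpha-\alpha_0|_1=O_P(\omega_n s)$ outright (the $\omega_n s$ slack dominates) \emph{or} $\widehat\alpha-\alpha_0$ obeys the cone restriction $|(\widehat\alpha-\alpha_0)_{J^c}|_1\le 4|(\widehat\alpha-\alpha_0)_J|_1$ under which Assumption \ref{ass2.7-delta0} applies, together with $R(\widehat\alpha,\widehat\tau)\le C_1\lambda_n\,|(\widehat\alpha-\alpha_0)_J|_1+C_1\omega_n^{2} s$.

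Next I would localize by convexity so that the margin condition applies. If $(\widehat\beta,\widehat\theta)$ is not in $\tilde{\mathcal B}(\beta_0,r^\ast,\widehat\tau)\times\tilde{\mathcal G}(\beta_0,r^\ast,\widehat\tau)$, replace $\widehat\alpha$ by $\tilde\alpha=\alpha_0+t(\widehat\alpha-\alpha_0)$ with $t\in(0,1)$ placing $\tilde\alpha$ on the boundary of that product ball; since $S_n(\cdot,\widehat\tau)$ is convex, $\tilde\alpha$ inherits the basic inequality, hence the cone bound and the excess-risk bound. On the ball, Assumption \ref{ass2.2-delta0} gives $R(\tilde\alpha,\widehat\tau)\ge\eta^\ast\,(\tilde\alpha-\alpha_0)^T\mathbb{E}[X(\widehat\tau)X(\widehat\tau)^T](\tilde\alpha-\alpha_0)$, because that quadratic form equals $\mathbb{E}[(X^T(\tilde\beta-\beta_0))^{2}1\{Q\le\widehat\tau\}]+\mathbb{E}[(X^T(\tilde\theta-\beta_0))^{2}1\{Q>\widehat\tau\}]$; Assumption \ref{ass2.7-delta0} then lower-bounds it by $\eta^\ast\phi s^{-1}|(\tilde\alpha-\alpha_0)_J|_1^2$. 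Combining with $R(\tilde\alpha,\widehat\tau)\le C_1\lambda_n|(\tilde\alpha-\alpha_0)_J|_1+C_1\omega_n^{2} s$ produces the scalar quadratic inequality $\eta^\ast\phi s^{-1}x^2\le C_1\lambda_n x+C_1\omega_n^{2} s$ in $x=|(\tilde\alpha-\alpha_0)_J|_1$, so $x=O_P(s\lambda_n)=O_P(\omega_n s)$ and, through the cone inequality, $|\tilde\alpha-\alpha_0|_1=O_P(\omega_n s)$. Since $\omega_n s=o(1)$, the prediction-norm radius of $\tilde\alpha$ is then $o(1)<r^\ast$ for $n$ large, contradicting its lying on the boundary; hence no truncation was needed, $\widehat\alpha$ itself lay in the ball, $|\widehat\alpha-\alpha_0|_1=O_P(\omega_n s)$, and feeding this back into the excess-risk bound gives $R(\widehat\alpha,\widehat\tau)=O_P(\lambda_n\cdot\omega_n s)=O_P(\omega_n^{2} s)$, which is the assertion.

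The main obstacle is the uniform-in-$\tau$ empirical-process bound with a \emph{non-differentiable}, high-dimensional loss: one controls the supremum of a centered process over a class indexed simultaneously by the $2p$-dimensional $\alpha$ and by the threshold $\tau$ entering through $1\{Q>\tau\}$. Symmetrization and contraction reduce matters to linear Rademacher sums $n^{-1}\sum_i\varepsilon_i X_{ij}(\tau)$, but these must still be handled uniformly over $j\le 2p$ \emph{and} over $\tau$, and it is precisely the interaction between the Bernstein tails of $X_{ij}$ and the $\log n$-entropy of $\{1\{Q>\tau\}\}$ that inflates the usual $\sqrt{\log p/n}$ Lasso rate to $\omega_n$. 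A secondary but essential point is that, because $\rho(y,\cdot)$ is only convex — not strongly convex — and the margin condition (Assumption \ref{ass2.2-delta0}) is only local, with radius $r^\ast$, the convexity-based truncation in the third step is not cosmetic: it is what drives the estimator into the region where the quadratic lower bound on the excess risk is available.
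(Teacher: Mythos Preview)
Your proposal is correct and uses the same essential ingredients as the paper: the simplified basic inequality obtained by comparing $(\widehat\alpha,\widehat\tau)$ to $(\alpha_0,\widehat\tau)$ (exploiting $X(\tau)^T\alpha_0=X^T\beta_0$ for all $\tau$), the uniform-in-$\tau$ empirical-process bound (the paper's Lemma \ref{lem-emp}), the cone restriction, the compatibility condition (Assumption \ref{ass2.7-delta0}), and the quadratic margin (Assumption \ref{ass2.2-delta0}). The rates and the constants line up.

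The one genuine difference is the localization step. The paper first invokes risk consistency (Theorem \ref{l2.1}) and then, via a convexity argument on the \emph{population} excess risk (see Step~2 in the proof of Theorem \ref{th2.2}, which extends the quadratic margin to a linear lower bound outside the ball), concludes directly that $\widehat\beta\in\tilde{\mathcal B}(\beta_0,r,\widehat\tau)$ and $\widehat\theta\in\tilde{\mathcal G}(\beta_0,r,\widehat\tau)$; only then is the compatibility--margin machinery applied. You instead use convexity of the \emph{empirical} penalized loss $S_n(\cdot,\widehat\tau)$ to truncate $\widehat\alpha$ onto the boundary of the prediction ball, run the compatibility--margin argument on the truncated point, and obtain a contradiction from $\omega_n s=o(1)$. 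Both routes are standard and valid; yours is a bit more self-contained (it does not call Theorem \ref{l2.1} as a separate lemma), while the paper's has the advantage that once risk consistency is on record, the localization is a one-line consequence and the remaining Steps~2--3 are a clean reprise of the usual oracle-inequality display.
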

The results obtained in Theorem \ref{th2.3-delta0} combined with those obtained in Theorem \ref{th2.3} imply that 
the  Lasso estimator performs equally well in terms of both the $\ell_1$ loss for $\widehat\alpha$ and the excess risk, regardless of the existence of the threshold effect. This type of the advantage of the $\ell_1$-penalized estimator was established in \cite{lee2012lasso} for the Gaussian mean regression model with   a deterministic design.  

\begin{remark}
Note that when  the model does not have any structural change,  our rate of convergence is slightly slower than the usual $\ell_1$-penalized regression for M-estimation (e.g., \cite{geer} and \cite{negahban2012}). The additional factor $(\log p)(\log n)$ is due to the fact that $\delta_0=0$ is  unknown, and arises from  technical arguments that bound   empirical processes uniformly over $\tau$. 
 \end{remark}

\section{Theoretical Properties of the SCAD Estimator}\label{sec:oracle-inference}

\subsection{Oracle Properties of $\widetilde{\alpha}$}

While we allow the  loss function $ 
\rho(y, t)$  to be non-differentiable such that models like quantile regressions can be included, %and this extends  the existing sparse modeling (e.g., \cite{negahban2012}) to a more general non-differentiable M-estimation framework.
it is often the case that the differentiability holds after expectations are taken. We require that $\mathbb{E}[ \rho(Y, X(\tau)^T\alpha)]$ be differentiable with respect to $%
\alpha$, and define 
\begin{equation*}
m_j(\tau,\alpha) \equiv \frac{\partial \mathbb{E}[  \rho(Y, X(\tau)^T\alpha)]}{\partial\alpha_j}%
,\quad m(\tau,\alpha) \equiv (m_1(\tau,\alpha),...,m_{2p}(\tau,\alpha))^T.
\end{equation*}
Also, let $m_J(\tau,\alpha) \equiv (m_j(\tau,\alpha): j\in J(\alpha_0))$.
Assume in the section that $\alpha_0$ is in the interior of the parameter space $\mathcal{A}$. Hence, we have that $m(\tau_0,\alpha _{0})=0$. 

When the assumptions of Section \ref{sec:rate-identifiable} are satisfied so that $\tau_0$ is identifiable, we impose the following conditions. 
\begin{assum}[Conditions for the population objective function]
\label{ass3.2} $\mathbb{E}[ \rho(Y, X(\tau)^T\alpha)]$ is three times continuously differentiable with respect
to $\alpha$, and there are constants $c_1, c_2, L>0$  and a neighborhood $\mathcal{T}_0$ of $\tau_0$ such that the following conditions hold:
  for all large $n$ and all $%
\tau\in\mathcal{T}_0$,
\begin{enumerate}[label=(\roman*)]
\item\label{ass3.2:itm1}
 There is $M_n>0$, which  may depend on the sample size, such that 
\begin{equation*}
\max_{j\leq 2p}\left|m_j(\tau,\alpha_0)-m_j(\tau_0,\alpha_0)\right|<M_n{%
|\tau-\tau_0|}.
\end{equation*}
\item\label{ass3.2:itm2}
There is $r>0$ such that for all   $\beta\in\mathcal{B}(\beta_0, r)$, $\theta\in\mathcal{G}(\theta_0, r)$,   $\alpha=(\beta^T,\theta^T-\beta^T)^T$ satisfies:
\begin{equation*}
\max_{j\leq 2p}
\sup_{\tau\in\mathcal{T}_0}\left|m_j(\tau,\alpha)-m_j(\tau,\alpha_0)\right|<L
\left|\alpha-\alpha_0\right|_1.
\end{equation*}
\item\label{ass3.2:itm3}   
$\alpha_0$ is in the interior of the parameter space $\mathcal{A}$, and 
 \begin{equation*}
\inf_{\tau\in\mathcal{T}_0} \lambda_{\min} \left( \frac{\partial^2 \mathbb{E} [\rho(Y,
X_J(\tau)^T\alpha_{0J})]}{\partial\alpha_J\partial\alpha_J^T} \right) >c_1.
\end{equation*}
\begin{equation*}
\sup_{\left|\alpha_J-\alpha_{0J}\right|_1<c_2,}\sup_{\tau\in\mathcal{T}_0}%
\max_{i,j,k\in J} \left| \frac{\partial^3\mathbb{E} [\rho(Y, X_J(\tau)^T\alpha_J)]}{%
\partial\alpha_i\partial\alpha_j\partial\alpha_k} \right| < L.
\end{equation*}
\end{enumerate} 
\end{assum}

The score-condition in the population level is expressed by $m(\tau_0,\alpha_0)=0$ 
since $\alpha_0$ is in the interior of  $\mathcal{A}$ by condition \ref{ass3.2:itm3}. 
Conditions \ref{ass3.2:itm1} and \ref{ass3.2:itm2} regulate the continuity of the score $m(\tau,\alpha)$,
and 
Condition \ref{ass3.2:itm3} assumes the higher-order differentiability  of 
the loss function $\rho(y,t)$.
Condition \ref{ass3.2:itm1} requires the Lipschitz continuity of the score function with
respect to the threshold. The Lipschitz constant may grow with $n$, since
it is assumed uniformly over $j\leq 2p$. In many interesting examples being
considered, $M_n$ in fact grows  slowly; as a result, it does not affect the
asymptotic behavior of $\widetilde\alpha$. For the
logistic and quantile regression models, we will show that $M_n=C s^{1/2}$ for
some constant $C>0.$
 Condition \ref{ass3.2:itm2} requires the local equi-continuity at $\alpha_0$ in the 
$\ell_1$ norm of the class 
\begin{equation*}
\{m_j(\tau,\alpha): \tau\in\mathcal{T}_0, j\leq 2p\}.
\end{equation*}
%  Lastly, because the loss function $\frac{1}{n}\sum_{i=1}^n\rho(Y_i, X_i(\tau)^T\alpha)$  may not be differentiable with respect to $\alpha$,    the  first order condition is essentially analyzed on $E\rho(Y, X(\tau)^T\alpha)$. Therefore, technically we need to characterize the empirical process $$
%\nu_n(\alpha,\tau)=\frac{1}{n}\sum_{i=1}^n\rho(Y_i, X_i(\tau)^T\alpha)-E\rho(Y, X(\tau)^T\alpha).
%$$
%uniformly  for $\alpha$ inside a neighborhood of $\alpha_0.$ Our basic technical tool is the symmetrization theorem (van de Vaart and Wellner 1996), which requires the  uniform  Lipschitz continuity for $\rho(y, t)$ as stated in Condition (iv).
In Section 4, we shall verify Assumption \ref{ass3.2}  in both the logistic and quantile regression models.
Under the foregoing assumptions, the following two theorems establish the
oracle properties of the adaptively weighted-$\ell_1$-regularized estimators.

We partition $\widetilde{\alpha }=(\widetilde{\alpha }_{J},\widetilde{%
\alpha }_{J^{c}})$ so that $\widetilde{\alpha }_{J}=(\widetilde{\alpha }%
_{j}:j\in J(\alpha_0))$ and $\widetilde{\alpha }_{J^{c}}= ( \widetilde{\alpha }%
_{j}:j\notin J(\alpha_0) )$. Note that $\widetilde\alpha_J$ consists of the estimators of $\beta_{0J}$ and $\delta_{0J}$, whereas  $\widetilde\alpha_{J^c}$ consists of the estimators of all the zero components of $\beta_0$ and $\delta_0.$
Let $\alpha_{0J}^{(j)}$ denote the $j$-th element of $\alpha_{0J}$, where $j \in J(\alpha_0)$. 

\begin{thm}[Oracle properties]
\label{th3.1}  Suppose that $s^4(\log p)^3(\log n)^3+sM_n^4(\log p)^6(\log n)^6=o(n)$,   Assumptions  \ref{a:setting}-\ref{ass2.7}, \ref{ass3.2} are satisfied, 
 and 
\begin{equation*}
\omega_n+s\sqrt{\frac{\log s}{n}}+M_{n}\omega
_{n}^{2}s \log n\ll \mu _{n} \ll \min_{j\in J(\alpha_0)}|\alpha _{0J}^{(j)}|.
\end{equation*}%
Then
\begin{equation*}
\left\vert \widetilde{\alpha }_{J}-\alpha _{0J}\right\vert _{2}=O_P \left( \sqrt{%
\frac{s\log s}{n}} \; \right),\quad \left\vert \widetilde{\alpha }_{J}-\alpha _{0J}\right\vert _{1}=O_P \left( s\sqrt{%
\frac{\log s}{n}} \; \right)
\end{equation*}%
and 
\begin{equation*}
P(\widetilde{\alpha }_{J^{c}}=0)\rightarrow 1.
\end{equation*}
\end{thm}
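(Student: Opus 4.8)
The plan is to show that the adaptively weighted $\ell_1$-penalized estimator $\widetilde\alpha$ behaves like the infeasible oracle estimator that knows $J(\alpha_0)$. The strategy follows the two-step ``localization then oracle'' argument familiar from folded-concave penalization (cf. \cite{fan2014strong, zouli}), but must be carried out with the threshold $\widehat\tau$ substituted for $\tau_0$ throughout. First I would establish a preliminary consistency bound: using the convexity of $\widetilde S_n(\alpha)$ and the basic inequality $\widetilde S_n(\widetilde\alpha)\leq\widetilde S_n(\alpha_0)$, together with the first-step rate $|\widehat\alpha-\alpha_0|_1=O_P(\omega_n s)$ from Theorem \ref{th2.3} and the super-consistency $|\widehat\tau-\tau_0|=O_P(\omega_n^2 s/\Delta_0)$, I would show $|\widetilde\alpha-\alpha_0|_1=o_P(\mu_n)$. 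The condition $\omega_n+s\sqrt{\log s/n}+M_n\omega_n^2 s\log n\ll\mu_n$ is exactly what is needed here: the term $M_n\omega_n^2 s\log n$ controls the error from replacing $\tau_0$ with $\widehat\tau$ in the empirical score, via Assumption \ref{ass3.2}\ref{ass3.2:itm1} and the stochastic continuity of $\nu_n(\alpha,\tau)$ used in Theorem \ref{l2.1}.

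Second, with $|\widetilde\alpha-\alpha_0|_1=o_P(\mu_n)\ll\min_{j\in J(\alpha_0)}|\alpha_{0J}^{(j)}|$, every nonzero true coordinate of $\widehat\alpha$ satisfies $|\widehat\alpha_j|>a\mu_n$ w.p.a.1, so the SCAD weights vanish on $J(\alpha_0)$: $w_j=0$ for $j\in J(\alpha_0)$; and for $j\notin J(\alpha_0)$, since $|\widehat\alpha_j|=o_P(\mu_n)$ w.p.a.1 (from the first-step rate, $|\widehat\alpha_j|\leq|\widehat\alpha-\alpha_0|_1=O_P(\omega_n s)=o_P(\mu_n)$), we get $w_j=1$. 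Thus w.p.a.1 the penalty in $\widetilde S_n$ acts only on the zero-coordinates of $\alpha_0$ with full weight $\mu_n$. Then I would run the standard KKT/sub-gradient argument: restricted to the event where the weights are as above, the minimizer $\widetilde\alpha$ must satisfy $\widetilde\alpha_{J^c}=0$ provided the $J^c$-components of the empirical score at $(\widehat\tau,\widetilde\alpha_J,0)$ are dominated by $\mu_n$ in sup-norm; this sup-norm is $O_P(\omega_n+M_n|\widehat\tau-\tau_0|+L|\widetilde\alpha_J-\alpha_{0J}|_1)$ by Assumption \ref{ass3.2}\ref{ass3.2:itm1}-\ref{ass3.2:itm2} plus a maximal inequality for the $2p$ score coordinates, hence $o_P(\mu_n)$ under the stated rate conditions, giving $P(\widetilde\alpha_{J^c}=0)\to1$.

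Third, on this event $\widetilde\alpha_J$ solves the unpenalized low-dimensional M-estimation problem $\min_{\alpha_J}\frac1n\sum_i\rho(Y_i,X_{iJ}(\widehat\tau)^T\alpha_J)$. A Taylor expansion of the population score around $(\tau_0,\alpha_{0J})$, combined with $m_J(\tau_0,\alpha_{0J})=0$, Assumption \ref{ass3.2}\ref{ass3.2:itm3} (the Hessian is bounded below by $c_1$ and the third derivatives are bounded uniformly in $\tau\in\mathcal T_0$), and the empirical-process bound on the centered score (of order $\sqrt{s\log s/n}$ by a chaining argument over the $s$-dimensional parameter), yields $|\widetilde\alpha_J-\alpha_{0J}|_2=O_P(\sqrt{s\log s/n}+M_n|\widehat\tau-\tau_0|)$; the condition $sM_n^4(\log p)^6(\log n)^6=o(n)$ ensures $M_n|\widehat\tau-\tau_0|=O_P(M_n\omega_n^2 s/\Delta_0)$ is dominated by $\sqrt{s\log s/n}$, so the threshold-estimation error is asymptotically negligible — this is where the super-consistency of $\widehat\tau$ is essential. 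Converting the $\ell_2$ bound to $\ell_1$ via $|\cdot|_1\leq\sqrt s|\cdot|_2$ gives the stated $O_P(s\sqrt{\log s/n})$ rate.

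The main obstacle is the second step: carefully controlling the $\sup_{j\leq 2p}$ of the empirical score components at $\widehat\tau$ rather than at the fixed $\tau_0$. One must show that the class $\{\rho'(Y,X(\tau)^T\alpha)X_j(\tau):\tau\in\mathcal T_0,|\alpha-\alpha_0|_1\leq\epsilon,j\leq 2p\}$ is well-behaved enough that a uniform-in-$\tau$ maximal inequality holds, absorbing the randomness of $\widehat\tau$; this requires combining the Lipschitz property of $\rho$ (Assumption \ref{a:obj-ftn}\ref{a:obj-ftn:itm1}), the moment bounds on $X_{ij}$ (Assumption \ref{a:setting}\ref{a:setting:itm1}), the smoothness of $m_j(\cdot,\cdot)$ (Assumption \ref{ass3.2}\ref{ass3.2:itm1}-\ref{ass3.2:itm2}), and the bounded density of $Q$ (Assumption \ref{a:dist-Q}), and is the technically delicate part where the extra $(\log p)(\log n)$ and $M_n$ factors in the rate conditions are consumed.
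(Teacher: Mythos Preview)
Your ingredients are right, but the ordering is muddled and differs from the paper in a way that creates a circularity. The weights $w_j$ are computed from the \emph{first-step} Lasso estimator $\widehat\alpha$, not from $\widetilde\alpha$; so the event $\{w_j=0\text{ for }j\in J,\ w_j=1\text{ for }j\notin J\}$ follows directly from $|\widehat\alpha-\alpha_0|_1=O_P(\omega_n s)$ and the signal condition $\mu_n\ll\min_{j\in J}|\alpha_{0J}^{(j)}|$, with no need for your Step~1 about $\widetilde\alpha$. More seriously, in Step~2 you propose to check the $J^c$-score at $(\widehat\tau,\widetilde\alpha_J,0)$ to conclude $\widetilde\alpha_{J^c}=0$---but that presumes the conclusion. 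The clean way out is to \emph{construct} the oracle-restricted minimizer $\bar\alpha_J=\arg\min_{\alpha_J}\widetilde S_n((\alpha_J,0))$, verify it satisfies the KKT/local-minimum conditions of the full problem, and then use convexity; this is precisely the paper's route. The paper does it in two lemmas: first a direct localization argument (Taylor-expand the population loss at $(\widehat\tau,\alpha_{0J})$ using Assumption~\ref{ass3.2}\ref{ass3.2:itm1},\ref{ass3.2:itm3}, bound the empirical fluctuation via symmetrization/contraction, use $w_j=0$ on $J$) to get $|\bar\alpha_J-\alpha_{0J}|_2=O_P(\sqrt{s\log s/n})$; second, a comparison showing that in a neighborhood of $(\bar\alpha_J,0)$ any $\alpha$ with $\alpha_{J^c}\neq0$ has $\widetilde S_n(\alpha_J,0)<\widetilde S_n(\alpha)$, because the population $J^c$-score $m_j(\widehat\tau,\cdot)$ is $o_P(\mu_n)$ (via Assumption~\ref{ass3.2}\ref{ass3.2:itm1}--\ref{ass3.2:itm2} plus $|\widehat\tau-\tau_0|=O_P(\omega_n^2 s\log n)$) while the penalty contributes $\mu_n$ per unit of $|\alpha_{J^c}|_1$. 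Convexity then forces $\widetilde\alpha=(\bar\alpha_J,0)$ w.p.a.1.

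One more concrete issue: your final paragraph invokes the empirical-process class $\{\rho'(Y,X(\tau)^T\alpha)X_j(\tau)\}$, but $\rho$ is allowed to be non-differentiable (quantile loss). The paper never differentiates $\rho$ at the sample level; all derivatives live in the \emph{population} score $m_j(\tau,\alpha)=\partial_{\alpha_j}\mathbb E[\rho(Y,X(\tau)^T\alpha)]$, and the stochastic part is handled entirely through $\nu_n(\alpha,\tau)$ via symmetrization and the contraction inequality applied to the Lipschitz map $t\mapsto\rho(Y_i,t)$. Reworking your KKT argument in this language---i.e., bounding $|m_j(\widehat\tau,h)|$ for $j\notin J$ and $h$ near $(\alpha_{0J},0)$, plus a uniform bound on $|\nu_n(\alpha_J,0,\widehat\tau)-\nu_n(\alpha,\widehat\tau)|$---recovers exactly the paper's Lemma~\ref{la.2}.
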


%This theorem covers both the cases of $\delta _{0}=0$ and $\delta _{0}\neq 0. $ In particular, when $\delta _{0}=0$, $\tau _{0}$ is not identifiable,and hence $\alpha _{0J}=\beta _{0J}$. This theorem then implies $\widehat{%\delta }=0$ with probability approaching one. Also in this case, Conditions (i)  in Assumption \ref{ass3.2} is  trivially satisfied because $m(\tau,\alpha _{0})$ is free of $\tau .$

The required condition $s^4(\log p)^3(\log n)^3+sM_n^4(\log p)^6(\log n)^6=o(n)$ is 
the price paid for not knowing $\tau_0$. Under this condition, the effect of estimating $\tau_0$ is negligible. 
Note that $\omega_n \ll \mu_n$. 
  It is known that the variable selection consistency often requires a larger tuning
parameter than does prediction (e.g., \cite{sun2012scaled}).  
Some  additional remarks are in order.

\begin{remark}
%Because $\min_{j\in J(\alpha_0)}|\widetilde\alpha_j|\geq \min_{j\in J(\alpha_0)}|\alpha_{0J}|-|\widetilde\alpha_J-\alpha_{0J}|_2$, hence $\min_{j\in J(\alpha_0)}|\widetilde\alpha_j|>0$ with probability approaching one. 
Variable selection consistency often requires the minimal signal to be clearly separated from zero. %Hence if we further assume  $\min_{j\in J(\alpha_0)}|\alpha_{0J}|>C\sqrt{s\log s/n}$ for some constant $C>0$, then $\widehat J=\{j: \widetilde\alpha_j\neq0\}$ satisfies: \begin{equation*} P(\widehat J=J(\alpha_0))\rightarrow1. \end{equation*}
 We note that  the required signal strength $\min_{j\in J(\alpha_0)}|\alpha _{0J}^{(j)}|$ is stronger than that of the existing literature without a structural change in the sparsity. This is natural, since with an unknown change point, the noise level is higher, and additional estimation errors coming from estimating $\tau_0$ needs to be taken into account. On the other hand, it is assuring  to see that our estimation consistency results achieved in Section \ref{sec:Lasso-theory} do not require this kind of conditions.  
\end{remark}

\begin{remark}
We have achieved the fast  rates of convergence $O_P(s\sqrt{\log s/n})$ and $O_P(\sqrt{s\log s/n})$ in
the $\ell_1$ and $\ell_2$ distances, respectively. 
Compared to the convergence rate in Theorem \ref{th2.3}, we see
that after the consistent variable selection, the $\ell_1$ rate of convergence is 
improved, and the $\ell_2$ rate is slightly faster than the sparse minimax $\ell_2$ rate $O_P(\sqrt{%
s\log p/n})$ in, e.g., \cite{johnstone} and \cite{Rasku09}, which is a natural result. Intuitively,
%when the variable selection consistency is established, we have recovered the true set of relevant regressors with a high probability. Then estimation can be restricted on this oracle set, which becomes a classical low-dimensional problem. 
as  $\widehat J$ consistently recovers $J(\alpha_0)$,  the price $\sqrt{\log p}$ for not
knowing $J(\alpha_0)$ can be avoided.
\end{remark}

We now consider the case when  $\delta_0=0$. In this case, $\tau_0$ is not identifiable, and there is actually  no structural change in the sparsity. 
If $\alpha_0$ is in the interior of $\mathcal{A}$, then
 $m(\tau,\alpha _{0})=0$ for all $\tau\in\mathcal{T}$, and  Assumption \ref{ass3.2} is revised as follows.

\begin{assum}[Conditions when $\delta_0 = 0$]
\label{ass4.2} $\mathbb{E}[ \rho(Y, X(\tau)^T\alpha)]$ is three times differentiable with respect
to $\alpha$, and there are constants $c_1, c_2, L>0$    such that when $\delta_0=0$, 
  for all large $n$, \\
(i) There is $r>0$ such that for all   $\beta\in\mathcal{B}(\beta_0, r)$, $\theta\in\mathcal{G}(\theta_0, r)$,   $\alpha=(\beta^T,\theta^T-\beta^T)^T$ satisfies:
\begin{equation*}
\max_{j\leq 2p}
\sup_{\tau\in\mathcal{T}}\left|m_j(\tau,\alpha)-m_j(\tau,\alpha_0)\right|<L
\left|\alpha-\alpha_0\right|_1.
\end{equation*}
(ii)  $\alpha_0$ is in the interior of the parameter space $\mathcal{A}$, and 
 \begin{equation*}
 \lambda_{\min} \left( \frac{\partial^2 \mathbb{E} [\rho(Y,
X_{J(\beta_0)}^T\beta_{0J})]}{\partial\beta_J\partial\beta_J^T} \right) >c_1.
\end{equation*}
\begin{equation*}
\sup_{\left|\alpha_J-\alpha_{0J}\right|_1<c_2,} 
\max_{i,j,k\in J(\beta_0)} \left| \frac{\partial^3\mathbb{E} [\rho(Y, X_{J(\beta_0)}^T\beta_J)]}{%
\partial\beta_i\partial\beta_j\partial\beta_k} \right| < L.
\end{equation*}
 \end{assum}

Below we write $\widetilde\alpha=(\widetilde\beta^T,\widetilde\delta^T)^T$, and  write $\widetilde{\beta}_{J}=(\widetilde\beta_j: j\in J(\beta_0))$, $\widetilde{\beta}_{J^c}=(\widetilde\beta_j: j\notin J(\beta_0))$.

\begin{thm}[Oracle properties when $\delta_0 = 0$]
\label{th4.2} Consider the case where the true $\delta_0=0$. Suppose $s^{4}(\log s)=o(n)$,  Assumptions \ref{a:setting} \ref{a:setting:itm1}-\ref{a:setting:itm3},
\ref{a:dist-Q} \ref{a:dist-Q:itm1}, 
\ref{a:obj-ftn} \ref{a:obj-ftn:itm1},
\ref{a:threshold} \ref{a:threshold:itm1}, \ref{ass2.2-delta0},  \ref{ass2.7-delta0} and \ref{ass4.2} hold, 
 and 
\begin{equation*}
\omega_n+s\sqrt{\frac{\log s}{n}}+M_{n}\omega
_{n}^{2}s \log n\ll \mu _{n}\ll \min_{j\in J(\alpha_0)}|\alpha _{0J}^{(j)}|.
\end{equation*}%
Then
\begin{equation*}
\left\vert \widetilde{\beta }_{J}-\beta _{0J}\right\vert _{2}=O_P \left( \sqrt{%
\frac{s\log s}{n}} \; \right),\quad \left\vert \widetilde{\beta }_{J}-\beta _{0J}\right\vert _{1}=O_P \left( s\sqrt{%
\frac{\log s}{n}} \; \right),
\end{equation*}%
and 
\begin{equation*}
P(\widetilde{\beta }_{J^{c}}=0)\rightarrow 1,\quad P(\widetilde \delta=0)\rightarrow1.
\end{equation*}
\end{thm}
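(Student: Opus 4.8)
The plan is to follow the same skeleton as the proof of Theorem \ref{th3.1}, but exploiting the simplifications that arise when $\delta_0 = 0$: in that case $\tau_0$ is non-identifiable, so all population-level conditions (Assumptions \ref{ass2.2-delta0}, \ref{ass2.7-delta0}, \ref{ass4.2}) are already stated uniformly over $\tau\in\mathcal{T}$, and we never need to control $\widehat\tau - \tau_0$. First I would invoke Theorem \ref{th2.3-delta0}: under the stated assumptions and with $\lambda_n = C\omega_n$, the first-step Lasso estimator satisfies $|\widehat\alpha - \alpha_0|_1 = O_P(\omega_n s)$. Since $\mu_n \gg \omega_n s$ (from $\mu_n \gg M_n\omega_n^2 s\log n \geq \omega_n s$, using $M_n \geq 1$ and $\omega_n \log n \to $ bounded below appropriately — more simply $\mu_n \gg s\sqrt{\log s/n}$ already dominates when combined with $s^4\log s = o(n)$; one checks $\omega_n s = o(\mu_n)$ directly), the event $\max_{j\notin J(\alpha_0)}|\widehat\alpha_j| < \mu_n$ and $\min_{j\in J(\alpha_0)}|\widehat\alpha_j| > a\mu_n$ holds w.p.a.1, because the true signal satisfies $\min_{j\in J(\alpha_0)}|\alpha_{0J}^{(j)}| \gg \mu_n$. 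On that event the SCAD weights collapse: $w_j = 1$ for all $j\notin J(\alpha_0)$ and $w_j = 0$ for all $j\in J(\alpha_0)$. In particular the weights on every component of $\widehat\delta$ equal $1$ (since $\delta_0 = 0$ forces $J(\delta_0) = \emptyset \subset J(\alpha_0)^c$), which is exactly what will drive $P(\widetilde\delta = 0)\to 1$.

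Next I would analyze the second-step problem \eqref{oracle-est-alpha}, which on the good event becomes $\widetilde S_n(\alpha) = \frac1n\sum_i \rho(Y_i, X_i(\widehat\tau)^T\alpha) + \mu_n\sum_{j\notin J(\alpha_0)} D_j(\widehat\tau)|\alpha_j|$ — a convex program with penalty only on the off-support coordinates. The standard argument is: (a) show the restricted (oracle) estimator $\widetilde\alpha^{or}$, defined by minimizing $\frac1n\sum_i\rho(Y_i, X_{iJ}(\widehat\tau)^T\alpha_J)$ over $\alpha_J$ with $\alpha_{J^c} = 0$, satisfies $|\widetilde\alpha^{or}_J - \alpha_{0J}|_2 = O_P(\sqrt{s\log s/n})$; and (b) verify the KKT/first-order conditions so that $\widetilde\alpha^{or}$ (extended by zeros) is in fact the global minimizer of $\widetilde S_n$, which gives $P(\widetilde\alpha_{J^c} = 0)\to 1$ and hence $\widetilde\beta_{J^c} = 0$, $\widetilde\delta = 0$ w.p.a.1. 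For step (a), I would Taylor-expand $\mathbb{E}[\rho(Y, X_J(\tau)^T\alpha_J)]$ around $\alpha_{0J}$ using the three-times-differentiability and the eigenvalue lower bound $c_1$ in Assumption \ref{ass4.2}(ii), control the centered empirical-process term $\sup_{\tau}\sup_{|\alpha_J - \alpha_{0J}|_1 \leq \text{const}}|\nu_n(\alpha,\tau)|$ restricted to the $s$-dimensional support (a maximal-inequality / chaining bound over an $s$-dimensional set gives the $\sqrt{s\log s/n}$ rate, with the $\log n$ from the union over $\tau\in\widetilde{\mathcal{T}}_n$ absorbed since $s^4\log s = o(n)$), and invoke Assumption \ref{ass4.2}(i) to bound the score discrepancy $|m_j(\tau,\alpha) - m_j(\tau,\alpha_0)| < L|\alpha - \alpha_0|_1$ uniformly in $\tau$. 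Crucially, because all these bounds are uniform over $\tau\in\mathcal{T}$ when $\delta_0 = 0$, the precise value of $\widehat\tau$ is irrelevant — we just plug it in. For step (b), the KKT condition for the zeroed coordinates reads $|\frac1n\sum_i \partial_2\rho(Y_i, X_{iJ}(\widehat\tau)^T\widetilde\alpha_J)X_{ij}(\widehat\tau)| \leq \mu_n D_j(\widehat\tau)$ for $j\notin J(\alpha_0)$; I would bound the left side by $O_P(\omega_n)$ uniformly in $j$ and $\tau$ (the same empirical-process bound that produced $\omega_n$ in Theorem \ref{l2.1}, plus the score-continuity of Assumption \ref{ass4.2}(i) applied at $\widetilde\alpha^{or}$), and since $\mu_n \gg \omega_n$ and $D_j(\widehat\tau) \geq \underline D > 0$ (Assumption \ref{a:setting}\ref{a:setting:itm3}), the inequality holds strictly w.p.a.1, certifying optimality.

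The main obstacle I anticipate is step (a)'s empirical-process control of the non-smooth loss uniformly over $\tau\in\widetilde{\mathcal{T}}_n$ while simultaneously getting the sharp $s$-dependence: one must show that restricting to the $s$-dimensional active block genuinely replaces the $\log p$ in $\omega_n$ by $\log s$, which requires a careful bracketing/covering argument for the class $\{(y,x,q)\mapsto \rho(y, x_J(\tau)^T\alpha_J) : \tau\in\mathcal{T}, |\alpha_J - \alpha_{0J}|_1\leq r\}$ combined with the Lipschitz property of $\rho$ (Assumption \ref{a:obj-ftn}\ref{a:obj-ftn:itm1}) and the moment condition (Assumption \ref{a:setting}\ref{a:setting:itm1}); the $(\log n)$ factor from the finite set $\widetilde{\mathcal{T}}_n$ is harmless under $s^4\log s = o(n)$ but must be tracked. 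A secondary subtlety is confirming that $\mu_n \gg \omega_n s$, i.e. that the lower bound imposed on $\mu_n$ in the hypothesis really dominates the first-step $\ell_1$ error, so that the weight-collapse event has probability tending to one; this is a short calculation using $M_n\geq 1$ (for logistic/quantile regression $M_n = Cs^{1/2}$) but it is the linchpin that reduces the whole theorem to the already-proved machinery.
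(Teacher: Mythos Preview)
Your proposal follows the same two-stage skeleton as the paper's combined proof of Theorems~\ref{th3.1} and~\ref{th4.2}: first establish the rate for the oracle-restricted minimizer (your step~(a) is the paper's Lemma~\ref{la.1}), then show the full second-step minimizer coincides with it. You also correctly isolate the key simplification when $\delta_0=0$---that $m_j(\tau,\alpha_0)=0$ for every $\tau\in\mathcal{T}$, so the value of $\widehat\tau$ is irrelevant and all population bounds already hold uniformly in $\tau$---which is exactly what the paper exploits.

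The one methodological difference is in step~(b). You propose checking KKT/subgradient conditions at the oracle point $(\widetilde\alpha_J^{or},0)$, whereas the paper compares objective values directly: Lemma~\ref{la.2} exhibits a neighborhood $\mathcal H$ of $(\bar\alpha_J,0)$ in which $\widetilde S_n(\alpha_J,0)<\widetilde S_n(\alpha)$ strictly for every $\alpha$ with $\alpha_{J^c}\neq 0$, and then invokes convexity to upgrade local to global minimality. Both routes are standard; the paper's sidesteps subdifferentials of the possibly non-smooth $\rho$, which is cleaner here, while your KKT approach would require interpreting $\partial_2\rho$ as a subgradient and bounding the corresponding empirical process.

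One caution on the weight collapse: your justification that $\mu_n\gg\omega_n s$ does not go through as written---neither $M_n\omega_n\log n\geq 1$ nor $s\sqrt{\log s/n}\geq \omega_n s$ holds in general, since $\omega_n\to 0$ and typically $\omega_n\geq\sqrt{\log s/n}$ when $p\geq s$. The paper, however, simply asserts $w_j=1$ for $j\notin J(\alpha_0)$ w.p.a.1 without further argument, so you are not missing anything relative to what is actually proved there.
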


Interestingly, Theorem \ref{th4.2}  demonstrates that when there is in fact no structural change in the sparsity,  our estimator for $\delta_0$ is  exactly zero   with  a high probability. 
Therefore,  the estimator  can also be used as a diagnostic tool to check  whether any structural  changes are present.

\subsection{Asymptotic Distribution for $\widetilde\alpha_J$ and $\widetilde{\tau}$}\label{sec:AN}

%\subsection{Establishing Asymptotic Independence between $\widetilde\alpha_J$ and $\widehat{\tau}$}

Thanks to the variable selection consistency established in the previous section, 
 it suffices to consider 
\begin{equation*}
\operatorname*{argmin}_{\alpha_{J}} \frac{1}{n}\sum_{i=1}^{n}\rho (Y_{i},X_{iJ}(%
\widehat{\tau})^{T}\alpha _{J}),
\end{equation*}
where $\alpha_J$ is a subvector of $\alpha$ projected on the   oracle space $J(\alpha_0)$.
Recall that by Theorems \ref{th2.3}  and \ref{th3.1}, we have that  
\begin{equation}\label{local-nbd}
\left\vert \widetilde{\alpha }_{J}-\alpha _{0J}\right\vert _{2}=O_P \left( \sqrt{%
\frac{s\log s}{n}} \; \right) \ \ \text{ and } \ \
|\widehat{\tau}-\tau _{0}|=O_P \left[ (\log p)^3 (\log n)^2 \frac{s}{n} \right].
\end{equation}
In view of \eqref{local-nbd}, 
define
$r_{n}\equiv\sqrt{n^{-1}s\log s}$ and $s_{n}\equiv n^{-1}s[\left(  \log p\right)^{3} (\log n)^2]$.
Let%
\[
Q_{n}^{\ast}\left(  \alpha_J,\tau\right)  \equiv\frac{1}{n}\sum_{i=1}^{n}\rho\left(
Y_{i},X_{iJ}(\tau)^{T}\alpha _{J}  \right)  ,
\]
where $\alpha_J \in\mathcal{A}_n \equiv \left\{  \alpha_J :\left\vert \alpha_J -\alpha_{0J}\right\vert _{2}\leq
Kr_{n}\right\}  \subset\mathbb{R}^{s}$ and $\tau\in\mathcal{T}_n \equiv\left\{ \tau: 
\left\vert \tau-\tau_{0}\right\vert \leq Ks_{n}\right\}  $ for some $K<\infty$, where  $K$ is a generic finite constant.

The following lemma is useful to establish that  $\alpha_{0}$ can be estimated  as if $\tau_0$ were known and vice versa.

\begin{lem}[Asymptotic Equivalence]\label{lem-AI}
Assume that $\frac{\partial}{\partial\alpha}E\left[
\rho\left(  Y,X^{T}\alpha\right)  |Q=t\right]  $ exists for all $t$ in a
neighborhood of $\tau_{0}$ and all its elements are continuous and bounded
below and above. 
Suppose that  %$s^4 (\log s)  (\log p)^3 (\log n)^3  =o\left(  n \right)$ and
$s^{5} (\log s)  (\log p)^6 (\log n)^4  =o\left(  n \right)$. Then 
\[
\sup_{\alpha_J \in \mathcal{A}_n,\tau \in \mathcal{T}_n}\left\vert \left\{ Q_{n}^{\ast}\left(  \alpha_J,\tau\right)
-Q_{n}^{\ast}\left(  \alpha_J,\tau_{0}\right) \right\} - \left\{  Q_{n}^{\ast}\left(
\alpha_{0J},\tau\right)  -Q_{n}^{\ast}\left(  \alpha_{0J},\tau_{0}\right)
\right\}  \right\vert =o_{P}\left(  n^{-1} \right).
\]
\end{lem}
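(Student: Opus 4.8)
### Proof Strategy for Lemma \ref{lem-AI}

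The plan is to show that the ``interaction'' between the $\alpha_J$-direction and the $\tau$-direction in the empirical objective is asymptotically negligible on the shrinking neighborhoods $\mathcal{A}_n \times \mathcal{T}_n$. I would write
\begin{equation*}
D_n(\alpha_J,\tau) \equiv \left[Q_n^\ast(\alpha_J,\tau) - Q_n^\ast(\alpha_J,\tau_0)\right] - \left[Q_n^\ast(\alpha_{0J},\tau) - Q_n^\ast(\alpha_{0J},\tau_0)\right],
\end{equation*}
note that $D_n(\alpha_{0J},\tau)=0$, and decompose $D_n$ into a centered empirical-process part and a population (drift) part:
\begin{equation*}
D_n(\alpha_J,\tau) = \left[D_n(\alpha_J,\tau) - \mathbb{E} D_n(\alpha_J,\tau)\right] + \mathbb{E} D_n(\alpha_J,\tau).
\end{equation*}
Each term is controlled separately, uniformly over the two shrinking sets.

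For the population part, I would Taylor-expand $\alpha \mapsto \mathbb{E}[\rho(Y, X_J(\tau)^T\alpha_J)]$ around $\alpha_{0J}$, exploiting the higher-order smoothness granted by Assumption \ref{ass3.2}\ref{ass3.2:itm3} and the Lipschitz-in-$\tau$ control of the score from Assumption \ref{ass3.2}\ref{ass3.2:itm1}. The key point is that $\mathbb{E} D_n(\alpha_J,\tau)$ equals a difference of gradients integrated against $\alpha_J - \alpha_{0J}$, and the two gradients are evaluated at $\tau$ versus $\tau_0$; the hypothesis that $\partial_\alpha \mathbb{E}[\rho(Y,X^T\alpha)|Q=t]$ exists with bounded, continuous elements near $\tau_0$ lets me bound this by $C\, |\tau - \tau_0| \cdot |\alpha_J - \alpha_{0J}|_1 \leq C\, s_n \cdot \sqrt{s}\, r_n$, which is $O(s_n \sqrt{s}\, r_n)$. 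Plugging in $r_n = \sqrt{n^{-1}s\log s}$ and $s_n = n^{-1}s(\log p)^3(\log n)^2$ gives a bound of order $n^{-3/2} s^{2}\sqrt{s\log s}\,(\log p)^3(\log n)^2$, and the stated rate condition $s^5(\log s)(\log p)^6(\log n)^4 = o(n)$ is exactly what makes this $o(n^{-1})$.

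For the empirical-process part I would set $g_{\alpha_J,\tau}(Y,X,Q) \equiv \rho(Y,X_J(\tau)^T\alpha_J) - \rho(Y,X_J(\tau_0)^T\alpha_J) - \rho(Y,X_J(\tau_0)^T\alpha_{0J}) + $ ... wait, more cleanly: the increment whose empirical average I must bound is
\begin{equation*}
h_{\alpha_J,\tau}(Y,X,Q) \equiv \left[\rho(Y,X_J(\tau)^T\alpha_J) - \rho(Y,X_J(\tau_0)^T\alpha_J)\right] - \left[\rho(Y,X_J(\tau)^T\alpha_{0J}) - \rho(Y,X_J(\tau_0)^T\alpha_{0J})\right].
\end{equation*}
Using the Lipschitz property of $\rho(y,\cdot)$ from Assumption \ref{a:obj-ftn}\ref{a:obj-ftn:itm1}, $|h_{\alpha_J,\tau}| \leq 2L\, |X_{iJ}^T(\alpha_J - \alpha_{0J})|\, 1\{\tau_0 \wedge \tau < Q \leq \tau_0 \vee \tau\}$, so the envelope shrinks with both $|\alpha_J - \alpha_{0J}|$ and with $|\tau - \tau_0|$ (through the probability of the wedge event, which is $O(|\tau-\tau_0|)$ by Assumption \ref{a:dist-Q}\ref{a:dist-Q:itm1}). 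I would then invoke a maximal inequality / Bernstein-type bound for the supremum of $|(\mathbb{P}_n - \mathbb{P})h_{\alpha_J,\tau}|$ over the $s$-dimensional ball $\mathcal{A}_n$ and the interval $\mathcal{T}_n$: the class is indexed by a finite-dimensional bounded set and an interval (VC-type, covering-number bound $\lesssim s \log(\cdot)$), the envelope has $L^2(\mathbb{P})$-norm of order $r_n \sqrt{s_n}$, so the uniform bound is of order $\sqrt{(s\log s)/n}\cdot r_n\sqrt{s_n}$ up to logs, and again the rate condition delivers $o_P(n^{-1})$. Technically I would reuse the peeling/chaining apparatus already developed in the proofs of Theorems \ref{th2.3} and \ref{th3.1}, since the empirical process $\nu_n(\alpha,\tau)$ and its continuity modulus were handled there.

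The main obstacle will be the uniform control of the empirical-process part: the wedge indicator $1\{\tau_0 \wedge \tau < Q \leq \tau_0 \vee \tau\}$ is discontinuous in $\tau$, so a naive bracketing argument does not give a small modulus of continuity directly; I expect to need a careful peeling over dyadic shells in both $|\alpha_J - \alpha_{0J}|$ and $|\tau - \tau_0|$, combined with the fact that the variance of $h_{\alpha_J,\tau}$ is genuinely of order $|\alpha_J - \alpha_{0J}|^2 \cdot |\tau - \tau_0|$ rather than just $|\alpha_J - \alpha_{0J}|^2$ — it is this extra $|\tau-\tau_0|$ factor that makes the whole quantity second-order and hence $o(n^{-1})$ rather than merely $O_P(n^{-1})$. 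Getting the variance computation to retain both small factors simultaneously, uniformly in $j \leq 2p$ via the moment bounds in Assumptions \ref{a:setting}\ref{a:setting:itm1}, \ref{a:setting}\ref{a:setting:itm4} and \ref{a:moment}\ref{a:moment:itm2}, is the delicate bookkeeping step.
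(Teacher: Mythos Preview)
Your proposal is correct and follows essentially the same route as the paper: split $D_n(\alpha_J,\tau)$ into a centered empirical-process piece and a population drift, control the drift by the mean-value theorem together with the lemma's own hypothesis on $\partial_\alpha\mathbb{E}[\rho(Y,X^{T}\alpha)\mid Q=t]$ (this is what produces the bound $O(s_n\sqrt{s}\,r_n)=o(n^{-1})$), and control the centered piece by a maximal inequality over the shrinking set $\mathcal{A}_n\times\mathcal{T}_n$ with the envelope you identify.

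One useful remark: the ``main obstacle'' you anticipate does not materialise, and no peeling in either coordinate is needed. The paper first uses the identity
\[
\rho\bigl(Y_i,X_i(\tau)^{T}\alpha\bigr)=\rho\bigl(Y_i,X_i^{T}\beta\bigr)1\{Q_i\le\tau\}+\rho\bigl(Y_i,X_i^{T}(\beta+\delta)\bigr)1\{Q_i>\tau\}
\]
to write $D_n=D_{n1}-D_{n2}$, each term having the form $n^{-1}\sum_i[\rho(Y_i,X_{iJ}^{T}c)-\rho(Y_i,X_{iJ}^{T}c_0)]\,1\{\tau_0<Q_i\le\tau\}$. This is indexed by the product of a Lipschitz class in $c\in\mathbb{R}^{s}$ and the monotone indicator class $\{1(\tau_0<Q\le\tau):|\tau-\tau_0|\le Ks_n\}$. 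The latter has a trivial $L_2$-bracketing number, the former has covering number $\lesssim(\sqrt{s}/\varepsilon)^{s}$ by the Lipschitz-to-covering lemma (Theorem~2.7.11 of van der Vaart--Wellner), and the product rule for bracketing (Lemma~9.25 of Kosorok) gives a bracketing integral of order $\sqrt{s\log n}$. A single application of the maximal inequality (Theorem~2.14.2 of van der Vaart--Wellner) against the envelope $|X_{iJ}|_2\,Kr_n\,1\{|Q_i-\tau_0|\le Ks_n\}$, whose $L_2$-norm is $O(\sqrt{s}\,r_n\sqrt{s_n})$, then yields $o_P(n^{-1})$ directly. The discontinuity of the wedge indicator in $\tau$ is harmless because bracketing number, not a pointwise modulus of continuity, is what governs the bound.
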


This lemma implies that the asymptotic distribution of $\widehat{\alpha
}_J\equiv\operatorname*{argmin}_{\alpha_J}Q_{n}^{\ast}\left(  \alpha_J,\widehat{\tau}\right)
$  can be characterized by $\widehat{\alpha}^{\ast}_J\equiv\operatorname*{argmin}_{\alpha_J
}Q_{n}^{\ast}\left(  \alpha_J,\tau_{0}\right)$ 
for any estimator $\widehat{\tau}$ with a convergence
rate at least $s_{n}$. 
 Furthermore, the variable selection
consistency implies that the asymptotic distribution of the SCAD estimator $\widetilde{\alpha}_J$ is
equivalent to that of $\widehat{\alpha}^{\ast}_J$.
Then following the existing results on M-estimation with parameters of increasing dimension (see, e.g. \cite{He:Shao:00}),   the asymptotic normality of a linear transformation of
$\widetilde\alpha_J,$ i.e., $R\widetilde\alpha_J,$ where $R:\mathbb{R}^{s}\rightarrow\mathbb{R}$ with $|R|_2=1$, can be established. 

\begin{remark}
The asymptotic normality of $\widetilde{\alpha}_J$ reveals the oracle asymptotic
behavior of the estimator in two senses: (1) it is the same distribution as
that of the estimate restricted on the oracle set $J$, thereby implying that 
 the asymptotic normality can be established  regardless of  $\delta_0 = 0$,
and (2) the effect of
estimating $\tau _{0}$ is asymptotically negligible when $\tau _{0}$ is identifiable. Hence
the limiting distribution is also the same as if $\tau _{0}$ were known 
\textit{a priori}. The first phenomenon is mainly due to the variable
selection consistency and the use of the asymptotic unbiased penalty
(SCAD-weighted-$\ell_{1}$), while the second phenomenon is mainly due to the
super-efficiency for estimating $\tau _{0}$ (the fast rate of convergence, as in Theorem \ref{th2.3}). Consequently, there is no
first-order efficiency loss.
\end{remark}

We now discuss asymptotic properties of $\widetilde{\tau}$. 
Recall that
\[
\widetilde{\tau}=\operatorname*{argmin}_{\tau}\frac{1}{n}\sum_{i=1}^{n}\rho\left(
Y_{i},X_{i}\left(  \tau\right)  ^{T}\widetilde{\alpha}\right) ,
\]
where $\widetilde{\alpha}$ is the second step SCAD estimate of $\alpha$.
Due to the selection consistency of the SCAD, i.e. $\widetilde{\alpha}_{J^{C}}=0$
with probability approaching one, $\widetilde{\tau}$ is equivalent, with the same
probability, to
\[
\operatorname*{argmin}_{\tau}\frac{1}{n}\sum_{i=1}^{n}\rho\left(  Y_{i}%
,X_{Ji}\left(  \tau\right)  ^{T}\widetilde{\alpha}_{J}\right)  .
\]
Suppose that $\delta_0 \neq 0$. For brevity, we focus on the case that $\Delta_0$ is bounded. Then, in view of  Lemma \ref{lem-AI} and the
$n$-consistency of the threshold estimate in the standard threshold models with the fixed small number of regressors, we can conclude that $n\left(  \tilde{\tau}-\tau_{0}\right)
$ is asymptotically equivalent to
\[
\operatorname*{argmax}_{\left\vert h\right\vert \leq K} - \left(  \sum_{i=1}%
^{n}\rho_{i}1\left\{  \tau_{0}<Q_{i}\leq\tau_{0}+hn^{-1}\right\}  -\sum
_{i=1}^{n}\rho_{i}1\left\{  \tau_{0}+hn^{-1}<Q_{i}\leq\tau_{0}\right\}
\right)    \text{ for some $K < \infty$},
\]
where $\rho_{i}=\rho\left(  Y_{i},X_{i}^{T}\beta_{0}\right)  -\rho\left(
Y_{i},X_{i}^{T}\beta_{0}+X_{i}^{T}\delta_{0}\right)  .$ The weak convergence
of this process for a variety of M-estimators is well known in the literature (see e.g. \cite{Pons2003, kosorok2007, Lee:Seo:08})
and the argmax continuous mapping theorem by \cite{Seijo:Sen:11b} yields
the asymptotic distribution, namely the smallest maximizer of a  compound Poisson process.

\section{Applications to Quantile and Logistic Regression  Models}\label{sec:app}

This section considers two important  examples: quantile and logistic regression  models. We present mild  primitive conditions  under which  all the regularity conditions assumed in Sections 
\ref{sec:Lasso-theory} and \ref{sec:oracle-inference} are satisfied.  
For brevity, we focus on the case  $\delta_0 \neq 0$ here, since the primitive conditions under   $\delta_0 = 0$  can be obtained similarly.

\subsection{Quantile regression with a change point}

The quantile regression with a change point is modeled as follows: 
\begin{align*}
Y &= X^T\beta_0+X^T\delta_01\{Q>\tau_0\}+U \\
  &\equiv X(\tau_0)^T\alpha_0+U,
\end{align*}
where the regression error $U$ satisfies the conditional  restriction
$\mathbb{P}(U\leq 0|X,Q)=\gamma$
for some known $\gamma\in(0,1)$.   The rate of convergence of $\ell_1$-penalized   estimation for sparse quantile regression has been studied by  \cite{BC11}. Unlike the mean regressions, sparse quantile regression analyzes the effects of active regressors on different parts of the conditional distribution of a response variable. Since loss function of the quantile regression is non-smooth, it has been  treated separately from the usual M-estimation framework. % in  \cite{negahban2012} and  \cite{bulmann}. 
 The literature also includes \cite{Wang:2013, Bradic2011, Wang:Wu:Li:2012} and \cite{FFB} among others. All the aforementioned papers are under the homogeneous sparsity framework (equivalently, knowing that $\delta_0=0$ in the quantile regression model).
%The high-dimensional quantile regression has received recent attentions in the literature. For instance, \cite{BC11} and \cite{Wang:2013} studied $\ell_1$-regularized estimation for $\beta_0$,  and  \cite{Bradic2011}, \cite{Wang:Wu:Li:2012},  and \cite{FFB} studied the variable selection consistency and established the oracle property for sparse high-dimensional quantile regression models. All the aforementioned papers are  without a change point (equivalently, knowing that $\delta_0=0$ in the quantile regression model).
 \cite{Ciuperca:13} considers penalized estimation of 
 a quantile regression model with breaks, but  the corresponding analysis is restricted to the case when the number of potential covariates is small, and is not about  structural changes in  sparsity.

The loss function for quantile regression is defined as 
\begin{equation*}
\rho(Y,X(\tau)^T\alpha)=(Y-X(\tau)^T\alpha)(\gamma-1\{Y-X(\tau)^T\alpha\leq0%
\}),
\end{equation*}
which is not differentiable in $\alpha$. But it is straightforward to check
that in this case 
\begin{equation*}
m_j(\tau,\alpha)= \mathbb{E}[X_j(\tau)(1\{Y-X(\tau)^T\alpha\leq0\}-\gamma)],
\end{equation*}
and if the conditional distribution of $Y|(X, Q)$ has a bounded density
function $f_{Y|X,Q}(y|X, Q)$, then 
\begin{equation*}
\frac{\partial^2 \mathbb{E}[\rho(Y, X_J(\tau)^T\alpha_{0J})]}{\partial\alpha_J\partial%
\alpha_J^T}= \mathbb{E}[X_J(\tau)X_J(\tau)^Tf_{Y|X,Q}(X(\tau)^T\alpha_0|X, Q)]\equiv
\Gamma(\tau,\alpha_0).
\end{equation*}

We make the following assumptions for quantile regression models.

\begin{assum}
\label{ass3.4-a}
\begin{enumerate}[label=(\roman*)]
\item\label{ass3.4-a:itm1} 
The conditional distribution $Y|X,Q$ has a continuously differentiable density
function $f_{Y|X,Q}(y|x,q)$, whose derivative with respect to $y$ is denoted
by $\tilde{f}_{Y|X,Q}(y|x,q)$. 
\item\label{ass3.4-a:itm2}  
There are constants $C_1, C_2 >0$ such that 
that for all $(y,x,q)$ in the support of $(Y,X,Q)$, 
\begin{align*}
|\tilde{f}_{Y|X,Q}(y|x,q)| \leq C_1, \quad  f_{Y|X,Q}(x(\tau_0)^T\alpha_0|x,q) \geq C_2.
\end{align*}
\item\label{ass3.4-a:itm3}  
There exists a constant $r^\ast_{QR} > 0$ such that
\begin{align}\label{nl-condition-qr}
  \inf_{\beta \in \mathcal{B}(\beta_0,r^\ast_{QR}), \beta  \neq \beta_0} \frac{\mathbb{E}[| X^{T}(\beta-\beta_0)|^{2} 1\{Q \leq \tau_0 \}]^{3/2}}{\mathbb{E}[| X^{T}(\beta-\beta_0)|^{3} 1\{Q \leq \tau_0 \}]} \geq r^\ast_{QR} \frac{2C_1}{3C_2} > 0
\end{align}
and that 
\begin{align}\label{nl-condition-qr-2}
  \inf_{\theta \in \mathcal{G}(\theta_0,r^\ast_{QR}), \theta  \neq \theta_0} \frac{\mathbb{E}[| X^{T}(\theta-\theta_0)|^{2} 1\{Q > \tau_0 \}]^{3/2}}{\mathbb{E}[| X^{T}(\theta-\theta_0)|^{3} 1\{Q > \tau_0 \}]} \geq r^\ast_{QR} \frac{2C_1}{3C_2} > 0.
\end{align}
\item\label{ass3.4-a:itm5}  
$\Gamma(\tau,\alpha_0)$ is positive definite uniformly in a neighborhood of $\tau_0$.
\end{enumerate}
\end{assum}

Conditions \ref{ass3.4-a:itm1}  and \ref{ass3.4-a:itm2} are standard assumptions for quantile regression models. 
Condition \ref{ass3.4-a:itm3} is a kind of  the ``restricted
nonlinearity" condition, similar to condition D.4 in \cite{BC11}.
Condition \ref{ass3.4-a:itm5} is a weak condition that imposes non-singularity of the Hessian matrix of the population objective function uniformly in a neighborhood of $\tau_0$.

\begin{remark}
As pointed out by \citet[online supplement]{BC11}, 
If $X^{T}c$ follows  a logconcave distribution conditional on $Q$
for any nonzero $c$ (e.g. if the distribution of $X$ is
multivariate normal), then  Theorem 5.22 of \cite{logcave:07} and the H\"{o}lder inequality 
imply that for all $\alpha \in \mathcal{A}$, 
\begin{align*}
\mathbb{E}[|X(\tau_0)^{T}(\alpha-\alpha_0)|^{3}|Q] \leq 
6 \left\{ \mathbb{E}[ \{X(\tau_0)^{T}(\alpha-\alpha_0)\}^{2}|Q] \right\}^{3/2},
\end{align*} 
which provides a sufficient condition for condition \ref{ass3.4-a:itm3}.  On the other hand, our condition \ref{ass3.4-a:itm3} can hold more generally since 
\eqref{nl-condition-qr} and \eqref{nl-condition-qr-2} need to hold only locally around $\beta_0$
and $\theta_0$, respectively.
\end{remark}

Recall that in Assumption \ref{a:threshold} \ref{a:threshold:itm3}, we consider two cases: (i) 
$M_3^{-1} < \mathbb{E}[(X^T\delta_0)^2|Q=\tau] \leq    M_3 $ or (ii)
$M_3^{-1} |\delta_0|_2^2 < \mathbb{E}[(X^T\delta_0)^2|Q=\tau] \leq    M_3 |\delta_0|_2^2$ uniformly in $\tau \in \mathcal{T}_0$ for some $M_3 > 0$.
For the second case, we need to strengthen Assumption \ref{ass3.4-a} \ref{ass3.4-a:itm2} slightly in the following way.

\begin{assum}\label{ass3.4-a:itm4}  
There are constants $C_3, \widetilde{\epsilon} > 0$ and a neighborhood $\mathcal{T}_0$
of $\tau_0$ 
such that for all $x$ in the support of $X$, for all $q$ in $\mathcal{T}_0$, and for all $\nu$ such that $|\nu| \leq \widetilde{\epsilon}$, 
\begin{align}\label{threshold-assumption-QR}
 f_{Y|X,Q}(x(\tau_0)^T \alpha_0 + \nu x^T \delta_0|x,q) \geq C_3 > 0.
\end{align}
\end{assum}

Assumption \ref{ass3.4-a:itm4}
is useful to verify Assumption \ref{a:obj-ftn} \ref{a:obj-ftn:itm4} by insuring that $\tau_0$ is well identified and the margin condition for $\tau_0$ holds, even if $\inf_{\tau \in \mathcal{T}_0} \mathbb{E}[(X^T\delta_0)^2|Q=\tau]$ diverges to infinity.
The following lemma verifies all the imposed regularity conditions on the loss function in Sections \ref{sec:Lasso-theory} and \ref{sec:oracle-inference} for the quantile regression model. % gives primitive sufficient conditions for regularity conditions imposed in Sections  \ref{sec:Lasso-theory} and \ref{sec:oracle-inference}.

\begin{lem}
\label{l4.1}
\begin{enumerate}
\item[(i)]
 Let Assumption \ref{ass3.4-a}  hold.
Moreover, for some $M_3 > 0$, we have that either 
$M_3^{-1} < \mathbb{E}[(X^T\delta_0)^2|Q=\tau] \leq    M_3 $ is satisfied
uniformly in $\tau \in \mathcal{T}_0$ 
 or 
Assumption \ref{ass3.4-a:itm4} holds.
Then Assumption \ref{a:obj-ftn} 
holds.
\item[(ii)]
In addition to Assumption \ref{ass3.4-a}, let 
Assumptions  \ref{a:setting} \ref{a:setting:itm1}, \ref{a:setting:itm4}, \ref{a:dist-Q} \ref{a:dist-Q:itm1}, and
\ref{a:threshold} \ref{a:threshold:itm1} hold. Suppose that $\alpha_0$ is in the interior of $\mathcal{A}$. 
Then Assumption \ref{ass3.2} holds with $M_n=C s^{1/2}$ for some constant $C>0$.
\end{enumerate}
\end{lem}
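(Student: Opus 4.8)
The plan is to treat the two parts of the lemma separately, reducing each regularity condition in Assumptions \ref{a:obj-ftn} and \ref{ass3.2} to Assumption \ref{ass3.4-a} together with the auxiliary conditions quoted in the statement. For part (i), conditions \ref{a:obj-ftn} \ref{a:obj-ftn:itm1} and \ref{a:obj-ftn} \ref{a:obj-ftn:itm2} are immediate: the check function $t\mapsto(y-t)(\gamma-1\{y-t\leq 0\})$ is convex and globally Lipschitz with constant $\max(\gamma,1-\gamma)\leq 1$, and since $\mathbb{P}(U\leq 0|X,Q)=\gamma$ makes $X(\tau_0)^T\alpha_0$ the conditional $\gamma$-quantile of $Y$ given $(X,Q)$, it minimizes $t\mapsto\mathbb{E}[\rho(Y,t)|X,Q]$, so $\mathbb{E}[\rho(Y,X(\tau_0)^T\alpha)-\rho(Y,X(\tau_0)^T\alpha_0)|X,Q]\geq 0$ and, a fortiori, the same holds after conditioning on $Q$ alone. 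For condition \ref{a:obj-ftn} \ref{a:obj-ftn:itm3} I would use Knight's identity for the check function: with $u=Y-X^T\beta_0$ and $v=X^T(\beta-\beta_0)$, on $\{Q\leq\tau_0\}$ one has $F_{Y|X,Q}(X^T\beta_0)=\gamma$, hence $\mathbb{E}[\rho(Y,X^T\beta)-\rho(Y,X^T\beta_0)|X,Q]=\int_0^v(F_{Y|X,Q}(X^T\beta_0+z)-\gamma)\,dz$, and a second-order expansion of $F_{Y|X,Q}$ in $y$ using $f_{Y|X,Q}(X^T\beta_0|X,Q)\geq C_2$ and $|\tilde{f}_{Y|X,Q}|\leq C_1$ bounds this pointwise below by $\tfrac{C_2}{2}v^2-\tfrac{C_1}{6}|v|^3$. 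Taking expectations over $\{Q\leq\tau_0\}$ and using the restricted-nonlinearity bound \eqref{nl-condition-qr} to absorb the cubic term into the quadratic one on $\mathcal{B}(\beta_0,r)$ for $r$ small enough then gives $\eta^\ast,r^\ast>0$ with $r_1(\eta^\ast)\geq r^\ast$; condition \eqref{nl-condition-qr-2} handles $r_2$ symmetrically on $\{Q>\tau_0\}$.

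Condition \ref{a:obj-ftn} \ref{a:obj-ftn:itm4} is the delicate one and I expect it to be the main obstacle. Here the perturbation runs from $X^T\beta_0$ to $X^T\theta_0=X^T\beta_0+X^T\delta_0$ but is confined to the thin slice $\{\tau<Q\leq\tau_0\}$ (and symmetrically $\{\tau_0<Q\leq\tau\}$). Knight's identity gives, on $\{Q\leq\tau_0\}$, $\mathbb{E}[\rho(Y,X^T\theta_0)-\rho(Y,X^T\beta_0)|X,Q]=(X^T\delta_0)^2\int_0^1(1-\nu)f_{Y|X,Q}(X^T\beta_0+\nu X^T\delta_0|X,Q)\,d\nu$. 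When Assumption \ref{ass3.4-a:itm4} is in force, I would restrict the $\nu$-integral to $[0,\widetilde{\epsilon}]$ and use $X^T\beta_0=X(\tau_0)^T\alpha_0$ on $\{Q\leq\tau_0\}$ together with \eqref{threshold-assumption-QR} to bound the conditional expectation below by a constant multiple of $(X^T\delta_0)^2$, then integrate over the slice to obtain condition \ref{a:obj-ftn} \ref{a:obj-ftn:itm4} with $c_0$ proportional to $C_3\widetilde{\epsilon}$ (the $\{Q>\tau_0\}$ inequality being symmetric). In the alternative case, where $M_3^{-1}\leq\mathbb{E}[(X^T\delta_0)^2|Q=\tau]\leq M_3$ on $\mathcal{T}_0$ and Assumption \ref{ass3.4-a:itm4} is not imposed, I would instead combine the Lipschitz continuity of $y\mapsto f_{Y|X,Q}(y|x,q)$ (from $|\tilde{f}_{Y|X,Q}|\leq C_1$) with $f_{Y|X,Q}(x(\tau_0)^T\alpha_0|x,q)\geq C_2$ to get $f_{Y|X,Q}\geq C_2/2$ on a fixed-width window around $x(\tau_0)^T\alpha_0$, control the contribution of the event that $|X^T\delta_0|$ escapes this window via the finite higher moments of $X^T\delta_0$ (available from Assumption \ref{a:setting} \ref{a:setting:itm1} and $|\delta_0|_2^2\leq sM_1^2$), and use $\mathbb{E}[(X^T\delta_0)^2|Q=\tau]\geq c_0$ from Assumption \ref{a:threshold} \ref{a:threshold:itm2} to recover the quadratic lower bound; making the ``$|X^T\delta_0|$ large'' contribution a controlled fraction of $\mathbb{E}[(X^T\delta_0)^2 1\{\tau<Q\leq\tau_0\}]$ is the fiddliest step.

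For part (ii), I would first differentiate $\mathbb{E}[\rho(Y,X(\tau)^T\alpha)]$ under the expectation. Since $\partial_t\rho(y,t)=-(\gamma-1\{y\leq t\})$ for a.e.\ $t$ and $F_{Y|X,Q}$ is continuous, dominated convergence gives $m_j(\tau,\alpha)=\mathbb{E}[X_j(\tau)(F_{Y|X,Q}(X(\tau)^T\alpha|X,Q)-\gamma)]$; since $f_{Y|X,Q}$ is bounded (a continuous density with $|\tilde{f}_{Y|X,Q}|\leq C_1$ is necessarily bounded) and $\tilde{f}_{Y|X,Q}$ is bounded and continuous, two further differentiations under the expectation show that the second derivative in $(\alpha_j,\alpha_k)$ equals $\mathbb{E}[X_j(\tau)X_k(\tau)f_{Y|X,Q}]$ and the third derivative in $(\alpha_i,\alpha_j,\alpha_k)$ equals $\mathbb{E}[X_i(\tau)X_j(\tau)X_k(\tau)\tilde{f}_{Y|X,Q}]$, which establishes three-times continuous differentiability in $\alpha$. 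Condition \ref{ass3.2} \ref{ass3.2:itm3} then follows from Assumption \ref{ass3.4-a:itm5} (uniform positive definiteness of $\Gamma(\tau,\alpha_0)$) for the eigenvalue bound and from $|\tilde{f}_{Y|X,Q}|\leq C_1$ with $\mathbb{E}|X_{ij}|^m\leq\tfrac{m!}{2}K_1^{m-2}$ (so $\mathbb{E}|X_iX_jX_k|$ is bounded by H\"{o}lder) for the third-derivative bound; condition \ref{ass3.2} \ref{ass3.2:itm2} follows from the mean value theorem, the boundedness of $f_{Y|X,Q}$, and $\mathbb{E}|X_j(\tau)X_k(\tau)|\leq C$, which yield a fixed Lipschitz constant $L$.

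It remains to verify condition \ref{ass3.2} \ref{ass3.2:itm1} with $M_n=Cs^{1/2}$, and the crucial observation is again that $F_{Y|X,Q}(X^T\beta_0|X,Q)=\gamma$ on $\{Q\leq\tau_0\}$: for $\tau<\tau_0$ the $\tau_0$-term vanishes on the slice, so $m_j(\tau,\alpha_0)-m_j(\tau_0,\alpha_0)=\mathbb{E}[X_j(\tau)(F_{Y|X,Q}(X^T\theta_0|X,Q)-\gamma)1\{\tau<Q\leq\tau_0\}]$ (and symmetrically for $\tau>\tau_0$). Bounding $|F_{Y|X,Q}(X^T\theta_0|X,Q)-\gamma|\leq C|X^T\delta_0|$ by the mean value theorem, conditioning on $Q$, and using $\mathbb{P}(\tau<Q\leq\tau_0)\leq K_2|\tau-\tau_0|$ (Assumption \ref{a:dist-Q} \ref{a:dist-Q:itm1}), $\sup_{\tau\in\mathcal{T}_0}\mathbb{E}[X_j^2|Q=\tau]<\infty$ (Assumption \ref{a:setting} \ref{a:setting:itm4}), the Cauchy--Schwarz inequality, $\mathbb{E}[(X^T\delta_0)^2|Q=\tau]\leq M_2|\delta_0|_2^2$ (Assumption \ref{a:threshold} \ref{a:threshold:itm1}), and $|\delta_0|_2^2\leq sM_1^2$ (since $|\delta_0|_\infty\leq M_1$ and $|J(\delta_0)|_0\leq s$), one arrives at the bound $M_n|\tau-\tau_0|$ with $M_n=Cs^{1/2}$, the rate asserted in the lemma.
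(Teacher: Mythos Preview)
Your proposal is correct and follows essentially the same approach as the paper: Knight's identity plus a second-order expansion of $F_{Y|X,Q}$ for Assumption \ref{a:obj-ftn} \ref{a:obj-ftn:itm2}--\ref{a:obj-ftn:itm3}, restriction of the $\nu$-integral to $[0,\widetilde{\epsilon}]$ under Assumption \ref{ass3.4-a:itm4} for condition \ref{a:obj-ftn:itm4}, and differentiation under the expectation combined with Cauchy--Schwarz, Assumption \ref{a:setting} \ref{a:setting:itm4}, Assumption \ref{a:dist-Q} \ref{a:dist-Q:itm1}, and Assumption \ref{a:threshold} \ref{a:threshold:itm1} to obtain $M_n=C|\delta_0|_2\leq Cs^{1/2}$ for Assumption \ref{ass3.2}. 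The paper is equally brief about the bounded alternative for condition \ref{a:obj-ftn:itm4}---it simply remarks that Assumption \ref{ass3.4-a:itm4} is not needed there without spelling out why---so your tail-truncation sketch is an elaboration rather than a departure (though note that the higher conditional moments it invokes are not literally supplied by Assumption \ref{a:setting} \ref{a:setting:itm1}).
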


%As a result, a straightforward application of Theorems \ref{th3.1} implies
%the variable selection consistency. In particular, the conclusions of Theorems \ref{th3.1} follow, provided that $s^4(\log p)^3(\log n)^3+s^3(\log p)^6(\log n)^6=o(n)$,    and 
%\begin{equation*}
%\omega_n+s\sqrt{\frac{\log s}{n}} \ll \mu _{n} \ll \min_{j\in J(\alpha_0)}|\alpha _{0J}|.
%\end{equation*}%

\subsection{Logistic regression with a change point}

Consider a binary outcome $Y\in\{0,1\}$, whose distribution depends on a
high-dimensional regressor $X$ with a possible change point: 
\begin{equation*}
P(Y=1|X,Q)=g(X^T\beta_0+X^T\delta_01\{Q>\tau_0\})=g(X(\tau_0)^T\alpha_0),
\end{equation*}
where $g(t) \equiv \exp(t)/[1+\exp(t)]$.
Such a model belongs to the more general GLM family,
but has independent interest in classifications and binary choice
applications.

The loss function is given by the negative log-likelihood: 
\begin{equation*}
\rho(Y,X(\tau)^T\alpha)=-[Y\log
g(X(\tau)^T\alpha)+(1-Y)\log(1-g(X(\tau)^T\alpha))].
\end{equation*}
It follows that 
\begin{equation*}
m_j(\tau,\alpha)=-E\left\{\left[\frac{g(X(\tau_0)^T\alpha_0)}{%
g(X(\tau)^T\alpha)} -\frac{1-g(X(\tau_0)^T\alpha_0)}{1-g(X(\tau)^T\alpha)} %
\right] g'(X_j(\tau)^T\alpha)X_j(\tau)\right\}
\end{equation*}
where $g'(t) = g(t)(1-g(t))$ denotes the derivative of $g(t)$. Immediately, $%
m_j(\tau_0,\alpha_0)=0$ for $j=1,...,2p.$

We make the following assumptions for logistic regression models.

\begin{assum}\label{ass3.5} 
%(i) There is a constant $C_0 > 0$ such that  
%$\inf_{\tau \in \mathcal{T}}
%\mathbb{E} \left[ (X^{T}\delta_0)^2 |Q=\tau \right] \geq C_0$. 
%\newline
%(ii) 
There are $r>0$ and $\epsilon>0$ such that 
$\epsilon<g(X^T\beta), g(X^T\theta), g(X(\tau)^T\alpha)<1-\epsilon $ almost surely for all  
   $\beta\in\mathcal{B}(\beta_0, r)$, $\theta\in\mathcal{G}(\theta_0, r)$,   $\alpha=(\beta^T,\theta^T-\beta^T)^T$ and $\tau \in (\tau_0-r,\tau_0+r)$.
%In addition, for all $\beta\in\mathcal{B}(\beta_0,r)$, $\theta\in\mathcal{G}(\theta_0,r)$ and $\lambda\in[0,1]$,
%$|g'(X^T(\lambda \beta+(1-\lambda)\theta))|>C_2 $   
\end{assum}

In particular, Assumption \ref{ass3.5} requires that $g(X(\tau)^T\alpha)$ be bounded away
from both zero and one in an $\ell_1$ neighborhood of $\alpha_0=(\beta_0^T,\theta_0^T-\beta_0^T)^T.$ 
This assumption is restrictive but standard in that 
intuitively,
we should have observations for both $Y=1$ and $Y=0$ almost everywhere
within the support of $(X,Q)$.
In a similar way, \cite{geer} also mentions that her margin condition holds in the logistic regression if $\varepsilon \leq \mathbb{E}(Y|X=x) \leq 1 -  \varepsilon$ holds almost surely for some $\varepsilon > 0$.
The following lemma verifies all the imposed regularity conditions on the loss function in Sections \ref{sec:Lasso-theory} and \ref{sec:oracle-inference} for the logistic regression model. 

\begin{lem}
\label{l4.2}
\begin{enumerate}
\item[(i)] Let Assumption \ref{ass3.5}
 hold. Then Assumption \ref{a:obj-ftn} 
holds.
\item[(ii)]
In addition to Assumption \ref{ass3.5}, let 
Assumptions  \ref{a:setting} \ref{a:setting:itm1}, \ref{a:setting:itm4}, \ref{a:dist-Q} \ref{a:dist-Q:itm1}, and
\ref{a:threshold} \ref{a:threshold:itm1} hold. Suppose that $\alpha_0$ is in the interior of $\mathcal{A}$. 
Then Assumption \ref{ass3.2} holds with $M_n=C s^{1/2}$ for some constant $C>0$.
\end{enumerate}
\end{lem}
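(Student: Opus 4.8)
The plan is to verify, clause by clause, Assumption~\ref{a:obj-ftn} for part~(i) and Assumption~\ref{ass3.2} for part~(ii), using throughout the elementary structure of the logistic loss $\rho(y,t)=\log(1+e^{t})-yt$, for which $\partial_t\rho(y,t)=g(t)-y\in[-1,1]$, $\partial_t^{2}\rho(y,t)=g'(t)=g(t)(1-g(t))\in(0,\tfrac14]$, and $\partial_t^{3}\rho(y,t)=g''(t)$ is globally bounded, together with $\mathbb{E}[Y\mid X,Q]=g(X(\tau_0)^T\alpha_0)$. All the content lies in translating these facts about $g$ into statements about the population objective under the moment and threshold conditions.

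\textbf{Part (i).} Convexity of $\rho(y,\cdot)$ and $|\partial_t\rho(y,t)|\le1$ give Assumption~\ref{a:obj-ftn}\ref{a:obj-ftn:itm1} with $L=1$. For \ref{a:obj-ftn:itm2}, conditioning first on $(X,Q)$, the map $t\mapsto\mathbb{E}[\rho(Y,t)\mid X,Q]=\log(1+e^{t})-g(X(\tau_0)^T\alpha_0)\,t$ is minimized at $t=X(\tau_0)^T\alpha_0$, so the conditional excess risk is nonnegative and this is preserved under $\mathbb{E}[\,\cdot\mid Q]$. For \ref{a:obj-ftn:itm3} and \ref{a:obj-ftn:itm4} the workhorse is the exact expansion
\begin{equation*}
\log(1+e^{a})-\log(1+e^{b})-g(b)(a-b)=\tfrac12\,g'(\xi)(a-b)^{2}
\end{equation*}
for some $\xi$ between $a$ and $b$. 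On $\{Q\le\tau_0\}$ the true index is $X^T\beta_0$; taking $a=X^T\beta$, $b=X^T\beta_0$, monotonicity of $g$ forces $g(\xi)$ between $g(X^T\beta)$ and $g(X^T\beta_0)$, both in $(\epsilon,1-\epsilon)$ by Assumption~\ref{ass3.5}, so $g'(\xi)\ge\epsilon(1-\epsilon)$ and the excess risk on $\{Q\le\tau_0\}$ is at least $\tfrac12\epsilon(1-\epsilon)\,\mathbb{E}[(X^T(\beta-\beta_0))^{2}1\{Q\le\tau_0\}]$; hence $r_1(\eta^\ast)\ge r$ with $\eta^\ast=\tfrac12\epsilon(1-\epsilon)$, and $r_2(\eta^\ast)\ge r$ symmetrically on $\{Q>\tau_0\}$. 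Applying the same expansion with $a=X^T\theta_0$, $b=X^T\beta_0$ restricted to $\{\tau<Q\le\tau_0\}$ (and symmetrically to $\{\tau_0<Q\le\tau\}$) yields Assumption~\ref{a:obj-ftn}\ref{a:obj-ftn:itm4} with $c_0=\tfrac12\epsilon(1-\epsilon)$.

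\textbf{Part (ii), the routine clauses.} Three–times continuous differentiability of $\mathbb{E}[\rho(Y,X(\tau)^T\alpha)]$ in $\alpha$ follows by dominated convergence from the boundedness of the first three $t$–derivatives of $\rho$ and the moments in Assumption~\ref{a:setting}\ref{a:setting:itm1}. For Assumption~\ref{ass3.2}\ref{ass3.2:itm3}, the oracle–block Hessian at $\alpha_0$ equals $\mathbb{E}[g'(X(\tau)^T\alpha_0)X_J(\tau)X_J(\tau)^{T}]$ and $X(\tau)^T\alpha_0\in\{X^T\beta_0,X^T\theta_0\}$, so $g'(X(\tau)^T\alpha_0)\ge\epsilon(1-\epsilon)$ and the Hessian dominates $\epsilon(1-\epsilon)\,\mathbb{E}[X_J(\tau)X_J(\tau)^{T}]$, which is uniformly positive definite on $\mathcal{T}_0$ under the maintained design condition (the sufficient condition for Assumption~\ref{ass2.7} of a uniformly positive–definite population covariance); the third–derivative bound is immediate from $|g''|\le C$ and $\max_{i,j,k}\mathbb{E}|X_iX_jX_k|<\infty$. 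For \ref{ass3.2:itm2}, a mean–value step gives $m_j(\tau,\alpha)-m_j(\tau,\alpha_0)=\mathbb{E}[X_j(\tau)g'(\xi)X(\tau)^T(\alpha-\alpha_0)]$, and bounding $g'\le\tfrac14$ and $\max_{j,k}\mathbb{E}|X_j(\tau)X_k(\tau)|\le\max_j\mathbb{E}X_j^2$ (Cauchy–Schwarz, Assumption~\ref{a:setting}\ref{a:setting:itm1}) gives $|m_j(\tau,\alpha)-m_j(\tau,\alpha_0)|\le L|\alpha-\alpha_0|_1$.

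\textbf{The crux: Assumption~\ref{ass3.2}\ref{ass3.2:itm1} with $M_n=Cs^{1/2}$.} Since $m(\tau_0,\alpha_0)=0$ and $\mathbb{E}[Y\mid X,Q]=g(X(\tau_0)^T\alpha_0)$,
\begin{equation*}
m_j(\tau,\alpha_0)-m_j(\tau_0,\alpha_0)=\mathbb{E}\big[X_j(\tau)\big(g(X(\tau)^T\alpha_0)-g(X(\tau_0)^T\alpha_0)\big)\big].
\end{equation*}
The key localization is that $X(\tau)^T\alpha_0-X(\tau_0)^T\alpha_0=X^T\delta_0\big(1\{Q>\tau\}-1\{Q>\tau_0\}\big)$ vanishes unless $Q$ lies between $\tau$ and $\tau_0$, on which event the increment of $g$ is at most $\tfrac14|X^T\delta_0|$; also $|X_j(\tau)|$ is bounded by the underlying original coordinate of $X$. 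Taking WLOG $\tau<\tau_0$ and applying Cauchy–Schwarz conditionally on $Q$, the conditional cross–moment is at most $\big(\mathbb{E}[X_{j}^{2}\mid Q=q]\,\mathbb{E}[(X^T\delta_0)^{2}\mid Q=q]\big)^{1/2}\le C|\delta_0|_2$ for $q$ near $\tau_0$ by Assumptions~\ref{a:setting}\ref{a:setting:itm4} and \ref{a:threshold}\ref{a:threshold:itm1}, while $\mathbb{P}(\tau<Q\le\tau_0)\le K_2|\tau-\tau_0|$ by Assumption~\ref{a:dist-Q}\ref{a:dist-Q:itm1}; hence $|m_j(\tau,\alpha_0)-m_j(\tau_0,\alpha_0)|\le C|\delta_0|_2|\tau-\tau_0|$ uniformly in $j$. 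Finally $|\delta_0|_2^{2}\le|J(\delta_0)|\,M_1^{2}\le sM_1^{2}$ since $|\alpha_0|_\infty\le M_1$, so $M_n=Cs^{1/2}$. I expect this last clause to be the main obstacle: it is where the sparsity index enters $M_n$, and obtaining the sharp $s^{1/2}$ (rather than a cruder $\ell_1$–type factor) hinges on combining the Lipschitz increment of $g$, the localization of the indicator difference to a set of probability $O(|\tau-\tau_0|)$, and the $\ell_2$ control of $X^T\delta_0$ afforded by Assumption~\ref{a:threshold}\ref{a:threshold:itm1}; once these are in place the remaining verifications are routine, and the quantile case (Lemma~\ref{l4.1}) runs along the same lines with $g'$ replaced by the conditional density $f_{Y|X,Q}$ and Assumptions~\ref{ass3.4-a}--\ref{ass3.4-a:itm4} substituting for Assumption~\ref{ass3.5}.
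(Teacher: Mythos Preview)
Your proposal is correct and follows essentially the same route as the paper. The one cosmetic difference is that you expand the logistic loss directly in the linear index via $\log(1+e^{a})-\log(1+e^{b})-g(b)(a-b)=\tfrac12 g'(\xi)(a-b)^{2}$, whereas the paper reparametrizes through $p=g(t)$ and expands $f(p,p_0)=-p_0\log p-(1-p_0)\log(1-p)$ in $p$; after the chain rule both computations coincide, and likewise your cleaner form $m_j(\tau,\alpha)=\mathbb{E}[X_j(\tau)(g(X(\tau)^T\alpha)-g(X(\tau_0)^T\alpha_0))]$ is exactly what the paper's fractional expression simplifies to. The crux you identify---localizing the indicator difference, conditional Cauchy--Schwarz with Assumptions~\ref{a:setting}\ref{a:setting:itm4} and~\ref{a:threshold}\ref{a:threshold:itm1}, then $|\delta_0|_2\le M_1 s^{1/2}$---is precisely the paper's argument for $M_n=Cs^{1/2}$.
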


\section{Monte Carlo Experiments}\label{sec:MC}

In this section we provide the results of some Monte Carlo simulation studies. We consider two simulation designs whose loss functions are convex: the median regression and the logistic regression. The data are generated respectively  from the following two models:
\eq{
Y_i&=X_i'\beta_0 + 1\{Q_i<\tau_0\}X_i'\delta_0 + \eps_{i1} \label{eq:sim.qr}\\ 
Y_i&=1\{X_i'\beta_0 + 1\{Q_i<\tau_0\}X_i'\delta_0 + \eps_{i2} > 0 \} \label{eq:sim.logit},
}where $\eps_{i1}$ is generated from the standard normal distribution, and $\eps_{i2}$ is generated from the logistic distribution. Similar to the existing simulation studies (e.g.\ \citet{BC11} and \citet{Wang:2013}), $X_i$ is a $p$-dimensional vector generated from $N(0,\Sigma)$ whose columns are dependent by a covariance matrix $\Sigma_{ij} = ( 1 / 2 )^{\vert i-j\vert}$. The threshold variable $Q_i$ is a scalar generated from the uniform distribution on the interval of $(0,1)$. The $p$-dimensional parameters $\beta_0$ and $\dt_0$ are set to $\beta_0=(0.5,0,0.5,0,\ldots,0)$ and $\delta=(0,1,1,0,\ldots,0)$  in case of \eqref{eq:sim.qr} and $\beta_0=(1.5,0,1.5,0,\ldots,0)$ and $\delta=(0,3,3,0,\ldots,0)$ in case of \eqref{eq:sim.logit}. The threshold parameter $\tau_0$ is set to $0.5$. The sample size is set to $400$. We consider several different sizes of $X_i$, so that we set $p=50, 100, 200$, and $400$. Notice that  the total number of regressors is $2p$ in each design. The range of $\tau$ is set to $\mathcal{T}=[0.15,0.85]$. We conduct 1,000 replications of each design.

We estimate the model by the standard algorithm with slight modifications. In these simulation studies, we use the R package `quantreg' for \eqref{eq:sim.qr} and `glmnet' for \eqref{eq:sim.logit}.  
For each design, we estimate the model using the above algorithms for each grid point of $\tau$ spanning over 71 equi-spaced points on $\mathbb{T}$. Next, we choose $\widehat{\tau}$ and corresponding $\widehat{\alpha}(\widehat{\tau})$ that minimize the objective function. Thus, we just need an additional loop over the grid points of $\tau$. We conduct our simulation studies over several different tuning parameter values,  whose results are quite similar. To save some space we report here simulation results from $\lambda_n=0.03$ and $\mu_n=\log(p)\times \lambda_n$ for the quantile regression and from $\lambda_n=0.03$ and $\mu_n= 0.5\times \log(p)\times \lambda_n$ for the logistic regression.\footnote{All tables across different tuning parameter values are available from the authors upon request.} Finally, We set $a=3.7$ in the second step SCAD estimator following the convention in the literature (e.g.\ \citet{Fan01} and \citet{loh2013regularized}).

\begin{table}[htbp]
\begin{center}
%\footnotesize
\scriptsize
\caption{Median Regression}
\label{tb:qr}
\begin{tabular}{p{1.2cm}c p{0.7cm}p{0.7cm} c p{2.7cm}p{3.7cm}p{3.3cm}p{1.3cm}}
\hline \hline
\multirow{2}{*}{Design}             & &\multicolumn{2}{c}{Excess Risk}  & & $\mathbb{E}[{J(\what{\alpha})}]$ & $P\{J(\alpha_0) \subset J(\what{\alpha})\}$ &  $\mathbb{E}\left|\widehat{\alpha}-\alpha_0\right|_1$&    \multirow{2}{*}{$\mathbb{E}\left|\widehat{\tau}-\tau_0\right|_1$}  \\
\cline{3-4}
& & Mean & Median & &~~~ \footnotesize{$\left(\mathbb{E}[J(\what{\beta})] / \mathbb{E}[J(\what{\delta})]\right)$}& ~~~$(\bt_{0,1} /\bt_{0,3} /\dt_{0,2} /\dt_{0,3})$ & ~~~$(\text{on }J(\alpha_0)/J^c(\alpha_0))$ \\\hline
& & & & & &\\
Oracle 1 & &  0.003  &  0.002  & & NA &  NA & 0.318 ( 0.318 / NA ) &  NA \\

Oracle 2  & &  0.004  &  0.003  & & NA &  NA & 0.319 ( 0.319 / NA ) &  0.003 \\

& & & & & &\\

$p=50$  & &  0.006  &  0.005  & & 4.87 ( 2.8 / 2.1 ) &  1 ( 1 / 1 / 1 / 1 ) & 0.406 ( 0.381 / 0.025 ) &  0.004 \\
$p=100$ & &  0.008  &  0.006  & & 4.58 ( 2.6 / 2.0 ) &  0.99 ( 1 / 0.99 / 1 / 1 ) & 0.465 ( 0.450 / 0.015 ) &  0.005 \\
$p=200$  & &  0.010  &  0.009  & & 4.30 ( 2.3 / 2.0 ) &  0.98 ( 1 / 0.98 / 1 / 1 ) & 0.524 ( 0.517 / 0.007 ) &  0.005 \\
$p=400$  & &  0.014  &  0.012  & & 4.09 ( 2.1 / 2.0 ) &  0.96 ( 1 / 0.96 / 1 / 1 ) & 0.619 ( 0.616 / 0.003 ) &  0.005 \\
& & & & & & &\\
\hline 
\\
\multicolumn{9}{p{\textwidth}}{\footnotesize \emph{Note: }Oracle 1 knows both $J(\alpha_0)$ and $\tau_0$ and Oracle 2 knows only $J(\alpha_0)$. For other designs, the tuning parameters are set to $\lambda_n=0.03$ and  $\mu_n=\log(p)\times \lambda_n$. The number of observations is set to $n=400$. Expectations ($\mathbb{E}$) and the probability ($P$) are calculated by the average of 1,000 iterations in each design. 
}
\end{tabular}
\end{center}
\end{table}

\begin{table}[htbp]
\begin{center}
%\footnotesize
\scriptsize
\caption{Logistic Regression}
\label{tb:logit}
\begin{tabular}{p{1.2cm}c p{0.7cm}p{0.7cm} c p{2.7cm}p{3.7cm}p{3.3cm}p{1.3cm}}
\hline \hline
\multirow{2}{*}{Design}             & &\multicolumn{2}{c}{Excess Risk}  & & $\mathbb{E}[{J(\what{\alpha})}]$ & $P\{J(\alpha_0) \subset J(\what{\alpha})\}$ &  $\mathbb{E}\left|\widehat{\alpha}-\alpha_0\right|_1$&    \multirow{2}{*}{$\mathbb{E}\left|\widehat{\tau}-\tau_0\right|_1$}  \\
\cline{3-4}
& & Mean & Median & &~~~ \footnotesize{$\left(\mathbb{E}[J(\what{\beta})] / \mathbb{E}[J(\what{\delta})]\right)$}& ~~~$(\bt_{0,1} /\bt_{0,3} /\dt_{0,2} /\dt_{0,3})$ & ~~~$(\text{on }J(\alpha_0)/J^c(\alpha_0))$ \\\hline
& & & & & &\\
Oracle 1 & &  0.006  &  0.004  & & NA &  NA & 1.680 ( 1.680 / NA ) &  NA \\

Oracle 2 & &  0.012  &  0.008  & & NA &  NA & 1.769 ( 1.769 / NA ) &  0.017 \\
& & & & & &\\

$p=50$  & &  0.017  &  0.012  & & 4.40 ( 2.4 / 2.0 ) &  0.95 ( 1 / 1 / 0.99 / 0.96 ) & 1.957 ( 1.829 / 0.128 ) &  0.026 \\
$p=100$  & &  0.019  &  0.014  & & 4.28 ( 2.3 / 2.0 ) &  0.92 ( 1 / 1 / 1 / 0.93 ) & 2.096 ( 1.991 / 0.105 ) &  0.032 \\
$p=200$  & &  0.021  &  0.014  & & 4.22 ( 2.3 / 1.9 ) &  0.9 ( 1 / 1 / 0.99 / 0.91 ) & 2.256 ( 2.147 / 0.109 ) &  0.042 \\
$p=400$  & &  0.024  &  0.019  & & 4.11 ( 2.2 / 1.9 ) &  0.84 ( 1 / 1 / 0.96 / 0.88 ) & 2.466 ( 2.363 / 0.103 ) &  0.053 \\
& & & & & & &\\
\hline
\\
\multicolumn{9}{p{\textwidth}}{\footnotesize \emph{Note: }Oracle 1 knows both $J(\alpha_0)$ and $\tau_0$ and Oracle 2 knows only $J(\alpha_0)$. For other designs, the tuning parameters are set to $\lambda_n=0.03$ and  $\mu_n=0.5\times \log(p)\times \lambda_n$. The number of observations is set to $n=400$. Expectations ($\mathbb{E}$) and the probability ($P$) are calculated by the average of 1,000 iterations in each design. 
}
\end{tabular}
\end{center}
\end{table}

Tables \ref{tb:qr}--\ref{tb:logit} summarize these simulation results. In addition to the proposed estimator, we also report two oracle estimation results to evaluate the performance. Oracle 1 knows the true non-zero regressors $J( {\alpha}_0)$ and the threshold parameter $\tau_0$ while Oracle 2 only knows  $J( \alpha_0)$ and estimates $\tau$ as well as $\alpha$.  In the tables, we report the following statistics: the mean and median excess risks; the average number of non-zero parameters; the probability of containing true non-zero parameters; and $\ell_1$ errors of $\widehat{\alpha}$ and $\widehat{\tau}$, respectively. We also report in the parentheses those statistics on the subset of the parameters. 

Overall, the results are satisfactory and provide finite sample evidence for the theoretical results we develop in the previous sections. First, in Table \ref{tb:qr} of the median regression model, the excess risk in both measures are small and very close to that of the oracle models. Furthermore, as we can see from the fourth and the fifth columns ($\mathbb{E}[J(\widehat{\alpha})]$ and $P\{J(\alpha_0)\subset J(\widehat{\alpha})$\}), the sparse model structure is well-captured and it seldomly misses the true non-zero parameters. The $\ell_1$ errors of $\widehat{\alpha}$ and $\widehat{\tau}$ are also reasonably close to the oracle models. In Table \ref{tb:logit} of the logistic regression model,   the overall patterns of the results are very similar to those of the median regression although the size of errors is slightly larger. In both designs, the estimator works well even when the dimension of $X_i$ is very large as in case of $2p=800$. 
 In sum, the proposed estimation procedure works well in finite samples and confirms the theoretical results developed earlier. 
%Therefore, the method will be useful in the class of high-dimensional threshold regression models, where the estimator can be defined as a minimizer of a convex loss function. 

%\section{Conclusions}\label{sec:conclusion}

%In the high-dimensional sparse modeling literature,  it has been crucially assumed that   the sparsity structure of the model is homogeneous over the entire population. That is,  the identities of important regressors are invariant across the population and across  the individuals in the collected sample.  In practical sparse modeling, however,  the sparsity structure is not always invariant in the population, due to the widely existing heterogeneity  across different sub-populations. 

%We consider a general, possibly non-smooth M-estimation framework, allowing a possible structural change on the identities of important regressors in the population.   One of the strengths of our proposed procedure and results  is that it  does not require to know or pretest whether the structural change is present, or where it occurs.    It is shown that the estimation errors achieve the minimax rate of convergence only up to a $(\log p)(\log n)$ term, which arises since the structural change is unknown. If signal is relatively strong, the rates of convergence can be further improved using an adaptive penalization.    

%\pagebreak

\appendix

\onehalfspacing 

%\renewcommand\thepage{A-\arabic{page}} \setcounter{page}{1}

%\section*{Online Supplement Materials for ``Structural Change in Sparsity''
%by Sokbae Lee, Yuan Liao, 
%Myung Hwan Seo, and Youngki Shin}

\section{Proofs for Section \ref{sec:Lasso-theory}}\label{sec:proof-lasso}

Throughout the proof, we define 
\begin{equation*}
\nu _{n}\left( \alpha ,\tau \right) \equiv\frac{1}{n}\sum_{i=1}^{n}\left[ \rho
\left( Y_{i},X_{i}\left( \tau \right) ^{T}\alpha \right) -
\mathbb{E} \rho \left(Y,X\left( \tau \right) ^{T}\alpha \right) \right].
\end{equation*}%
Without loss of generality let $\nu_{n}\left( \alpha _{J},\tau \right)=n^{-1}\sum_{i=1}^{n}\left[ \rho \left( Y_{i},X_{iJ}\left( \tau
\right) ^{T}\alpha _{J}\right) - \mathbb{E} \rho \left( Y,X_{J}\left( \tau \right)
^{T}\alpha _{J}\right) \right]$. Also define $D(\tau )=\mathrm{diag}%
(D_{j}(\tau ):j\leq 2p)$ and then $D_{0}=D\left( \tau _{0}\right) $ and $%
\widehat{D}=D\left( \widehat{\tau}\right)$.

For the positive constant $K_1$ in Assumption  \ref{a:setting} \ref{a:setting:itm1},
define $$c_{np}\equiv\sqrt{\frac{2\log
\left( 4np\right) }{n}}+\frac{K_{1}\log \left( 4np\right) }{n}.$$
Let $\lceil x \rceil$ denote  the smallest integer greater than or equal to
a real number $x$.
The following lemma bounds 
$\nu _{n}\left( \alpha ,\tau \right)$. 

\begin{lem}
\label{lem-emp}  For any positive  sequences $ m_{1n}$ and $ m_{2n}$, and any $\widetilde{\delta}\in(0,1)$, there are constants $L_1, L_2$ and $L_3>0$ such that for   $a_{n}=L_1c_{np}\widetilde{\delta} ^{-1}$, $b_{n}=L_2c_{np} \lceil \log
_{2}\left( m_{2n}/m_{1n}\right) \rceil \widetilde{\delta} ^{-1},$ and $%
c_{n}=L_3n^{-1/2}\widetilde{\delta} ^{-1},$  
\begin{equation}
\mathbb{P} \left\{ \sup_{\tau \in \mathcal{T}}\sup_{\left\vert \alpha -\alpha
_{0}\right\vert _{1}\leq m_{1n}}\left\vert \nu _{n}\left( \alpha ,\tau
\right) -\nu _{n}\left( \alpha _{0},\tau \right) \right\vert \geq
a_{n}m_{1n}\right\} \leq \widetilde{\delta} ,  \label{emp1}
\end{equation}%
\begin{equation}
\mathbb{P} \left\{ \sup_{\tau \in \mathcal{T}}\sup_{m_{1n}\leq \left\vert \alpha
-\alpha _{0}\right\vert _{1}\leq m_{2n}}\frac{\left\vert \nu _{n}\left(
\alpha ,\tau \right) -\nu _{n}\left( \alpha _{0},\tau \right) \right\vert }{%
\left\vert \alpha -\alpha _{0}\right\vert _{1}}\geq b_{n}\right\} \leq
\widetilde{\delta} ,  \label{emp2}
\end{equation}%
and for any $\eta > 0$ and $\mathcal{T}_{\eta }=\left\{ \tau \in \mathcal{T}%
:\left\vert \tau -\tau _{0}\right\vert \leq \eta \right\} $,
\begin{equation}
\mathbb{P} \left\{ \sup_{\tau \in \mathcal{T}_{\eta }}\left\vert \nu _{n}\left(
\alpha _{0},\tau \right) -\nu _{n}\left( \alpha _{0},\tau _{0}\right)
\right\vert \geq c_{n} |\delta_0|_2 \sqrt{\eta} \right\} \leq \widetilde{\delta} .  \label{emp3}
\end{equation}
\end{lem}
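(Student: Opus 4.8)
The plan is to treat the three inequalities by a common three-ingredient strategy: (a) a pointwise-in-$(\alpha,\tau)$ concentration bound for the centered sum, obtained via Bernstein's inequality, using the Lipschitz property of $\rho$ in its second argument (Assumption \ref{a:obj-ftn} \ref{a:obj-ftn:itm1}) together with the moment bound $\mathbb{E}|X_{ij}|^m\leq \frac{m!}{2}K_1^{m-2}$ (Assumption \ref{a:setting} \ref{a:setting:itm1}); (b) a chaining / peeling argument in the $\alpha$-direction to pass from a fixed $\alpha$ to a supremum over $\ell_1$-balls; and (c) a separate control of the $\tau$-direction, exploiting that $\tau\mapsto X_i(\tau)$ changes only through the step function $1\{Q_i>\tau\}$, so that the process is piecewise constant in $\tau$ with at most $n$ break points. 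For \eqref{emp1}, I would first fix a finite $\epsilon$-net of the $\ell_1$-ball $\{|\alpha-\alpha_0|_1\le m_{1n}\}$; by Lipschitz continuity of $\rho(y,\cdot)$ with constant $L$ and the bound $\max_{ij}\sup_\tau|X_{ij}(\tau)|$ controlled on a high-probability event (via $\mathbb{E}|X_{ij}|^m$ bounds and a union bound over $j\le 2p$, giving the factor $c_{np}$), the increments of $\nu_n(\alpha,\tau)-\nu_n(\alpha_0,\tau)$ in $\alpha$ are Lipschitz in $\ell_1$ with a data-dependent constant of order $c_{np}$; combined with Bernstein on the net points and a union bound over the net (whose cardinality contributes $\log p$, again absorbed into $c_{np}$) and over the $\le n$ values of $\tau$, this yields the bound $a_n m_{1n}$ with $a_n=L_1c_{np}\widetilde\delta^{-1}$ after choosing the confidence level $\widetilde\delta$.

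For \eqref{emp2} I would use a \emph{peeling} (slicing) device: partition the annulus $\{m_{1n}\le|\alpha-\alpha_0|_1\le m_{2n}\}$ into $\lceil\log_2(m_{2n}/m_{1n})\rceil$ dyadic shells $\{2^{k-1}m_{1n}\le|\alpha-\alpha_0|_1\le 2^k m_{1n}\}$, apply \eqref{emp1}-type bounds on each shell (with $m_{1n}$ replaced by $2^k m_{1n}$), and then divide by $|\alpha-\alpha_0|_1\ge 2^{k-1}m_{1n}$ on that shell so that the radius cancels; a union bound over the $\lceil\log_2(m_{2n}/m_{1n})\rceil$ shells produces exactly the extra factor $\lceil\log_2(m_{2n}/m_{1n})\rceil$ in $b_n$. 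The third inequality \eqref{emp3} is the most delicate and, I expect, the main obstacle: here $\alpha=\alpha_0$ is fixed, but one needs the \emph{modulus of continuity in $\tau$} near $\tau_0$, with the sharp scaling $\sqrt{\eta}|\delta_0|_2$. The key observation is that
\[
\rho(Y_i,X_i(\tau)^T\alpha_0)-\rho(Y_i,X_i(\tau_0)^T\alpha_0)
\]
is supported on $\{|Q_i-\tau_0|\le \eta\}$ (since $X_i(\tau)^T\alpha_0=X_i(\tau_0)^T\alpha_0$ whenever $Q_i$ lies outside $(\tau\wedge\tau_0,\tau\vee\tau_0]$), and on that event it is bounded in absolute value by $L|X_i^T\delta_0|1\{|Q_i-\tau_0|\le\eta\}$ by the Lipschitz property. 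Hence the variance of the increment over $\mathcal{T}_\eta$ is of order $\mathbb{E}[(X^T\delta_0)^2 1\{|Q-\tau_0|\le\eta\}]\lesssim |\delta_0|_2^2\,\eta$ using Assumption \ref{a:threshold} \ref{a:threshold:itm1} together with $\mathbb{P}(|Q-\tau_0|\le\eta)\lesssim \eta$ (Assumption \ref{a:dist-Q} \ref{a:dist-Q:itm1}). Then a maximal inequality over the $\le n$ distinct values of $1\{Q_i>\tau\}$ as $\tau$ ranges over $\mathcal{T}_\eta$ — i.e. the supremum is attained among finitely many configurations ordered by the order statistics of $\{Q_i\}$ — combined with Bernstein (or a Bousquet-type bound) applied to these increments, and the monotone structure of $\tau\mapsto 1\{Q_i>\tau\}$ to control the chaining entropy, gives the uniform bound $c_n|\delta_0|_2\sqrt{\eta}$ with $c_n=L_3 n^{-1/2}\widetilde\delta^{-1}$; the $n^{-1/2}$ (rather than $c_{np}$) appears because no $\ell_1$-net in $\alpha$ is needed, only the finitely many step-configurations, whose count is polynomial in $n$ and contributes a lower-order logarithmic factor absorbed into $L_3$.

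The main obstacle I anticipate is getting the \emph{sharp} $\sqrt{\eta}$ dependence in \eqref{emp3} uniformly over $\mathcal{T}_\eta$: a crude union bound over break points would lose a $\log n$ factor, so one must instead use the monotonicity of $1\{Q_i>\tau\}$ in $\tau$ (a VC-subgraph class of index $1$) to invoke a Talagrand/Bousquet concentration inequality for the supremum of the empirical process with the correct variance proxy $\mathbb{E}[(X^T\delta_0)^2 1\{|Q-\tau_0|\le\eta\}]$, and then bound the expected supremum by a Dudley entropy integral whose leading term is $\sqrt{\eta}|\delta_0|_2 n^{-1/2}$ up to constants. A secondary technical point is controlling the high-probability event on which $\max_{j\le 2p}\sup_\tau D_j(\tau)$ and $\max_{i,j}|X_{ij}|$ are bounded; this is standard given the sub-exponential tails implied by Assumption \ref{a:setting} \ref{a:setting:itm1} and contributes only to the universal constants $L_1,L_2,L_3$ and to the definition of $c_{np}$.
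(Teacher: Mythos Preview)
Your peeling scheme for \eqref{emp2} and your VC/entropy treatment of \eqref{emp3} are essentially what the paper does. But your plan for \eqref{emp1} has a genuine gap, and since \eqref{emp2} is built on the per-shell version of \eqref{emp1}, the gap propagates.

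The problem is the $\epsilon$-net step. An $\epsilon$-net (in $\ell_1$) of the ball $\{|\alpha-\alpha_0|_1\le m_{1n}\}\subset\mathbb{R}^{2p}$ has log-cardinality of order $p\log(1/\epsilon)$, not $\log p$; a union bound over such a net would inflate the constant by a factor of $p$, not $\log p$, and would destroy the rate. Relatedly, your claim that ``the increments of $\nu_n(\alpha,\tau)-\nu_n(\alpha_0,\tau)$ in $\alpha$ are Lipschitz in $\ell_1$ with a data-dependent constant of order $c_{np}$'' is not right as stated: the \emph{pathwise} $\ell_1$-Lipschitz constant of the sample average is $L\cdot\frac{1}{n}\sum_i\max_j|X_{ij}(\tau)|$, which is $O(\sqrt{\log p})$, not $O(\sqrt{\log(np)/n})$; the small constant $c_{np}$ is a property of the \emph{centered} process in expectation, not of its sample paths.

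The paper avoids both issues by never netting in $\alpha$. Instead it (i) applies the symmetrization theorem to bound $\mathbb{E}\sup_{\tau}\sup_{|\alpha-\alpha_0|_1\le m_{1n}}|\nu_n(\alpha,\tau)-\nu_n(\alpha_0,\tau)|$ by twice the corresponding Rademacher average, (ii) applies the contraction theorem (using the Lipschitz property of $\rho(y,\cdot)$) to strip off $\rho$ and obtain the linear Rademacher process $\frac{1}{n}\sum_i\epsilon_iX_i(\tau)^T(\alpha-\alpha_0)$, and (iii) uses the $\ell_1/\ell_\infty$ duality (H\"older) to get
\[
\sup_{|\alpha-\alpha_0|_1\le m_{1n}}\Big|\tfrac{1}{n}\sum_i\epsilon_iX_i(\tau)^T(\alpha-\alpha_0)\Big|\le m_{1n}\max_{j\le 2p}\Big|\tfrac{1}{n}\sum_i\epsilon_iX_{ij}(\tau)\Big|.
\]
Only now does one take a maximum over finitely many objects: $2p$ coordinates and, since $\tau\mapsto X_{ij}(\tau)$ is a step function with jumps at $\{Q_1,\dots,Q_n\}$, at most $n$ values of $\tau$. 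This gives $\mathbb{E}\max_{j,\tau}\le c_{np}$ via the Bernstein exponential-moment bound (the log-cardinality is $\log(4np)$, which is exactly where $c_{np}$ comes from). A final Markov inequality yields \eqref{emp1} with $a_n=L_1c_{np}\widetilde\delta^{-1}$. The same symmetrization/contraction/H\"older step is applied shell-by-shell for \eqref{emp2}, and also precedes the VC maximal inequality in \eqref{emp3}. The key ingredient you are missing is this symmetrization $+$ contraction $+$ $\ell_1/\ell_\infty$ duality, which replaces the (too large) net in $\alpha$ by a maximum over only $2p$ coordinates.
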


\noindent \textbf{Proof of \eqref{emp1}}: 
Let $\epsilon _{1},...,\epsilon _{n}$ denote a Rademacher sequence,
independent of $\{Y_{i},X_{i},Q_{i}\}_{i\leq n}$.
By the symmetrization theorem (see, for example, 
Theorem 14.3 of \cite{bulmann}) and then by the contraction theorem (see, for example, Theorem 14.4 of \cite{bulmann}),
\begin{align*}
& \mathbb{E} \left( \sup_{\tau \in \mathcal{T} }\sup_{\left\vert \alpha -\alpha _{0}\right\vert
_{1}\leq m_{1n}}\left\vert \nu _{n}\left( \alpha ,\tau \right) -\nu
_{n}\left( \alpha _{0},\tau \right) \right\vert \right) \\
& \leq 2\mathbb{E} \left( \sup_{\tau \in \mathcal{T} }\sup_{\left\vert \alpha -\alpha
_{0}\right\vert _{1}\leq m_{1n}}\left\vert \frac{1}{n}\sum_{i=1}^{n}\epsilon
_{i} \left[ \rho \left( Y_{i},X_{i}\left( \tau \right) ^{T}\alpha \right) -\rho
\left( Y_{i},X_{i}\left( \tau \right) ^{T}\alpha _{0}\right) \right] \right\vert
\right) \\
& \leq 4L\mathbb{E}\left( \sup_{\tau \in \mathcal{T} }\sup_{\left\vert \alpha -\alpha
_{0}\right\vert _{1}\leq m_{1n}}\left\vert \frac{1}{n}\sum_{i=1}^{n}\epsilon
_{i}X_{i}\left( \tau \right) ^{T}\left( \alpha -\alpha _{0}\right)
\right\vert \right).
\end{align*}%
Note that 
\begin{align}\label{holder-ineq}
\begin{split}
\lefteqn{
\sup_{\tau \in \mathcal{T} } \sup_{\left\vert \alpha -\alpha _{0}\right\vert _{1}  \leq m_{1n} } \left\vert 
\frac{1}{n}\sum_{i=1}^{n}\epsilon _{i}X_{i}\left( \tau \right) ^{T}\left( \alpha
-\alpha _{0}\right) \right\vert } \\
&= \sup_{\tau \in \mathcal{T} } \sup_{\left\vert \alpha -\alpha _{0}\right\vert _{1}  \leq m_{1n} } \left\vert 
\sum_{j=1}^{2p} \left( \alpha_j
-\alpha _{0j}\right) \frac{1}{n}\sum_{i=1}^{n} \epsilon _{i}X_{ij}\left( \tau \right)  \right\vert \\
&\leq  \sup_{\left\vert \alpha -\alpha _{0}\right\vert _{1}  \leq m_{1n} } 
 \sum_{j=1}^{2p} \left| \alpha_j
-\alpha _{0j}\right| 
\sup_{\tau \in \mathcal{T} }  \max_{j\leq 2p}\left\vert \frac{1}{
n}\sum_{i=1}^{n}\epsilon _{i}X_{ij}\left( \tau \right) \right\vert  
\\
&\leq m_{1n} \sup_{\tau \in \mathcal{T} } 
\max_{j\leq 2p}\left\vert \frac{1}{%
n}\sum_{i=1}^{n}\epsilon _{i}X_{ij}\left( \tau \right) \right\vert .
\end{split}
\end{align}%
For all $\tilde{L} > K_1$, 
\begin{align*}
\mathbb{E}\left( \sup_{\tau \in \mathcal{T} } \max_{j\leq 2p}\left\vert  \sum_{i=1}^{n}\epsilon _{i}X_{ij}\left( \tau \right) \right\vert \right) 
&\leq_{(1)}
\tilde{L} \log \mathbb{E} \left[ \exp \left( \tilde{L}^{-1} \sup_{\tau \in \mathcal{T} } \max_{j\leq 2p}\left\vert  \sum_{i=1}^{n}\epsilon _{i}X_{ij}\left( \tau \right) \right\vert \right) \right] \\
&\leq_{(2)}
\tilde{L} \log \mathbb{E} \left[ \exp \left( \tilde{L}^{-1} \max_{\tau \in 
\{Q_1,\ldots, Q_n \} } \max_{j\leq 2p}\left\vert  \sum_{i=1}^{n}\epsilon _{i}X_{ij}\left( \tau \right) \right\vert \right) \right] \\
&\leq_{(3)} \tilde{L} \log  \left[  4np 
\exp \left( \frac{n}{2(\tilde{L}^2 - \tilde{L}K_1)} \right) \right], 
\end{align*}
where inequality $(1)$ follows from Jensen's inequality,
inequality $(2)$ comes from the fact that $X_{ij}\left( \tau
\right)$ is a step function with jump points on $\mathcal{T} \cap
\{Q_1,\ldots, Q_n \}$, and
inequality $(3)$ is by Bernstein's inequality for the exponential moment of an average (see, for example, Lemma 14.8 of \cite{bulmann}),
combined with the simple inequalities that 
$\exp(|x|) \leq \exp(x) + \exp(-x)$
and  that $\exp(\max_{1 \leq j \leq J} x_j) \leq \sum_{j=1}^J \exp( x_j)$.
Then it follows that
\begin{align}\label{eqa.5}
\mathbb{E}\left( \sup_{\tau \in \mathcal{T} } \max_{j\leq 2p}\left\vert  \frac{1}{n}\sum_{i=1}^{n}\epsilon _{i}X_{ij}\left( \tau \right) \right\vert \right) 
&\leq
  \frac{\tilde{L} \log (4np)}{n}  
+  \frac{1}{2(\tilde{L} - K_1)} = c_{np},
\end{align}
where the last equality follows by  taking $\tilde{L} = K_1 + \sqrt{n/[2\log(4np)]}$.
Thus, by Markov's inequality,
\begin{equation*}
\mathbb{P} \left\{ \sup_{\tau \in \mathcal{T} }\sup_{\left\vert \alpha -\alpha _{0}\right\vert
_{1}\leq m_{1n}}\left\vert \nu _{n}\left( \alpha ,\tau \right) -\nu
_{n}\left( \alpha _{0},\tau \right) \right\vert >a_{n}m_{1n}\right\} \leq
\left( a_{n}m_{1n}\right) ^{-1}4Lm_{1n}c_{np}=\widetilde{\delta},
\end{equation*}
where the last equality follows by setting $L_1 = 4L$.

\noindent \textbf{Proof of \eqref{emp2}}: 
Recall that $\epsilon _{1},...,\epsilon _{n}$ is a Rademacher sequence,
independent of $\{Y_{i},X_{i},Q_{i}\}_{i\leq n}$.
Note that 
\begin{align*}
& \mathbb{E}
\left( \sup_{\tau \in \mathcal{T} }\sup_{m_{1n}\leq \left\vert \alpha -\alpha
_{0}\right\vert _{1}\leq m_{2n}}\frac{\left\vert \nu _{n}\left( \alpha ,\tau
\right) -\nu _{n}\left( \alpha _{0},\tau \right) \right\vert }{\left\vert
\alpha -\alpha _{0}\right\vert _{1}} \right) \\
& \leq_{(1)} 2\mathbb{E}\left( \sup_{\tau \in \mathcal{T} }\sup_{m_{1n}\leq \left\vert \alpha
-\alpha _{0}\right\vert _{1}\leq m_{2n}}\left\vert \frac{1}{n}%
\sum_{i=1}^{n}\epsilon _{i}\frac{\rho \left( Y_{i},X_{i}\left( \tau \right)
^{T}\alpha \right) -\rho \left( Y_{i},X_{i}\left( \tau \right) ^{T}\alpha
_{0}\right) }{\left\vert \alpha -\alpha _{0}\right\vert _{1}}\right\vert
\right) \\
& \leq_{(2)} 2\sum_{j=1}^{k}\mathbb{E} \left( \sup_{\tau \in \mathcal{T} }\sup_{2^{j-1}m_{1n}\leq
\left\vert \alpha -\alpha _{0}\right\vert _{1}\leq 2^{j}m_{1n}}\left\vert 
\frac{1}{n}\sum_{i=1}^{n}\epsilon _{i}\frac{\rho \left( Y_{i},X_{i}\left( \tau
\right) ^{T}\alpha \right) -\rho \left( Y_{i},X_{i}\left( \tau \right)
^{T}\alpha _{0}\right) }{2^{j-1}m_{1n}}\right\vert \right) \\
& \leq_{(3)} 4L\sum_{j=1}^{k}\mathbb{E}\left( \sup_{\tau \in \mathcal{T} }\sup_{2^{j-1}m_{1n}\leq
\left\vert \alpha -\alpha _{0}\right\vert _{1}\leq 2^{j}m_{1n}}\left\vert 
\frac{1}{n}\sum_{i=1}^{n}\epsilon _{i}\frac{X_{i}\left( \tau \right) ^{T}\left(
\alpha -\alpha _{0}\right) }{2^{j-1}m_{1n}}\right\vert \right),
\end{align*}
where inequality (1) is by  the symmetrization theorem, 
inequality (2) holds for  some $k \equiv \lceil \log_{2}\left( m_{2n}/m_{1n}\right) \rceil$, and inequality (3) follows from  the contraction theorem.

Next, identical arguments showing \eqref{holder-ineq} yield
\begin{align*}
\sup_{2^{j-1}m_{1n}\leq \left\vert \alpha -\alpha _{0}\right\vert _{1}\leq
2^{j}m_{1n}}\left\vert \frac{1}{n}\sum_{i=1}^{n}\epsilon _{i}\frac{X_{i}\left(
\tau \right) ^{T}\left( \alpha -\alpha _{0}\right) }{2^{j-1}m_{1n}}%
\right\vert \leq 2\max_{j\leq 2p}\left\vert \frac{1}{n}\sum_{i=1}^{n}\epsilon
_{i}X_{ij}\left( \tau \right) \right\vert
\end{align*}
uniformly in $\tau \in \mathcal{T}$.
Then, as in the proof of \eqref{emp1},  
by Bernstein's and Markov's inequalities,%
\begin{equation*}
\mathbb{P} \left\{ \sup_{\tau \in \mathcal{T}}\sup_{m_{1n}\leq \left\vert \alpha -\alpha
_{0}\right\vert _{1}\leq m_{2n}}\frac{\left\vert \nu _{n}\left( \alpha ,\tau
\right) -\nu _{n}\left( \alpha _{0},\tau \right) \right\vert }{\left\vert
\alpha -\alpha _{0}\right\vert _{1}}>b_{n}\right\} \leq
b_{n}^{-1}8Lkc_{np}=\widetilde{\delta},
\end{equation*}
where the last equality follows by setting $L_2 = 8L$.

\noindent \textbf{Proof of \eqref{emp3}}: As above, by the
symmetrization and contraction theorems, we have that 
\begin{eqnarray*}
&&\mathbb{E} \left( \sup_{\tau \in \mathcal{T}_{\eta }}\left\vert \nu _{n}\left(
\alpha _{0},\tau \right) -\nu _{n}\left( \alpha _{0},\tau _{0}\right)
\right\vert \right) \\
&\leq &2\mathbb{E} \left( \sup_{\tau \in \mathcal{T}_{\eta }}\left\vert \frac{1}{n}%
\sum_{i=1}^{n}\epsilon _{i} \left[ \rho \left( Y_{i},X_{i}\left( \tau \right) ^{T}\alpha
_{0}\right) -\rho \left( Y_{i},X_{i}\left( \tau _{0}\right) ^{T}\alpha
_{0}\right) \right\vert \right] \right) \\
&\leq &4L\mathbb{E} \left( \sup_{\tau \in \mathcal{T}_{\eta }}\left\vert \frac{1}{n}%
\sum_{i=1}^{n}\epsilon _{i}X_{i}^{T}\delta _{0}\left( 1\left\{ Q_{i}>\tau
\right\} -1\left\{ Q_{i}>\tau _{0}\right\} \right) \right\vert \right) \\
&\leq &\frac{4 L C_{1} ( M |\delta_0|_2^2 K_2 \eta)^{1/2} }{\sqrt{n}}
\end{eqnarray*}%
for some constant $C_1 < \infty$, where the last inequality is due to Theorem 2.14.1 of 
\cite{VW} with $M$  in Assumption \ref{a:threshold} \ref{a:threshold:itm1}
and $K_2$ in Assumption \ref{a:dist-Q} \ref{a:dist-Q:itm1}. Specifically, we apply the second inequality of this theorem to the class $\mathcal{F}=\{f(\epsilon,  X, Q, \tau)=\epsilon X^T\delta_0(1\{Q>\tau\}-1\{Q>\tau_0\}), \tau\in\mathcal{T}_{\eta}\}$. Note that   $\mathcal{F}$ is a Vapnik-Cervonenkis class, which has a uniformly bounded entropy integral and thus $ J ( 1, \mathcal{F} ) $ in their theorem is bounded, and that the $ L_2 $ norm of the envelope $ |\epsilon_i X^T_i \delta_0|1\{|Q_i - \tau_0 |< \eta \} $ is proportional to the square root of the length of $ \mathcal{T}_{\eta} $: $$
(E|\epsilon_i X^T_i \delta_0|^21\{|Q_i - \tau_0 |< \eta \})^{1/2}\leq (2M |\delta_0|_2^2 K_2\eta)^{1/2}.
$$ This implies the last inequality with $C_1$ being $\sqrt{2}$ times  the entropy integral of the class $\mathcal{F}$. 
Then, by Markov's inequality, we obtain \eqref{emp3} with $L_3 = 4 L C_{1} ( MK_2)^{1/2}$.

%\textbf{Yuan: Myung, thank you for the updating. But it is still a bit confusing to me and to readers who are not familiar with empirical process theories. Could you please be even clearer by adding these two things:  (1) specify which inequality of Theorem 2.14.1? That theorem has two inequalities.  (2) when talking about the entry integral, please specify the space $\mathcal{F} $ with the endowed norm $\|F\|_{Q,2}$, (using van der vaart and wellner's notation). Is it true that $\mathcal{F}=\{f(\epsilon X^T\delta_0)=\epsilon X^T\delta_0(1\{Q>\tau\}-1\{Q>\tau_0\}), \tau\in\mathcal{T}_{\eta}\}$? If so, what is the norm $\|F\|_{Q, 2}$ here?}

\subsection{Proof of Theorem \protect\ref{l2.1}}

It follows from the definition of $(\widehat{\alpha},\widehat{\tau})$ in \eqref{eq2.2add}
that 
\begin{align*}
\frac{1}{n}\sum_{i=1}^{n}\rho (Y_{i},X_{i}(\widehat{\tau})^{T}\widehat{%
\alpha })+\lambda _{n}|\widehat{D}\widehat{\alpha }|_{1}\leq \frac{1}{n}%
\sum_{i=1}^{n}\rho (Y_{i},X_{i}(\tau _{0})^{T}\alpha _{0})+\lambda
_{n}|D_{0}\alpha _{0}|_{1}.
\end{align*}
From this, we obtain the following inequality 
\begin{eqnarray}
R(\widehat{\alpha },\widehat{\tau}) &\leq &
\left[ \nu _{n}(\alpha_{0},\tau_{0})-\nu _{n}(\widehat{\alpha},\widehat{\tau })\right] +\lambda
_{n}|D_{0}\alpha _{0}|_{1}-\lambda _{n}|\widehat{D}\widehat{\alpha }|_{1} 
\notag \\
&=&\left[ \nu _{n} (\alpha _{0},\widehat{\tau})-\nu _{n}(\widehat{\alpha },\widehat{\tau})%
\right] +\left[ \nu _{n} (\alpha _{0},\tau _{0})-\nu _{n}(\alpha _{0},\widehat{\tau})%
\right]  \label{basic ineq} \\
&&+\lambda _{n}\left( |D_{0}\alpha _{0}|_{1}-|\widehat{D}\widehat{\alpha }%
|_{1}\right) .  \notag
\end{eqnarray}%
Note that the second component $\left[ \nu _{n}(\alpha _{0},\tau _{0})-\nu _{n}(%
\alpha _{0},\widehat{\tau})\right] =o_P\left[ (s/n)^{1/2}\log n\right] $ due to
\eqref{emp3} of Lemma \ref{lem-emp}
with taking $\mathcal{T}_{\eta } = \mathcal{T}$
by choosing some sufficiently large $\eta > 0$. Thus, we focus on the other two terms in the following
discussion. We consider two cases respectively:  $\left\vert \widehat{\alpha}-\alpha _{0}\right\vert _{1}\leq
\left\vert \alpha _{0}\right\vert _{1}$ and $\left\vert \widehat{\alpha}-\alpha _{0}\right\vert _{1}>
\left\vert \alpha _{0}\right\vert _{1}$.  

Suppose that $\left\vert \widehat{\alpha}-\alpha _{0}\right\vert _{1}\leq
\left\vert \alpha _{0}\right\vert _{1}.$ Then, $\left\vert \widehat{D}\widehat{%
\alpha}\right\vert _{1}\leq \left\vert \widehat{D}\left( \widehat{\alpha}-\alpha
_{0}\right) \right\vert _{1}+\left\vert \widehat{D}\alpha _{0}\right\vert
_{1}\leq 2\bar D \left\vert \alpha _{0}\right\vert _{1},$ and%
\begin{equation*}
\left\vert \lambda _{n}\left( |D_{0}\alpha _{0}|_{1}-|\widehat{D}\widehat{%
\alpha }|_{1}\right) \right\vert \leq 3\lambda _{n}\bar D \left\vert
\alpha _{0}\right\vert _{1}.
\end{equation*}%
Apply 
\eqref{emp1} in Lemma \ref{lem-emp} with $m_{1n}=\left\vert \alpha _{0}\right\vert _{1}$ to obtain
\begin{equation*}
\left\vert \nu (\alpha _{0},\widehat{\tau})-\nu _{n}(\widehat{\alpha }, \widehat{\tau})\right\vert \leq a_{n}\left\vert \alpha _{0}\right\vert _{1}\leq \lambda
_{n}\left\vert \alpha _{0}\right\vert _{1},
\end{equation*}%
with probability approaching one (w.p.a.1), where the last inequality
follows from the fact that 
$a_n \ll \lambda_n$ since
$\lambda_n = C \omega_n$ for some constant $C > 0$ with $\omega_n$ defined in \eqref{omega_rate}.
Thus, the theorem follows in this case.

Now assume that $\left\vert \widehat{\alpha}-\alpha _{0}\right\vert
_{1}>\left\vert \alpha _{0}\right\vert _{1}$. 
In this case, apply \eqref{emp2} of Lemma \ref{lem-emp} with $m_{1n}=\left\vert \alpha _{0}\right\vert _{1}$
and $m_{2n}=2M_1 p$, where $M_1$ is defined in Assumption \ref{a:setting}\ref{a:setting:itm2},  to obtain
\begin{align*}
\frac{ \left\vert \nu (\alpha _{0},\widehat{\tau})-\nu _{n}(\widehat{\alpha },\widehat{\tau})\right\vert}{\left\vert \widehat{\alpha}-\alpha _{0}\right\vert
_{1}}  \leq b_{n}  \; \text{  w.p.a.1}.
\end{align*}%
Since $b_n \ll \underline{D} \lambda_n$, we have, with probability approaching one, \begin{align*}
\left\vert \nu (\alpha _{0},\widehat{\tau})-\nu _{n}(\widehat{\alpha },\widehat{\tau})\right\vert
  \leq \lambda_{n} \underline{D} \left\vert \widehat{\alpha}-\alpha _{0}\right\vert
_{1} 
\leq \lambda_{n} \left\vert 
\widehat{D}\left( \widehat{\alpha}-\alpha _{0}\right) \right\vert _{1}.
\end{align*}%
Therefore, 
\begin{align*}
 R(\widehat{\alpha },\widehat{\tau}) +o_P\left( n^{-1/2}\log n\right) 
&\leq
\lambda _{n}\left( |D_{0}\alpha _{0}|_{1}-|\widehat{D}\widehat{\alpha }%
|_{1}\right) +\lambda _{n}\left\vert \widehat{D}\left( \widehat{\alpha}-\alpha
_{0}\right) \right\vert _{1} \\
&\leq
\lambda _{n}\left( |D_{0}\alpha _{0}|_{1}-|\widehat{D}\widehat{\alpha}_{J}%
|_{1}\right) +\lambda _{n}\left\vert \widehat{D}\left( \widehat{\alpha}-\alpha
_{0}\right)_{J} \right\vert _{1},
\end{align*}%
where the last inequality follows from the fact that $\widehat{\alpha}-\alpha _{0}=\widehat{\alpha}_{J^{C}}+\left( \widehat{\alpha}%
-\alpha _{0}\right) _{J}.$ 
Thus, the theorem follows in this case as well.

\subsection{Proof of Theorem \protect\ref{th2.2}}

Recall from \eqref{eq2.2} that for all $\alpha=(\beta^T, \delta^T)^T\in\mathbb{R}^{2p}$ and $%
\theta=\beta+\delta$, the excess risk has the following decomposition: when $%
\tau>\tau_0$, 

\begin{align}  \label{eqaa2.2}
\begin{split}
R\left( \alpha, \tau\right) & =\mathbb{E} \left( 
\left[\rho\left( Y,X^T\beta\right)
-\rho\left( Y,X^T\beta _{0}\right) \right] 1\left\{ Q\leq\tau_0\right\} \right) \\
&\;\;\;+\mathbb{E} \left(  \left[ \rho\left( Y,X^T\theta\right) -\rho\left( Y,X^T\theta_{0}\right)
\right] 1\left\{ Q>\tau \right\} \right) \\
&\;\;\;+\mathbb{E} \left(   \left[ \rho\left( Y,X^T\beta\right) -\rho\left( Y,X^T\theta_{0}\right)
\right] 1\left\{ \tau_0<Q\leq\tau \right\} \right).
\end{split} 
\end{align}
We split the proof into four steps.

\noindent
\textbf{Step 1}: 
All the three terms on the right
hand side (RHS) of \eqref{eqaa2.2} are nonnegative. As a consequence,  all the three
terms on the RHS of (\ref{eqaa2.2}) are bounded by $R(\alpha,\tau)$.

\begin{proof}[Proof of Step 1]
Step 1 is implied by the  condition that $\mathbb{E}[\rho(Y, X(\tau_0)^T\alpha)-\rho(Y,
X(\tau_0)^T\alpha_0)|Q]\geq0$ a.s. for all $\alpha 
\in \mathcal{A}$.  To see this, the first two terms are
nonnegative by simply multiplying $\mathbb{E}[\rho(Y, X(\tau_0)^T\alpha)-\rho(Y,
X(\tau_0)^T\alpha_0)|Q]\geq0$ with $1\{Q\leq\tau_0\}$ and $1\{Q>\tau\}$
respectively. To show that the third term is nonnegative for all $\beta\in%
\mathbb{R}^p$ and $\tau>\tau_0,$ set $\alpha=(\beta/2,\beta/2)$ in the
inequality $1\{\tau_0<Q\leq\tau\}\mathbb{E}[\rho(Y, X(\tau_0)^T\alpha)-\rho(Y,
X(\tau_0)^T\alpha_0)|Q]\geq0$. Then we have that  
\begin{equation*}
1\{\tau_0<Q\leq\tau\} \mathbb{E}[\rho(Y, X^T(\beta/2+\beta/2))-\rho(Y, X
^T\theta_0)|Q]\geq0,
\end{equation*}
which yields the nonnegativeness of the third term. 
\end{proof}

\noindent
\textbf{Step 2}:  Let $a\vee b=\max(a, b)$ and $a \wedge b=\min(a, b)$. Prove:
\begin{align*}
\mathbb{E} \left[ |X^{T}(\beta -\beta _{0})|1\{Q\leq \tau _{0}\} \right] 
\leq \frac{1}{\eta^\ast r^\ast} R(\alpha ,\tau) \vee
\left[ \frac{1}{\eta^\ast} R(\alpha ,\tau) \right]^{1/2}.
\end{align*}

\begin{proof}[Proof of Step 2]

Recall that
\begin{align*}
r_1(\eta) \equiv &\sup_r \Big\{ r: \mathbb{E} \left( \left[  \rho \left( Y,X^{T}\beta \right) -\rho
\left( Y,X^{T}\beta_0 \right) \right] 1\left\{ Q\leq \tau
_{0}\right\} \right) \\
& \;\;\;\;\;\;\;\;\;\;\;
\geq \eta \mathbb{E}[(X^T(\beta-\beta_0))^2 1\{Q\leq\tau_0\}]  
\textrm{ for all $\beta \in \mathcal{B}(\beta_0, r)$}
\Big \}.
\end{align*}
For notational simplicity, write $$\mathbb{E}[(X^T(\beta-\beta_0))^2 1\{Q\leq\tau_0\}]  \equiv \|\beta-\beta_0\|_{q}^2,$$
and 
$$
F(\delta) \equiv \mathbb{E} \left( \left[  \rho \left( Y,X^{T}(\beta_0+\delta) \right) -\rho
\left( Y,X^{T}\beta_0 \right) \right] 1\left\{ Q\leq \tau
_{0}\right\} \right). 
$$ 
Note that $ F(\beta-\beta_0)=\mathbb{E} \left( \left[  \rho \left( Y,X^{T}\beta \right) -\rho
\left( Y,X^{T}\beta_0 \right) \right] 1\left\{ Q\leq \tau
_{0}\right\} \right)$, and $\beta\in\mathcal{B}(\beta_0,r)$ if and only if $\|\beta-\beta_0\|_q\leq r.$

%Note that when $\|\beta-\beta_0\|_q< r_1(\eta^*)$, 
%$$ R(\alpha,\tau)\geq \eta^\ast \mathbb{E}[(X^T(\beta-\beta_0))^2 1\{Q\leq\tau_0\}]. $$
For any $\beta$, if $\|\beta-\beta_0\|_q\leq r_1(\eta^*)$, then by the definition of $r_1(\eta^*)$, we have:
$$
F(\beta-\beta_0)\geq \eta^*\mathbb{E}[(X^T(\beta-\beta_0))^2 1\{Q\leq\tau_0\}]. 
$$
If $\|\beta-\beta_0\|_q> r_1(\eta^*)$,  let $t=r_1(\eta^*)\|\beta-\beta_0\|_q^{-1}\in(0,1)$. Since $F(.)$ is convex, and $F(0)=0$, we have $F(\beta-\beta_0)\geq t^{-1}F(t(\beta-\beta_0))$.  Moreover, 
define 
$$
\check\beta=\beta_0+r_1(\eta^*)\frac{\beta-\beta_0}{\|\beta-\beta_0\|_q},
$$
 then $\|\check\beta-\beta_0\|_q=r_1(\eta^*)$ and $t(\beta-\beta_0)=\check\beta-\beta_0.$ 
  Hence still by the definition of $r_1(\eta^*)$, 
 $$
 F(\beta-\beta_0)\geq \frac{1}{t}F(\check\beta-\beta_0)\geq \frac{\eta^*}{t}\mathbb{E}[(X^T(\check\beta-\beta_0))^2 1\{Q\leq\tau_0\}]=\eta^*r_1(\eta^*)\|\beta-\beta_0\|_q.
 $$
Therefore, by Assumption \ref{a:obj-ftn} \ref{a:obj-ftn:itm3}, and Step 1, 
 \begin{align*}
R(\alpha ,\tau) & \geq \mathbb{E} \left( \left[  \rho \left( Y,X^{T}\beta \right) -\rho
\left( Y,X^{T}\beta_0 \right) \right] 1\left\{ Q\leq \tau
_{0}\right\} \right) \\
 &\geq \eta^\ast \mathbb{E}[(X^T(\beta-\beta_0))^2 1\{Q\leq\tau_0\}]  
\wedge \eta^\ast r^\ast \{\mathbb{E}[(X^T(\beta-\beta_0))^2 1\{Q\leq\tau_0\}]\}^{1/2} \\
&\geq
\eta^\ast \left( \mathbb{E} \left[ |X^{T}(\beta -\beta _{0})|1\{Q\leq \tau _{0}\} \right] \right)^{2} 
\wedge \eta^\ast r^\ast  \mathbb{E} \left[ |X^{T}(\beta -\beta _{0})|1\{Q\leq \tau _{0}\} \right],
\end{align*}
where the last inequality follows from 
 Jensen's inequality. 
\end{proof}

\noindent
\textbf{Step 3}: 
For any $\epsilon'>0$, there is an 
$\varepsilon >0$ such that for all $\tau >\tau _{0}$, and $\alpha \in 
\mathbb{R}^{2p}$,  
$R(\alpha,\tau)< \varepsilon$ implies $|\tau-\tau_0|<\epsilon'.$

\begin{proof}[Proof of Step 3]
We first prove that, for any $\epsilon'>0$, there is 
$\varepsilon >0$ such that for all $\tau >\tau _{0}$, and $\alpha \in 
\mathbb{R}^{2p}$,  
 $R(\alpha ,\tau )< \varepsilon $ implies that $\tau <\tau _{0}+\epsilon'.$

Suppose that $R(\alpha ,\tau )< \varepsilon$.
Applying the triangle inequality, for all $\beta $ and $\tau >\tau _{0},$ 
\begin{align}\label{ER3}
\begin{split}
\lefteqn{ \mathbb{E} \left[ \left( \rho \left( Y,X^{T}\beta _{0}\right) -\rho \left( Y,X^{T}\theta
_{0}\right) \right) 1\left\{ \tau _{0}<Q\leq \tau \right\} \right] } \\
& \leq \left\vert \mathbb{E} \left[ \left( \rho \left( Y,X^{T}\beta \right) -\rho \left(
Y,X^{T}\theta _{0}\right) \right) 1\left\{ \tau _{0}<Q\leq \tau \right\} \right]
\right\vert \\
&+\left\vert \mathbb{E} \left[ \left( \rho \left( Y,X^{T}\beta \right) -\rho
\left( Y,X^{T}\beta _{0}\right) \right) 1\left\{ \tau _{0}<Q\leq \tau
\right\} \right] \right\vert .  
\end{split}
\end{align}%
First, note that the first term on the RHS of (\ref{ER3}) is the third
term on the RHS of (\ref{eqaa2.2}), hence is bounded by $R(\alpha ,\tau
)<\varepsilon $.

We now consider the second term on the RHS of (\ref{ER3}).
Assumption \ref{a:moment} \ref{a:moment:itm1} implies, with $C_{1}^{T
}=C_{1}^{-1}\left( 1-C_{1}\right) >0$ and $C_{2}^{T }=C_{2}^{-1}\left(
1-C_{2}\right) >0,$ for all $\beta \in \mathbb{R}^{p},$
\begin{equation}\label{eqa.10}
C_{2}^{T } \mathbb{E} \left[ |X^{T}\beta |1\left\{ Q>\tau _{0}\right\} \right] \leq \mathbb{E} \left[ |X^{T}\beta
|1\left\{ Q \leq \tau _{0}\right\} \right] \leq C_{1}^{T } \mathbb{E} \left[|X^{T}\beta |1\left\{
Q>\tau _{0}\right\} \right].
\end{equation}%
It follows from the Lipschitz condition (Assumption \ref{a:obj-ftn} \ref{a:obj-ftn:itm1}), 
Step 2, and \eqref{eqa.10} that 
\begin{align*}
\left| \mathbb{E} \left[ \left( \rho \left( Y,X^{T}\beta \right) -\rho \left( Y,X^{T}\beta
_{0}\right) \right) 1\left\{ \tau _{0}<Q\leq \tau \right\} \right] \right|
&\leq L \mathbb{E} \left[ \left\vert
X^{T}\left( \beta -\beta _{0}\right) \right\vert 1\left\{ \tau _{0}<Q\leq
\tau \right\} \right] \\
& \leq L \mathbb{E} \left[ \left\vert X^{T}\left( \beta -\beta _{0}\right) \right\vert
1\left\{ \tau _{0}<Q\right\} \right] \\
&\leq L C_{2}^{T ^{-1}} 
\mathbb{E}\left[ \left\vert
X^{T}\left( \beta -\beta _{0}\right) \right\vert 1\left\{ Q\leq \tau
_{0}\right\} \right] 
\\
&\leq LC_{2}^{T ^{-1}} 
\left\{  \varepsilon/(\eta^\ast r^\ast)  \vee
\sqrt{\varepsilon/\eta^\ast} \, \right\}
 \\
&\equiv  C(\varepsilon).
\end{align*}%
Thus, we have shown that  (\ref{ER3}) is bounded by $C(\varepsilon) +\varepsilon$.

For any $\epsilon'>0$, it   follows from Assumptions \ref{a:obj-ftn} \ref{a:obj-ftn:itm2}, \ref{a:obj-ftn} \ref{a:obj-ftn:itm4} and \ref{a:threshold} \ref{a:threshold:itm2} (see also Remark \ref{remark-tau-whole-set})
that there is a $c>0$ such that if $\tau >\tau _{0}+\epsilon'$,
\begin{align*}
c \mathbb{P} \left( \tau _{0} <Q\leq \tau _{0}+\epsilon' \right) 
&\leq c \mathbb{P} \left( \tau _{0}<Q\leq
\tau \right) \\
&\leq \mathbb{E} \left[ \left( \rho \left( Y,X^{T}\beta _{0}\right) -\rho \left( Y,X^{T}\theta
_{0}\right) \right) 1\left\{ \tau _{0}<Q\leq \tau \right\} \right] \\
& \leq C(\varepsilon) +\varepsilon.
\end{align*}%
Since $\varepsilon \mapsto C(\varepsilon) +\varepsilon$ converges to zero as 
$\varepsilon$ converges to zero,
for a given $\epsilon' > 0$ choose a sufficient small $\varepsilon > 0$
such that $C(\varepsilon) +\varepsilon <  c \mathbb{P}(\tau _{0}<Q\leq \tau _{0}+\epsilon')$, so that the above inequality cannot hold. Hence we  infer that for this $\varepsilon $%
, when $R(\alpha ,\tau )<\varepsilon$, we must
have $\tau <\tau _{0}+\epsilon'$.

By the same argument, if $\tau<\tau_0,$ then we must have $%
\tau>\tau_0-\epsilon'.$ Hence, $R(\alpha,\tau)<\varepsilon$ implies $|\tau-\tau_0|<\epsilon'.$
\end{proof}

\noindent
\textbf{Step 4}:  $\widehat{\tau}\overset{p}{\longrightarrow}\tau_{0}$.

\begin{proof}[Proof of Step 4]
For the $\varepsilon$ chosen in Step 3,  consider the event $\{R(\widehat{\alpha },\widehat{\tau})<\varepsilon \}$, which occurs with probability
approaching one due to Theorem \ref{l2.1}. On this event, $|\widehat{\tau}-\tau
_{0}|<\epsilon'$ by Step 3. Because $\epsilon'$ is taken
arbitrarily, we have proved the consistency of $\widehat{\tau}.$
\end{proof}

\subsection{Proof of Theorem \protect\ref{th2.3}}

The proof consists of several steps. First, we prove that   $\widehat\beta$ and $\widehat \theta$ are  inside the neighborhoods of $\beta_0$
and $\theta_0$, respectively.   Second, we obtain an intermediate
convergence rate for $\widehat{\tau}$ based on the consistency of the risk and $%
\widehat{\tau}$.  %assisted by the additional regularity conditions in Section \ref{sec:rate-identifiable}.
Finally, we use   the compatibility condition to obtain a tighter bound. 

\noindent
\textbf{Step 1}: For any $r>0$, with probability approaching one (w.p.a.1), $\widehat\beta\in\mathcal{B}(\beta_0, r)$ and  $\widehat\theta\in\mathcal{G}(\theta_0, r)$.

\begin{proof}[Proof of Step 1]
Suppose that $\widehat \tau >\tau_0$.
The proof of Step 2 in the proof of Theorem \protect\ref{th2.2}
implies that when $\tau > \tau_0$,
\begin{align*}
\mathbb{E} \left[ (X^{T}(\beta -\beta _{0}))^2 1\{Q\leq \tau _{0}\} \right] 
\leq \frac{R(\alpha ,\tau)^2}{(\eta^\ast r^\ast)^2}  \vee
\frac{R(\alpha ,\tau)}{\eta^\ast}.
\end{align*}
For any $r>0$, note that $R(\widehat\alpha,\widehat\tau)=o_P(1)$ implies that the event $R(\widehat\alpha,\widehat\tau)<r^2$ holds w.p.a.1. 
Therefore, we have shown that $\widehat\beta\in\mathcal{B}(\beta_0,r)$. 

We now show that $\widehat\theta\in\mathcal{G}(\theta_0,r)$. 
%Note that 
%\begin{align}\label{theta-norm-decom}
%\mathbb{E} \left[ (X^{T}(\theta -\theta _{0}))^2 1\{Q >  \tau_0\} \right] 
%&=
%\mathbb{E} \left[ (X^{T}(\theta -\theta _{0}))^2 1\{Q >  \tau \} \right] 
%+ \mathbb{E} \left[ (X^{T}(\theta -\theta _{0}))^2 1\{  \tau \geq  Q >  \tau_0 \} \right]. 
%\end{align}
 When $\tau > \tau_0$, we have that
\begin{align*}
R(\alpha ,\tau) 
& \geq_{(1)} \mathbb{E} 
\left( \left[  \rho \left( Y,X^{T}\theta \right) -\rho \left( Y,X^{T}\theta_0 \right) \right] 1\left\{ Q > \tau \right\} \right) \\
& = \mathbb{E} 
\left( \left[  \rho \left( Y,X^{T}\theta \right) -\rho \left( Y,X^{T}\theta_0 \right) \right] 1\left\{ Q > \tau_0 \right\} \right) \\
&- \mathbb{E} 
\left( \left[  \rho \left( Y,X^{T}\theta \right) -\rho \left( Y,X^{T}\theta_0 \right) \right] 1\left\{ \tau\geq Q > \tau_0 \right\} \right)  \\
%& \geq_{(2)} \mathbb{E} \left( \left[  \rho \left( Y,X^{T}\theta \right) -\rho \left( Y,X^{T}\theta_0 \right) \right] 1\left\{ Q > \tau_0 \right\} \right) - \mathbb{E} \left[ (X^{T}(\theta -\theta _{0}))^2 1\{  \tau \geq  Q >  \tau_0 \} \right] \\
& 
\geq_{(2)}  \eta^\ast  \mathbb{E} \left[ |X^{T}(\theta -\theta_{0})|^2 1\{Q > \tau_0\} \right]  
\wedge \eta^\ast r^\ast  \left( \mathbb{E} \left[ |X^{T}(\theta -\theta _{0})|^2 1\{Q > \tau_0\} \right] \right)^{1/2} \\
&\;\;\;\;\;\ -  \mathbb{E} 
\left( \left[  \rho \left( Y,X^{T}\theta \right) -\rho \left( Y,X^{T}\theta_0 \right) \right] 1\left\{ \tau\geq Q > \tau_0 \right\} \right), 
\end{align*}
where (1) is from \eqref{eq2.2}  %(2) is by \eqref{theta-norm-decom},
and (2) can be proved using arguments similar to those used in the proof  of Step 2 in the proof of Theorem \protect\ref{th2.2}.
This implies that 
\begin{align*}
\mathbb{E} \left[ (X^{T}(\theta -\theta _{0}))^2 1\{Q >  \tau_0 \} \right] 
\leq \frac{\tilde{R}(\alpha ,\tau)^2}{(\eta^\ast r^\ast)^2}  \vee
\frac{\tilde{R}(\alpha ,\tau)}{\eta^\ast}.
\end{align*}
where $\tilde{R}(\alpha ,\tau) \equiv 
R(\alpha ,\tau) +\mathbb{E} 
\left( \left[  \rho \left( Y,X^{T}\theta \right) -\rho \left( Y,X^{T}\theta_0 \right) \right] 1\left\{ \tau\geq Q > \tau_0 \right\} \right)$.
Thus,  it suffices to show that 
$
\tilde{R}(\widehat \alpha ,\widehat \tau) = o_P(1)
$
in order to establish that $\widehat\theta\in\mathcal{G}(\theta_0,r)$. 
Note that
for some constant $C > 0$,
\begin{align*}
& \mathbb{E}\left[ (\rho(Y,X^T\theta)-\rho(Y,X^T\theta_0))1\{\tau_0<Q\leq\tau\} \right] \\
&\leq_{(1)}  L \mathbb{E}\left[ |X^T(\theta-\theta_0)|1\{\tau_0<Q\leq\tau\} \right] \\
& \leq_{(2)}  L|\theta-\theta_0|_1 \mathbb{E} \left[ \max_{j\leq p}|\tilde{X}_j|1\{\tau_0<Q\leq\tau\} \right]
+  L|\theta-\theta_0|_1 \mathbb{E} \left[ |Q|1\{\tau_0<Q\leq\tau\} \right]\\
&\leq_{(3)}   L|\theta-\theta_0|_1 \mathbb{E} \left[ \max_{j\leq p}|\tilde{X}_j| \sup_{\tilde{x}} \mathbb{P}(\tau_0<Q\leq\tau|\tilde{X}=\tilde{x})\right] 
+  L|\theta-\theta_0|_1 \mathbb{E} \left[ |Q|1\{\tau_0<Q\leq\tau\} \right]\\
& \leq_{(4)} C (\tau-\tau_0)|\theta-\theta_0|_1 \mathbb{E} \left\{ \left[ \max_{j\leq p}|\tilde{X}_j| \right] + 1 \right\},
\end{align*}
where (1) is by the Lipschitz continuity of $\rho(Y,\cdot)$,
(2) is from the fact that $|X^T(\theta-\theta_0)| \leq |\theta-\theta_0|_1 (\max_{j\leq p}|\tilde{X}_j| + |Q|)$,
(3) is by taking the conditional probability, 
(4)  is from Assumption \ref{a:dist-Q} \ref{a:dist-Q:itm3}. 

By the expectation-form of the Bernstein inequality (Lemma 14.12 of \cite{bulmann}), 
$\mathbb{E}[ \max_{j\leq p}|X_j| ] \leq K_1\log (p+1)+\sqrt{2\log (p+1)}$.  By (\ref{aphat}), which will be shown below, $|\widehat\theta-\theta_0|_1=O_P(s)$.  Hence by (\ref{eqa.10add}), when $\widehat\tau>\tau_0$,  $$|\widehat\tau-\tau_0| |\widehat\theta-\theta_0|_1 \mathbb{E}[ \max_{j\leq p}|X_j|]=O_P(\lambda_ns^2\log p)=o_P(1).$$ Note that when $\widehat\tau>\tau_0$, the proofs of  (\ref{aphat}) and (\ref{eqa.10add}) do not require  $\widehat\theta\in\mathcal{G}(\theta_0, r)$,  so there is no problem of applying them here. This implies that
$\tilde R(\widehat \alpha ,\widehat \tau) = o_P(1)$.

The same argument yields that w.p.a.1, $\widehat\theta\in\mathcal{G}(\theta_0,r)$ and $\widehat\beta\in\mathcal{B}(\beta_0,r)$ when   $\widehat\tau\leq\tau_0$; 
hence it is omitted to avoid repetitions.
\end{proof}

\noindent
\textbf{Step 2}: 
Let $\bar{c}_0(\delta_0) \equiv c_0  \inf_{\tau \in \mathcal{T}_0}\mathbb{E}[(X^T\delta_0)^2|Q=\tau]$. Then
$\bar{c}_0(\delta_0)  \left\vert \widehat{\tau}-\tau
_{0}\right\vert \leq  4 R\left( 
\widehat{\alpha},\widehat{\tau}\right) $  w.p.a.1. As a result, 
$|\widehat\tau-\tau_0|=O_P\left[ \lambda _{n}s/\bar{c}_0(\delta_0) \right]$.

\begin{proof}
%Let $F_Q(\cdot)$ denote the CDF of $Q$. 
For any  $\tau_0<\tau$ and $\tau\in\mathcal{T}_0$,  and any $\beta\in\mathcal{B}(\beta_0,r)$, $\alpha=(\beta,\delta)$ with arbitrary $\delta$,
   for some $L, M>0$
which do not depend on $\beta $ and $\tau ,$ 
\begin{align*}
& \left\vert \mathbb{E}\left( \rho \left( Y,X^{T}\beta \right) -\rho \left(
Y,X^{T}\beta _{0}\right) \right) 1\left\{ \tau _{0}<Q\leq \tau \right\}
\right\vert   \\
& \leq_{(1)} L\mathbb{E} \left[ \left\vert X^{T}\left( \beta -\beta
_{0}\right) \right\vert 1\left\{ \tau _{0}<Q\leq \tau \right\}  \right]
\\
& \leq_{(2)} M L  ( \tau- \tau _{0}) 
\mathbb{E} \left[\left\vert X^{T}\left( \beta -\beta _{0}\right) \right\vert
1\left\{ Q  \leq \tau _{0}\right\}\right]   \\
 & \leq_{(3)} M L  ( \tau- \tau _{0}) 
  \left\{ \mathbb{E} \left[ \left( X^{T}\left( \beta -\beta _{0}\right) \right)^2 1\left\{ Q\leq \tau _{0}\right\}  \right] \right\}^{1/2}
 \\
&\leq_{(4)} \left(M L  ( \tau- \tau _{0})  \right) ^{2}/\left( 4 \eta^\ast \right)
+\eta^\ast \mathbb{E} \left[ \left( X^{T}\left( \beta -\beta _{0}\right) \right)^2 1\left\{ Q\leq \tau _{0}\right\}  \right]  \\
&\leq_{(5)} \left( M L  ( \tau- \tau _{0})  \right) ^{2}/\left( 4\eta^\ast \right)
+\mathbb{E} \left[\left( \rho \left( Y,X^{T}\beta \right) -\rho \left( Y,X^{T}\beta
_{0}\right) \right) 1\left\{ Q\leq \tau _{0}\right\}  \right] \\
&\leq_{(6)} \left( M L  ( \tau- \tau _{0})  \right) ^{2}/\left( 4\eta^\ast \right)
+ R(\alpha,\tau),
\end{align*}%
where  (1) follows from the 
Lipschitz condition on the objective function (see Assumption \ref{a:obj-ftn} \ref{a:obj-ftn:itm1}), 
(2) is by Assumption \ref{a:moment} \ref{a:moment:itm2},
(3) is by Jensen's inequality, 
(4) follows from  the fact that $uv\leq v^{2}/\left( 4c\right)+cu^2$ for any $c>0$,
(5) is from Assumption \ref{a:obj-ftn} \ref{a:obj-ftn:itm3},
and (6) is from Step 1 in the proof of Theorem \protect\ref{th2.2}.

In addition, 
\begin{align*}
&\left\vert \mathbb{E} \left[ \left( \rho \left( Y,X^{T}\beta \right) -\rho
\left( Y,X^{T}\beta _{0}\right) \right) 1\left\{ \tau _{0}<Q\leq \tau
\right\} \right] \right\vert \\
&\geq_{(1)} 
 \mathbb{E} \left[ \left( \rho \left( Y,X^{T}\beta _{0}\right) -\rho \left( Y,X^{T}\theta
_{0}\right) \right) 1\left\{ \tau _{0}<Q\leq \tau \right\} \right] \\
& 
- \left\vert \mathbb{E} \left[ \left( \rho \left( Y,X^{T}\beta \right) -\rho \left(
Y,X^{T}\theta _{0}\right) \right) 1\left\{ \tau _{0}<Q\leq \tau \right\} \right]
\right\vert \\
&\geq_{(2)} 
 \mathbb{E} \left[ \left( \rho \left( Y,X^{T}\beta _{0}\right) -\rho \left( Y,X^{T}\theta
_{0}\right) \right) 1\left\{ \tau _{0}<Q\leq \tau \right\} \right]
- R(\alpha,\tau) \\
&\geq_{(3)} c_0 \left\{ \inf_{\tau \in \mathcal{T}_0}\mathbb{E}[(X^T\delta_0)^2|Q=\tau] \right\}(\tau-\tau_0)-R(\alpha,\tau),
\end{align*}%
where (1) is by the triangular inequality, (2) is from \eqref{eq2.2},
and (3) is by Assumption \ref{a:obj-ftn} \ref{a:obj-ftn:itm4}.
Therefore, we have established that there exists a constant $\tilde{C} > 0$, independent of $(\alpha,\tau)$, such that
\begin{align}\label{der-c0}
\bar{c}_0(\delta_0)(\tau-\tau_0) &\leq \tilde{C} (\tau - \tau_0)^2   + 2R(\alpha,\tau).
\end{align} 
Note that when $0 < (\tau-\tau_0) < \bar{c}_0(\delta_0)(2\tilde{C})^{-1}$, \eqref{der-c0} implies that 
\begin{align*}
\bar{c}_0(\delta_0)(\tau-\tau_0)  &\leq     \frac{\bar{c}_0(\delta_0)}{2}(\tau - \tau_0)  +  2R(\alpha,\tau),
\end{align*} 
which in turn implies that  
 $\tau-\tau_0\leq \frac{4}{\bar{c}_0(\delta_0)}R(\alpha,\tau)$.
By the same argument, when 
$- \bar{c}_0(\delta_0)(2\tilde{C})^{-1} <(\tau - \tau_0) \leq 0$,  we have  $\tau_0-\tau\leq \frac{4}{\bar{c}_0(\delta_0)}R(\alpha,\tau)$   for $\alpha=(\beta,\delta)$, with any $\theta\in\mathcal{G}(\theta_0, r)$ and arbitrary $\beta.$

 Hence when $\widehat\tau>\tau_0$, on the event $\widehat\beta\in\mathcal{B}(\beta_0,  r)$,   and  $\widehat\tau-\tau_0< \bar{c}_0(\delta_0)(2\tilde{C})^{-1}$, we have
 \begin{equation}\label{eqa.10add}
 \widehat\tau-\tau_0\leq \frac{4}{\bar{c}_0(\delta_0)}R(\widehat\alpha,\widehat\tau).
 \end{equation}
 When $\widehat\tau\leq \tau_0$, on the event $\widehat\theta\in\mathcal{G}(\theta_0,  r)$,   and  $\tau_0-\widehat\tau< \bar{c}_0(\delta_0)(2\tilde{C})^{-1}$, we have
 $
\tau_0- \widehat\tau\leq \frac{4}{\bar{c}_0(\delta_0)}R(\widehat\alpha,\widehat\tau).
 $ Hence due to Step 1 and the consistency of $\widehat\tau$, we have, w.p.a.1,
  \begin{equation}
  \left\vert \widehat{\tau}-\tau
_{0}\right\vert \leq   \frac{4}{\bar{c}_0(\delta_0)}R\left( 
\widehat{\alpha},\widehat{\tau}\right) .\label{G1tau}
\end{equation}%
 This also implies $|\widehat\tau-\tau_0|=O_P\left[ \lambda _{n}s/\bar{c}_0(\delta_0) \right]$ 
 in view of the proof of Theorem \ref{l2.1}.
\end{proof} 
 
 \noindent
\textbf{Step 3}: 
Define $\nu _{1n}\left( \tau \right) \equiv \nu _{n}\left( \alpha _{0},\tau \right) -\nu
_{n}\left( \alpha _{0},\tau _{0}\right)$  and $c_{\alpha } \equiv \lambda _{n}\left( \left\vert D_{0}\alpha
_{0}\right\vert _{1}-\left\vert \widehat{D}\alpha _{0}\right\vert _{1}\right)
+\left\vert \nu _{1n}\left( \widehat{\tau}\right) \right\vert $.
Then w.p.a.1, 
\begin{align}\label{BI1}
R\left( \widehat{\alpha},\widehat{\tau}\right) +\frac{1}{2}\lambda
_{n}\left\vert \widehat{D}\left( \widehat{\alpha}-\alpha _{0}\right) \right\vert
_{1} \leq  c_{\alpha } + 2\lambda _{n}\left\vert \widehat{D}\left( \widehat{\alpha}%
-\alpha _{0}\right) _{J}\right\vert _{1}.
\end{align}%

\begin{proof}
 Recall the following basic inequality in \eqref{basic ineq}: 
\begin{align}\label{basic-ineq2}
R(\widehat{\alpha },\widehat{\tau}) 
&\leq 
\left[ \nu _{n} (\alpha _{0},\widehat{\tau})-\nu _{n}(\widehat{\alpha },\widehat{\tau})%
\right] - \nu _{1n}\left( \widehat \tau \right) 
+\lambda _{n}\left( |D_{0}\alpha _{0}|_{1}-|\widehat{D}\widehat{\alpha }%
|_{1}\right).  
\end{align}
Now applying Lemma \ref{lem-emp} (in particular, \eqref{emp1}) to 
$[\nu _{n} (\alpha _{0},\widehat{\tau})-\nu _{n}(\widehat{\alpha },\widehat{\tau})]$
 with $a_{n}\ $and $b_{n}$ replaced by $a_{n}/2\ 
$and $b_{n}/2$,
we can rewrite the basic inequality in \eqref{basic-ineq2} by%
\begin{equation*}
\lambda _{n}\left\vert D_0\alpha _{0}\right\vert _{1}\geq R\left( \widehat{\alpha},%
\widehat{\tau}\right) +\lambda _{n}\left\vert \widehat{D}\widehat{\alpha}\right\vert
_{1}-\frac{1}{2}\lambda _{n}\left\vert \widehat{D}\left( \widehat{\alpha}-\alpha
_{0}\right) \right\vert _{1}-\left\vert \nu _{1n}\left( \widehat{\tau}\right)
\right\vert \; \text{ w.p.a.1}.
\end{equation*}
Now adding $\lambda _{n}\left\vert \widehat{D}\left( \widehat{\alpha}-\alpha
_{0}\right) \right\vert _{1}$ on both sides of the inequality above
and using the fact that $ \left\vert  \alpha_{0j} \right\vert _{1} - \left\vert \widehat{\alpha}_{j} \right\vert
_{1} + \left\vert \left( \widehat{\alpha}_{j} -\alpha
_{0j}\right) \right\vert _{1} = 0$ for $j \notin J$,
we have that  w.p.a.1,
\begin{align*}
& \lambda _{n}\left( \left\vert D_{0}\alpha _{0}\right\vert _{1}-\left\vert 
\widehat{D}\alpha _{0}\right\vert _{1}\right) +\left\vert \nu _{1n}\left( \widehat{%
\tau}\right) \right\vert +2\lambda _{n}\left\vert \widehat{D}\left( \widehat{\alpha}%
-\alpha _{0}\right) _{J}\right\vert _{1}  \\
& 
\geq R\left( \widehat{\alpha},\widehat{\tau}\right) +\frac{1}{2}\lambda
_{n}\left\vert \widehat{D}\left( \widehat{\alpha}-\alpha _{0}\right) \right\vert
_{1}.  
\end{align*}%
Therefore, we have proved Step 3.
\end{proof}

We prove the remaining part of the steps by considering 
 two cases: (i) $\lambda _{n}\left\vert \widehat{D}\left( \widehat{\alpha}-\alpha
_{0}\right) _{J}\right\vert_1 \leq c_{\alpha };$ (ii) $\lambda _{n}\left\vert 
\widehat{D}\left( \widehat{\alpha}-\alpha _{0}\right) _{J}\right\vert_1 >c_{\alpha }.$
We first consider  Case (ii). 
\medskip

\noindent
\textbf{Step 4}:   
Suppose that $\lambda _{n}\left\vert \widehat{D}\left( \widehat{\alpha}-\alpha
_{0}\right) _{J}\right\vert_1 > c_{\alpha }$. 
Then
\begin{align*}
\left\vert \widehat{\tau}-\tau _{0}\right\vert &
= O_P\left[\lambda_n^2s/\bar{c}_0(\delta_0)\right] 
\; \text { and }  \;
\left\vert  \widehat{\alpha}-\alpha _{0} \right\vert = O_P\left( \lambda _{n}s\right).
\end{align*}

\begin{proof}
By $\lambda _{n}\left\vert 
\widehat{D}\left( \widehat{\alpha}-\alpha _{0}\right) _{J}\right\vert_1 >c_{\alpha }$ and the basic inequality (\ref{BI1}) in Step 3, 
\begin{equation}
6\left\vert \widehat{D}\left( \widehat{\alpha}-\alpha _{0}\right) _{J}\right\vert
_{1}\geq \left\vert \widehat{D}\left( \widehat{\alpha}-\alpha _{0}\right)
\right\vert _{1}=\left\vert \widehat{D}\left( \widehat{\alpha}-\alpha _{0}\right)
_{J}\right\vert +\left\vert \widehat{D}\left( \widehat{\alpha}-\alpha _{0}\right)
_{J^{c}}\right\vert ,  \label{aphat}
\end{equation}%
which enables us to apply the compatibility condition in Assumption \ref%
{ass2.7}.

Recall that $\|Z\|_2=(EZ^2)^{1/2}$ for a random variable $Z.$ 
Note that for $s=|J(\alpha_0)|_0$,
\begin{align}\label{a.11add}
\begin{split}
& R\left( \widehat{\alpha},\widehat{\tau}\right) +\frac{1}{2}\lambda
_{n}\left\vert \widehat{D}\left( \widehat{\alpha}-\alpha _{0}\right) \right\vert
_{1} \\
&\leq_{(1)} 3\lambda _{n}\left\vert \widehat{D}\left( \widehat{\alpha}-\alpha _{0}\right)
_{J}\right\vert _{1} \\
& \leq_{(2)} 3\lambda _{n}\bar D\left\Vert X(\widehat{\tau})^T(\widehat\alpha-\alpha_0)\right\Vert _{2}\sqrt{s}/\phi  \\
&\leq_{(3)} \frac{9\lambda _{n}^{2}\bar D^{2}s}{2 \tilde{c} \phi ^{2}}+\frac{\tilde{c}}{2}%
\left\Vert X(\widehat{\tau})^T(\widehat\alpha-\alpha_0)\right\Vert _{2}^{2},
\end{split}
\end{align}
where (1) is from the basic inequality \eqref{BI1} in Step 3,
(2) is by the compatibility condition (Assumption \ref{ass2.7}), and
(3) is from the
inequality  that $uv\leq v^{2}/(2\tilde{c})+\tilde{c} u^2/2$ for any $\tilde{c} >0$.

We will show below in Step 5 that there is a  constant $C_0 > 0$ such that 
\begin{align}\label{step-4b-statement}
\left\Vert X(\widehat\tau)^T(\widehat\alpha-\alpha_0)\right\Vert _{2}^{2}\leq C_0 R(\widehat\alpha,\widehat\tau)+C_0\bar{c}_0(\delta_0)| \widehat\tau-\tau_0|,
\text{ w.p.a.1}.
\end{align}
Recall that by \eqref{G1tau}, 
$\bar{c}_0(\delta_0)  \left\vert \widehat{\tau}-\tau
_{0}\right\vert \leq   4 R\left( 
\widehat{\alpha},\widehat{\tau}\right)$.
Hence,  (\ref{a.11add})  with $\tilde{c}=(C_0+\frac{4C_0}{\bar{c}_0(\delta_0)})^{-1}$ implies that 
\begin{align}\label{oracle-ineq}
 R\left( \widehat{\alpha},\widehat{\tau}\right) +\lambda
_{n}\left\vert \widehat{D}\left( \widehat{\alpha}-\alpha _{0}\right) \right\vert
_{1}\leq \frac{9\lambda _{n}^{2}\bar D^{2}s}{\tilde{c} \phi ^{2}}.
\end{align}
By \eqref{oracle-ineq} and \eqref{G1tau}, $\left\vert \widehat{\tau}-\tau _{0}\right\vert=O_P\left[\lambda_n^2s/\bar{c}_0(\delta_0)\right]$.
Also, by \eqref{oracle-ineq}, 
$\left\vert  \widehat{\alpha}-\alpha _{0} \right\vert = O_P\left( \lambda _{n}s\right)$
since $D(\widehat\tau)\geq \underline{D}$ w.p.a.1
by Assumption \ref{a:setting} \ref{a:setting:itm3}.
\end{proof}

%It remains to show  \eqref{step-4b-statement}, which is proved below. 

\noindent
\textbf{Step 5}: There is a  constant $C_0 > 0$ such that 
$
\left\Vert X(\widehat\tau)^T(\widehat\alpha-\alpha_0)\right\Vert _{2}^{2}\leq C_0 R(\widehat\alpha,\widehat\tau)+C_0 \bar{c}_0(\delta_0) |\widehat\tau-\tau_0|,
$
w.p.a.1.
   
\begin{proof} 
Note that
\begin{align}\label{decom-step4}
\begin{split}
 \left\Vert X(\tau)^T(\alpha-\alpha_0)\right\Vert _{2}^{2} 
&\leq 2\left\Vert X(\tau)^T\alpha-X(\tau_0)^T\alpha\right\Vert _{2}^{2} \\
&+4\left\Vert X(\tau_0)^T\alpha-X(\tau_0)^T\alpha_0\right\Vert _{2}^{2}
+4\left\Vert X(\tau_0)^T\alpha_0-X(\tau)^T\alpha_0\right\Vert _{2}^{2}.
\end{split}
\end{align}
We bound the three terms on the right hand side of \eqref{decom-step4}. 
When $\tau >\tau _{0}$,  there is a constant $C_1 > 0$ such that 
\begin{align*}
& \left\Vert X(\tau)^T\alpha-X(\tau_0)^T\alpha\right\Vert _{2}^{2} \\
&=\mathbb{E} \left[ (X^T\delta)^2 1\{\tau_0 \leq Q <\tau\} \right] \\
&= \int_{\tau_0}^\tau \mathbb{E} \left[ (X^T\delta)^2 \big|Q  = t \right]  dF_Q(t) \\
&\leq 2\int_{\tau_0}^\tau \mathbb{E} \left[ (X^T\delta_0)^2 \big|Q  = t \right]  dF_Q(t) 
+ 2\int_{\tau_0}^\tau \mathbb{E} \left[ (X^T(\delta-\delta_0))^2 \big|Q  = t \right]  dF_Q(t) \\
&\leq C_1  \bar{c}_0(\delta_0) (\tau-\tau_0),
\end{align*}
where the last inequality is by Assumptions \ref{a:setting} \ref{a:setting:itm1},  
\ref{a:dist-Q} \ref{a:dist-Q:itm3}, 
\ref{a:threshold} \ref{a:threshold:itm3},
and
\ref{a:moment} \ref{a:moment:itm2}.

Similarly,
$\left\Vert X(\tau_0)^T\alpha_0-X(\tau)^T\alpha_0\right\Vert _{2}^{2}
=\mathbb{E} \left[ (X^T\delta_0)^2 1\{\tau_0\leq Q<\tau\} \right] \leq C_1 \bar{c}_0(\delta_0) (\tau-\tau_0).
$
Hence, the first and third terms of the right hand side of of \eqref{decom-step4}
are bounded by $6 C_1 \bar{c}_0(\delta_0)(\tau-\tau_0)$.

To bound the second term, note that there exists a constant $C_2 > 0$ such that 
\begin{align*}
\lefteqn{\left\Vert X(\tau_0)^T\alpha-X(\tau_0)^T\alpha_0\right\Vert _{2}^{2} } \\
&=_{(1)} \mathbb{E}\left[ (X^T(\theta-\theta_0))^21\{Q > \tau_0\} \right]
+ \mathbb{E}\left[ (X^T(\beta-\beta_0))^21\{Q \leq \tau_0\} \right] \\
&\leq_{(2)} (\eta^\ast)^{-1} \mathbb{E}\left[ \left( \rho \left( Y,X^{T}\theta \right) -\rho \left( Y,X^{T}\theta_{0}\right) \right) 1\left\{ Q>\tau _{0}\right\}  \right] \\
&+  (\eta^\ast)^{-1} \mathbb{E}\left[ \left( \rho \left( Y,X^{T}\beta \right) -\rho \left( Y,X^{T}\beta_{0}\right) \right) 1\left\{ Q\leq \tau _{0}\right\} \right] \\
&\leq_{(3)} (\eta^\ast)^{-1} R(\alpha, \tau) + (\eta^\ast)^{-1} \mathbb{E}\left[ \left( \rho \left( Y,X^{T}\theta \right) -\rho \left( Y,X^{T}\theta
_{0}\right) \right) 1\left\{ \tau_0< Q \leq \tau\right\} \right] \\
&\leq_{(4)}  (\eta^\ast)^{-1}  R(\alpha, \tau) + (\eta^\ast)^{-1} L \mathbb{E}\left[ |X^T(\theta-\theta_0)|1\left\{ \tau_0< Q \leq \tau\right\} \right] \\
&=_{(5)} (\eta^\ast)^{-1}  R(\alpha, \tau) 
+ (\eta^\ast)^{-1} L \int_{\tau_0}^\tau \mathbb{E} \left[ |X^T(\theta-\theta_0)| \big|Q  = t \right]  dF_Q(t) \\
&\leq_{(6)} (\eta^\ast)^{-1}  R(\alpha, \tau) +C_3 (\tau-\tau_0),
\end{align*}
where (1) is simply an identity,  (2) from  Assumption \ref{a:obj-ftn} \ref{a:obj-ftn:itm3}, (3) is  
due to \eqref{eqaa2.2}: namely,
\begin{align*}
\mathbb{E}\left[ \left( \rho \left( Y,X^{T}\theta \right) -\rho \left( Y,X^{T}\theta
_{0}\right) \right) 1\left\{ Q>\tau \right\} \right]
 +\mathbb{E}\left[ \left( \rho \left( Y,X^{T}\beta \right) -\rho \left( Y,X^{T}\beta
_{0}\right) \right) 1\left\{ Q\leq \tau _{0}\right\} \right] \leq R(\alpha,\tau),
\end{align*}
(4) is by  the Lipschitz continuity of $\rho(Y,\cdot)$,
(5) is by rewriting the expectation term, and (6) is by Assumptions \ref{a:dist-Q} \ref{a:dist-Q:itm1} and
\ref{a:moment} \ref{a:moment:itm2}.
Therefore, we have shown that $\left\Vert X(\tau)^T(\alpha-\alpha_0)\right\Vert _{2}^{2}\leq C_0 R(\alpha,\tau)+C_0 \bar{c}_0(\delta_0)(\tau-\tau_0)$ for some constant $C_0 > 0$. 
The case of $\tau\leq\tau_0$ can be proved using the same argument. Hence, setting $\tau=\widehat\tau$, and $\alpha=\widehat\alpha$, we obtain the desired result. 
\end{proof}

\noindent
\textbf{Step 6}:   
 We now consider Case (i). 
Suppose that $\lambda _{n}\left\vert \widehat{D}\left( \widehat{\alpha}-\alpha
_{0}\right) _{J}\right\vert \leq c_{\alpha }$. 
Then
\begin{align*}
\left\vert \widehat{\tau}-\tau _{0}\right\vert &
= O_P\left[ \lambda _{n}^{2}s/\bar{c}_0(\delta_0) \right]
\; \text { and }  \;
\left\vert  \widehat{\alpha}-\alpha _{0} \right\vert = O_P\left( \lambda _{n}s\right).
\end{align*}

\begin{proof}
Recall that  $X_{ij}$ is the $j$th element of $X_i$, where $i\leq n, j\leq p.$
By Assumption \ref{a:setting} \ref{a:setting:itm4} and Step 2,
\begin{align*}
\sup_{1 \leq j \leq p} \frac{1%
}{n}\sum_{i=1}^{n}\left\vert X_{ij} \right\vert ^{2}\left\vert
1\left( Q_{i}<\widehat \tau \right) -1\left( Q_{i}<\tau_0 \right)
\right\vert = O_P\left[ \lambda _{n} s/\bar{c}_0(\delta_0)\right].
\end{align*}
By the
mean value theorem, 
\begin{align}
&\lambda _{n}\left\vert \left\vert D_{0}\alpha _{0}\right\vert
_{1}-\left\vert \widehat{D}\alpha _{0}\right\vert _{1}\right\vert  \notag \\
&\leq \lambda _{n}\sum_{j=1}^{p}\left( \frac{2}{n}\sum_{i=1}^{n}\left\vert
X_{ij}1\left\{ Q_{i}>\underline{\tau }\right\} \right\vert ^{2}\right)
^{-1/2}\left\vert \delta_0 ^{\left( j\right) }\right\vert \frac{1}{n}%
\sum_{i=1}^{n}\left\vert X_{ij}\right\vert ^{2}\left\vert
1\left\{ Q_{i}>\widehat{\tau}\right\} -1\left\{ Q_{i}>\tau _{0}\right\}
\right\vert  \notag \\
&=O_P\left[ \lambda _{n}^{2}s |J(\delta_0)|_0/\bar{c}_0(\delta_0)\right].  \label{D-Dhat}
\end{align}
Here, recall that $\underline{\tau }$ is the left-end point of $\mathcal{T}$ and $|J(\delta_0)|_0$ is the dimension of 
nonzero elements of $\delta_0$.

Due to Step 2  and (\ref{emp3})  in Lemma \ref{lem-emp}, 
\begin{align}\label{v1n-hat-rate}
\left\vert \nu _{1n}\left( \widehat{%
\tau}\right) \right\vert 
=O_P \left[ \frac{|\delta_0|_2}{\sqrt{\bar{c}_0(\delta_0)}}  \left( \lambda _{n} s/n \right)^{1/2}\right].
\end{align}
Thus, under Case (i), we have that, by \eqref{G1tau}, \eqref{BI1},  \eqref{aphat},  and \eqref{D-Dhat},
\begin{align}\label{tauhat}
\begin{split}
\frac{\bar{c}_0(\delta_0)}{4}\left\vert \widehat{\tau}-\tau _{0}\right\vert &\leq \frac{\lambda _{n}}{2}%
\left\vert \widehat{D}\left( \widehat{\alpha}-\alpha _{0}\right) \right\vert
_{1}+R\left( \widehat{\alpha},\widehat{\tau}\right)   \\
&\leq 3\lambda _{n}\left( \left\vert D_{0}\alpha _{0}\right\vert
_{1}-\left\vert \widehat{D}\alpha _{0}\right\vert _{1}\right) +3\left\vert \nu
_{1n}\left( \widehat{\tau}\right) \right\vert   \\
&= O_P\left( \lambda _{n}^{2}s^{2}\right) +O_P \left[ s^{1/2} \left(  \lambda _{n} s/n \right)^{1/2}\right],  
\end{split}
\end{align}
where the last equality uses the fact that $|J(\delta_0)|_0/\bar{c}_0(\delta_0) = O(s)$ 
and ${|\delta_0|_2}/{\sqrt{\bar{c}_0(\delta_0)}} = O(s^{1/2})$ at most (both could be bounded in some cases).

Therefore, we now have an improved rate of convergence in probability for $\widehat \tau$
from $r_{n0, \tau} \equiv \lambda _{n} s$ to $r_{n1, \tau} \equiv [\lambda _{n}^{2}s^{2} + s^{1/2} ( \lambda _{n} s/n )^{1/2}]$. 
Repeating the arguments identical to those to prove \eqref{D-Dhat} and \eqref{v1n-hat-rate} yields that
\begin{align*}
\lambda _{n}\left\vert \left\vert D_{0}\alpha _{0}\right\vert
_{1}-\left\vert \widehat{D}\alpha _{0}\right\vert _{1}\right\vert  
= O_P \left[ r_{n1, \tau} \lambda_n s  \right]
\ \ \text{and} \ \
\left\vert \nu _{1n}\left( \widehat{%
\tau}\right) \right\vert 
= O_P \left[ s^{1/2} \left( r_{n1, \tau} /n \right)^{1/2}\right].
\end{align*}
Plugging these improved rates into \eqref{tauhat} gives 
\begin{align*}
\bar{c}_0(\delta_0) \left\vert \widehat{\tau}-\tau _{0}\right\vert 
&
= O_P\left( \lambda _{n}^{3}s^{3}\right) 
+ O_P \left[ s^{1/2} (\lambda _{n} s)^{3/2} / n^{1/2}\right]
+ O_P\left( \lambda _{n}s^{3/2} / n^{1/2} \right)
+ O_P \left[ s^{3/4} (\lambda _{n} s)^{1/4} /n^{3/4} \right] \\
&= O_P\left( \lambda _{n}^{2} s^{3/2} \right) 
+ O_P \left[  s^{3/4} (\lambda _{n} s)^{1/4} /n^{3/4} \right] \\
&\equiv O_P( r_{n2, \tau} ),
\end{align*}
where the second equality comes from the fact that 
the first three terms are $O_P\left( \lambda _{n}^{2}s^{3/2} \right)$ since $\lambda _{n} s^{3/2} =o(1)$, $\lambda_n n/s \rightarrow \infty$, and $\lambda_n \sqrt{n} \rightarrow \infty$
in view of the assumption   that $\lambda_n s^2 \log p=o(1)$.
Repeating the same arguments again with the further improved rate $r_{n2, \tau}$, we have that
\begin{align*}
\left\vert \widehat{\tau}-\tau _{0}\right\vert 
&
= O_P\left( \lambda _{n}^{2}s^{5/4} \right) 
+ O_P \left[ s^{7/8} (\lambda _{n} s)^{1/8} /n^{7/8} \right] 
\equiv O_P( r_{n3, \tau} ).
\end{align*}
Thus, repeating the same arguments $k$ times yields 
\begin{align*}
\bar{c}_0(\delta_0) \left\vert \widehat{\tau}-\tau _{0}\right\vert 
&
= O_P\left( \lambda _{n}^{2}s^{1 + 2^{-k}} \right) 
+ O_P \left[ s^{(2^k-1)/2^k} (\lambda _{n} s)^{1/2^{k}} /n^{(2^k-1)/2^k} \right] 
\equiv O_P( r_{nk, \tau} ).
\end{align*}
Then letting $k \rightarrow \infty$ gives the desired result that  
$\bar{c}_0(\delta_0) \left\vert \widehat{\tau}-\tau _{0}\right\vert = O_P\left( \lambda _{n}^{2}s \right)$.  
Finally, the same iteration based on \eqref{tauhat}  gives 
$\left\vert \widehat{D}\left( \widehat{\alpha}-\alpha _{0}\right)\right\vert =o_P\left( \lambda _{n}s\right) $,
which proves the desired result 
since $D(\widehat\tau)\geq \underline{D}$ w.p.a.1
by Assumption \ref{a:setting} \ref{a:setting:itm3}.
\end{proof}

\subsection{Proof of Theorem \protect\ref{th2.3-delta0}}

If $\delta_0 = 0$, $\tau_0$ is non-identifiable. In this case, we decompose the excess risk  in the following way:
\begin{align}  \label{eqaa2.2-delta0}
\begin{split}
R\left( \alpha, \tau\right) & =\mathbb{E} \left( 
\left[\rho\left( Y,X^T\beta\right)
-\rho\left( Y,X^T\beta _{0}\right) \right] 1\left\{ Q\leq\tau \right\} \right) \\
&\;\;\;+\mathbb{E} \left(  \left[ \rho\left( Y,X^T\theta\right) -\rho\left( Y,X^T\beta_{0}\right)
\right] 1\left\{ Q>\tau \right\} \right).
\end{split} 
\end{align}
We split the proof into three steps.

\noindent
\textbf{Step 1}: 
For any $r>0$, we have that w.p.a.1, 
$\widehat\beta \in \mathcal{\tilde{B}}(\beta_0, r, \widehat\tau)$
and  $\widehat\theta \in \mathcal{\tilde{G}}(\beta_0, r, \widehat\tau)$.

\begin{proof}[Proof of Step 1]
As in the proof of Step 1 in the proof of Theorem \protect\ref{th2.3}, Assumption \ref{ass2.2-delta0} \ref{ass2.2-delta0:itm2}  implies that 
\begin{align*}
\mathbb{E} \left[ (X^{T}(\beta -\beta _{0}))^2 1\{Q\leq \tau \} \right] 
\leq \frac{R(\alpha ,\tau)^2}{(\eta^\ast r^\ast)^2}  \vee
\frac{R(\alpha ,\tau)}{\eta^\ast}.
\end{align*}
For any $r>0$, note that $R(\widehat\alpha,\widehat\tau)=o_P(1)$ implies that the event $R(\widehat\alpha,\widehat\tau)<r^2$ holds w.p.a.1. 
Therefore, we have shown that $\widehat\beta\in\mathcal{\tilde{B}}(\beta_0, r, \widehat\tau)$. The other case can be proved similarly.
\end{proof}

\noindent
\textbf{Step 2 }: Suppose that $\delta_0 = 0$. Then
\begin{align}\label{BI1-delta0}
R\left( \widehat{\alpha},\widehat{\tau}\right) +\frac{1}{2}\lambda
_{n}\left\vert \widehat{D}\left( \widehat{\alpha}-\alpha _{0}\right) \right\vert
_{1}  \leq 2\lambda _{n}\left\vert \widehat{D}\left( \widehat{\alpha}%
-\alpha _{0}\right) _{J}\right\vert _{1}  \; \text{w.p.a.1}.
\end{align}%

\begin{proof}
The proof of this step is similar to that of Step 3 in the proof of Theorem \ref{th2.3}.
Since  $(\widehat{\alpha},\widehat{\tau})$ minimizes the $\ell_1$-penalized objective function in \eqref{eq2.2add}, 
we have that 
\begin{align}\label{delta0-eq1}
\frac{1}{n}\sum_{i=1}^{n}\rho (Y_{i},X_{i}(\widehat{\tau})^{T}\widehat{%
\alpha })+\lambda _{n}|\widehat{D}\widehat{\alpha }|_{1}\leq \frac{1}{n}%
\sum_{i=1}^{n}\rho (Y_{i},X_{i}(\widehat{\tau})^{T}\alpha _{0})+\lambda
_{n}|\widehat{D} \alpha _{0}|_{1}.
\end{align}
When $\delta_0 = 0$,  
$\rho(Y, X(\widehat{\tau})^T\alpha_0) =  \rho(Y,X(\tau_0)^T\alpha_0)$.
Using this fact and \eqref{delta0-eq1}, 
we obtain the following inequality 
\begin{align}\label{basic-ineq-delta0}
R(\widehat{\alpha },\widehat{\tau}) &\leq 
\left[ \nu _{n}(\alpha_{0},\widehat{\tau})-\nu _{n}(\widehat{\alpha},\widehat{\tau })\right] +\lambda
_{n}|\widehat{D}\alpha _{0}|_{1}-\lambda _{n}|\widehat{D}\widehat{\alpha }|_{1}.
\end{align}%

As in Step 3 in the proof of Theorem \ref{th2.3}, we apply 
 Lemma \ref{lem-emp} (in particular, \eqref{emp1}) to 
$[\nu _{n} (\alpha _{0},\widehat{\tau})-\nu _{n}(\widehat{\alpha },\widehat{\tau})]$
 with $a_{n}\ $and $b_{n}$ replaced by $a_{n}/2\ 
$and $b_{n}/2$. Then 
we can rewrite the basic inequality in \eqref{basic-ineq-delta0} by%
\begin{equation*}
\lambda _{n}\left\vert \widehat{D} \alpha _{0}\right\vert _{1}\geq R\left( \widehat{\alpha},%
\widehat{\tau}\right) +\lambda _{n}\left\vert \widehat{D}\widehat{\alpha}\right\vert
_{1}-\frac{1}{2}\lambda _{n}\left\vert \widehat{D}\left( \widehat{\alpha}-\alpha
_{0}\right) \right\vert _{1} \; \text{ w.p.a.1}.
\end{equation*}
Now adding $\lambda _{n}\left\vert \widehat{D}\left( \widehat{\alpha}-\alpha
_{0}\right) \right\vert _{1}$ on both sides of the inequality above
and using the fact that $ \left\vert  \alpha_{0j} \right\vert _{1} - \left\vert \widehat{\alpha}_{j} \right\vert
_{1} + \left\vert \left( \widehat{\alpha}_{j} -\alpha
_{0j}\right) \right\vert _{1} = 0$ for $j \notin J$,
we have that  w.p.a.1,
\begin{align*}
2\lambda _{n}\left\vert \widehat{D}\left( \widehat{\alpha}%
-\alpha _{0}\right) _{J}\right\vert _{1}  
& 
\geq R\left( \widehat{\alpha},\widehat{\tau}\right) +\frac{1}{2}\lambda
_{n}\left\vert \widehat{D}\left( \widehat{\alpha}-\alpha _{0}\right) \right\vert
_{1}.  
\end{align*}%
Therefore, we have obtained the desired result.
\end{proof}

\noindent
\textbf{Step 3 }: Suppose that $\delta_0 = 0$. Then   
\begin{align*}
R\left( \widehat{\alpha},\widehat{\tau}\right) =O_P(\lambda_n^2s) \ \ \text{ and } \ \
\left\vert  \widehat{\alpha}-\alpha _{0} \right\vert = O_P\left( \lambda _{n}s\right).
\end{align*}

\begin{proof}
By Step 2, 
\begin{equation}\label{aphat-delta0}
4\left\vert \widehat{D}\left( \widehat{\alpha}-\alpha _{0}\right) _{J}\right\vert
_{1}\geq \left\vert \widehat{D}\left( \widehat{\alpha}-\alpha _{0}\right)
\right\vert _{1}=\left\vert \widehat{D}\left( \widehat{\alpha}-\alpha _{0}\right)
_{J}\right\vert +\left\vert \widehat{D}\left( \widehat{\alpha}-\alpha _{0}\right)
_{J^{c}}\right\vert ,  
\end{equation}%
which enables us to apply the compatibility condition in Assumption \ref{ass2.7-delta0}. 

Recall that $\|Z\|_2=(EZ^2)^{1/2}$ for a random variable $Z.$ 
Note that for $s=|J(\alpha_0)|_0$,
\begin{align}\label{a.11add-delta0}
\begin{split}
& R\left( \widehat{\alpha},\widehat{\tau}\right) +\frac{1}{2}\lambda
_{n}\left\vert \widehat{D}\left( \widehat{\alpha}-\alpha _{0}\right) \right\vert
_{1} \\
&\leq_{(1)} 2\lambda _{n}\left\vert \widehat{D}\left( \widehat{\alpha}-\alpha _{0}\right)
_{J}\right\vert _{1} \\
& \leq_{(2)} 2\lambda _{n}\bar D\left\Vert X(\widehat{\tau})^T(\widehat\alpha-\alpha_0)\right\Vert _{2}\sqrt{s}/\phi  \\
&\leq_{(3)} \frac{4\lambda _{n}^{2}\bar D^{2}s}{2 \tilde{c} \phi ^{2}}+\frac{\tilde{c}}{2}%
\left\Vert X(\widehat{\tau})^T(\widehat\alpha-\alpha_0)\right\Vert _{2}^{2},
\end{split}
\end{align}
where (1) is from the basic inequality \eqref{BI1-delta0} in Step 2,
(2) is by the compatibility condition (Assumption \ref{ass2.7-delta0}), and
(3) is from the
inequality  that $uv\leq v^{2}/(2\tilde{c})+\tilde{c} u^2/2$ for any $\tilde{c} >0$.

Note that
\begin{align*}
\lefteqn{\left\Vert X(\tau)^T\alpha-X(\tau)^T\alpha_0\right\Vert _{2}^{2} } \\
&=_{(1)} \mathbb{E}\left[ (X^T(\theta-\beta_0))^21\{Q > \tau\} \right]
+ \mathbb{E}\left[ (X^T(\beta-\beta_0))^21\{Q \leq \tau\} \right] \\
&\leq_{(2)} (\eta^\ast)^{-1} \mathbb{E}\left[ \left( \rho \left( Y,X^{T}\theta \right) -\rho \left( Y,X^{T}\beta_{0}\right) \right) 
1\left\{ Q>\tau \right\}  \right] \\
&+  (\eta^\ast)^{-1} \mathbb{E}\left[ \left( \rho \left( Y,X^{T}\beta \right) -\rho \left( Y,X^{T}\beta_{0}\right) \right) 
1\left\{ Q\leq \tau \right\} \right] \\
&\leq_{(3)} (\eta^\ast)^{-1} R(\alpha, \tau),
\end{align*}
where (1) is simply an identity,  (2) from  Assumption \ref{ass2.2-delta0} \ref{ass2.2-delta0:itm2} , and (3) is  
due to \eqref{eqaa2.2-delta0}.
Hence,  (\ref{a.11add-delta0})  with $\tilde{c}=\eta^\ast$ implies that 
\begin{align}\label{oracle-ineq-delta0}
 R\left( \widehat{\alpha},\widehat{\tau}\right) +\lambda
_{n}\left\vert \widehat{D}\left( \widehat{\alpha}-\alpha _{0}\right) \right\vert
_{1}\leq \frac{4\lambda _{n}^{2}\bar D^{2}s}{\eta^\ast \phi ^{2}}.
\end{align}
Therefore,  $R\left( \widehat{\alpha},\widehat{\tau}\right) =O_P(\lambda_n^2s)$.
Also, $\left\vert  \widehat{\alpha}-\alpha _{0} \right\vert = O_P\left( \lambda _{n}s\right)$
since $D(\widehat\tau)\geq \underline{D}$ w.p.a.1
by Assumption \ref{a:setting} \ref{a:setting:itm3}.
\end{proof}

\section{Proofs for Section \ref{sec:oracle-inference}}\label{sec:proof-oracle-inference}

\subsection{Proof of Theorem \protect\ref{th3.1}}

We write $\alpha_J$ be a subvector of $\alpha$ whose components' indices are in $J(\alpha_0)$. 
Define  $\bar{Q}_{n}(\alpha _{J}) \equiv \widetilde{S%
}_{n}((\alpha _{J},0))$,  so that%
\begin{equation*}
\bar{Q}_{n}(\alpha _{J})=\frac{1}{n}\sum_{i=1}^{n}\rho (Y_{i},X_{iJ}(%
\widehat{\tau})^{T}\alpha _{J})+\mu _{n}\sum_{j\in J(\alpha_0)}w_{j}\widehat{D}_{j}|\alpha
_{j}|.
\end{equation*}%
For notational simplicity, here we write $\widehat D_j\equiv D_j(\widehat\tau)$.

Our proofs below go through for both the two cases: (i) $\delta_0\neq0$ and $\tau_0$ is identifiable, and (ii) $\delta_0=0$ so $\tau_0$ is not identifiable.  The proofs of Theorems \ref{th3.1}  and \ref{th4.2} are combined. 
Throughout the proofs, when $\tau_0$ is identifiable, our argument is conditional on 
\begin{equation}
\widehat{\tau}\in \mathcal{T}_{n}=\left\{ \left\vert \tau -\tau _{0}\right\vert
\leq \omega _{n}^{2}s\cdot \log n\right\} ,  \label{Tn}
\end{equation}%
whose probability goes to $1$ due to Theorem \ref{th2.3}.  On the other hand, when $\tau_0$ is not identifiable, $\delta_0=0$, $\widehat\tau$ obtained in the first-step estimation can be any value in $\mathcal{T}$.

We first prove the following two lemmas.
Define \begin{align}\label{bar-alpha-J-def}
\bar{\alpha}_{J}\equiv\limfunc{argmin}_{\alpha _{J}}\bar{Q}_{n}(\alpha _{J}).
\end{align}

\begin{lem}
\label{la.1}    Suppose that $s^4(\log p)^3(\log n)^3+sM_n^4(\log p)^6(\log n)^6=o(n)$ and $\widehat\tau\in\mathcal{T}_n$ if $\delta_0\neq0$; suppose that $s^4\log s=o(n)$ and $\widehat\tau$  is any value in $\mathcal{T}$ if $\delta_0=0$.   Then 
\begin{equation*}
|\bar{\alpha }_{J}-\alpha _{0J}|_{2}=O_P\left( \sqrt{\frac{s\log s}{n}} \; \right).
\end{equation*}
\end{lem}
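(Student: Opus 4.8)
The plan is to analyze the restricted-to-oracle objective $\bar Q_n(\alpha_J)$, which (on the oracle coordinates) is an $\ell_1$-penalized M-estimation problem with an estimated threshold $\widehat\tau$ rather than the true $\tau_0$. The key point is that the adaptive SCAD weights $w_j$ for $j\in J(\alpha_0)$ are essentially zero with high probability: by Theorem \ref{th2.3} (or Theorem \ref{th2.3-delta0} when $\delta_0=0$) we have $|\widehat\alpha-\alpha_0|_1=O_P(\omega_n s)$, and the condition $\omega_n+s\sqrt{(\log s)/n}+M_n\omega_n^2 s\log n\ll \mu_n\ll \min_{j\in J(\alpha_0)}|\alpha_{0J}^{(j)}|$ guarantees that $|\widehat\alpha_j|>a\mu_n$ for every $j\in J(\alpha_0)$ w.p.a.1, hence $w_j=0$. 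Thus, w.p.a.1, $\bar\alpha_J$ is simply the (unpenalized) M-estimator $\operatorname{argmin}_{\alpha_J}\frac1n\sum_i\rho(Y_i,X_{iJ}(\widehat\tau)^T\alpha_J)$ on the $s$-dimensional oracle subspace.

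First I would establish a basic inequality: since $\bar\alpha_J$ minimizes $\bar Q_n$, compare $\bar Q_n(\bar\alpha_J)\le\bar Q_n(\alpha_{0J})$ and use the convexity and Lipschitz property of $\rho$ (Assumption \ref{a:obj-ftn}\ref{a:obj-ftn:itm1}) together with the empirical-process bound of Lemma \ref{lem-emp} to control $\nu_n(\bar\alpha_J,\widehat\tau)-\nu_n(\alpha_{0J},\widehat\tau)$. Next I would handle the threshold substitution: the population excess risk is evaluated at $\widehat\tau$ rather than $\tau_0$, so I would add and subtract $\mathbb{E}\rho(Y,X_J(\tau_0)^T\alpha_J)$ terms and use Assumption \ref{ass3.2}\ref{ass3.2:itm1} (Lipschitz continuity of the score in $\tau$, with constant $M_n$) combined with $|\widehat\tau-\tau_0|=O_P(\omega_n^2 s\log n)$ — this is where the extra $M_n$ and $\log n$ factors in the rate condition $sM_n^4(\log p)^6(\log n)^6=o(n)$ enter. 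When $\delta_0=0$, $X_J(\widehat\tau)^T\alpha_{0J}=X^T\beta_0$ regardless of $\widehat\tau$, so this substitution term vanishes and only $s^4\log s=o(n)$ is needed. Then I would invoke the local quadratic lower bound on the population objective from Assumption \ref{ass3.2}\ref{ass3.2:itm3} (the Hessian eigenvalue bound $\inf_\tau\lambda_{\min}(\cdot)>c_1$, uniformly over $\mathcal{T}_0$), controlling the third-order remainder via the bounded third derivative, to convert the excess-risk bound into an $\ell_2$ bound on $\bar\alpha_J-\alpha_{0J}$. A standard convexity/localization argument (restricting attention to a shrinking neighborhood and using that the objective is convex, so a minimizer cannot escape it) closes the estimate at the rate $\sqrt{s\log s/n}$, the $\sqrt{\log s}$ coming from the $\max_{j\le 2p}$-type bound specialized to the $s$ oracle coordinates (replacing $\log p$ by $\log s$).

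The main obstacle I expect is bookkeeping the threshold-substitution error carefully: one must show that replacing $\widehat\tau$ by $\tau_0$ inside both the empirical sum and the population risk costs only $O_P(M_n|\widehat\tau-\tau_0|)$ per coordinate and that, aggregated over the quadratic form in $s$ dimensions, this is dominated by the target rate under the stated growth condition. This requires combining Lemma \ref{lem-emp} (for the stochastic part, via \eqref{emp3} with $\eta$ of order $|\widehat\tau-\tau_0|$) with Assumption \ref{ass3.2}\ref{ass3.2:itm1}--\ref{ass3.2:itm2} (for the drift part), and checking that the cross terms between the estimation error in $\alpha_J$ and the error in $\widehat\tau$ are negligible. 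The remaining steps — the basic inequality, the empirical-process bound, and the local quadratic expansion — are by now fairly standard for penalized M-estimation with a diverging number of parameters (cf.\ \cite{He:Shao:00}), once the adaptive weights on the oracle set are shown to vanish.
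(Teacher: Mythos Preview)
Your proposal is correct and follows essentially the same route as the paper: both eliminate the penalty on $J(\alpha_0)$ via the vanishing SCAD weights, use a convexity/localization argument on a sphere of radius $C_\epsilon\sqrt{s\log s/n}$, lower-bound the population difference by the Taylor expansion in Assumption \ref{ass3.2}\ref{ass3.2:itm3} (with the score correction $|m_J(\widehat\tau,\alpha_0)-m_J(\tau_0,\alpha_0)|$ from \ref{ass3.2}\ref{ass3.2:itm1} absorbing the threshold error), and control the empirical process by symmetrization/contraction restricted to the oracle coordinates. The only adjustment needed is that Lemma \ref{lem-emp} as stated carries $\log p$ factors, so---as you already anticipate---the paper re-derives that bound directly on the $s$-dimensional oracle subspace, splitting the Rademacher-contracted process into a threshold-variation piece $V_{1n}$ handled by a VC-class maximal inequality and a piece $V_{2n}$ at $\tau_0$ handled by Bernstein, which is what produces the $\sqrt{\log s}$ rather than $\sqrt{\log p}$ rate.
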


\begin{proof}[Proof of Lemma \ref{la.1}.]
Let $k_{n}=\sqrt{\frac{s\log s}{n}}$. We first prove that for any $\epsilon >0$%
, there is $C_{\epsilon}>0$, with probability at least $1-\epsilon$, 
\begin{equation}
\inf_{|\alpha _{J}-\alpha _{0J}|_{2}=C_{\epsilon}k_{n}}\bar{Q}%
_{n}(\alpha _{J})>\bar{Q}_{n}(\alpha _{0J})  \label{1.1}
\end{equation}%
Once this is proved, then by the continuity of $\bar{Q}_{n}$, there is
a local minimizer of $\bar{Q}_{n}(\alpha _{J})$ inside $B(\alpha
_{0J},C_{\epsilon}k_{n}) \equiv \{\alpha _{J}\in \mathbb{R}^{s}:|\alpha _{0J}-\alpha
_{J}|_{2}\leq C_{\epsilon}k_{n}\}$. Due to the convexity of $\bar Q_n$, such a  local minimizer is also global. We now prove (\ref{1.1}). %conditioning onthe event $\widehat{\tau}\in \mathcal{T}_{n}$ in (\ref{Tn}), whose probability goes to one.

Write 
\begin{equation*}
l_{J}(\alpha _{J})=\frac{1}{n}\sum_{i=1}^{n}\rho (Y_{i},X_{iJ}(\widehat{\tau}%
)^{T}\alpha _{J}),\quad 
L_{J}(\alpha _{J}, \tau)= \mathbb{E}[ \rho (Y,X_{J}(\tau)^{T}\alpha _{J})].
\end{equation*}%
Then  for all $|\alpha _{J}-\alpha _{0J}|_{2}=C_{\epsilon}k_{n}$,
\begin{eqnarray*}
&&\bar{Q}_{n}(\alpha _{J})-\bar{Q}_{n}(\alpha _{0J}) \\
&=&l_{J}(\alpha _{J})-l_{J}(\alpha _{0J})+\sum_{j\in J(\alpha_0)}w_{j}\mu _{n}\widehat{D}%
_{j}(|\alpha _{j}|-|\alpha _{0j}|) \\
&\geq &\underbrace{L_{J}(\alpha _{J},\widehat{\tau})-L_{J}(\alpha
_{0J}, \widehat{\tau})}_{(1)}-\underbrace{\sup_{|\alpha _{J}-\alpha _{0J}|_{2}\leq C_{\delta
}k_{n}}|\nu _{n}(\alpha _{J},\widehat{\tau})-\nu _{n}(\alpha _{0J},\widehat{\tau})|}%
_{(2)}+\underbrace{\sum_{j\in J(\alpha_0)}\mu _{n}\widehat{D}_{j}w_{j}(|\alpha
_{j}|-|\alpha _{0j}|)}_{(3)}.
\end{eqnarray*}%

To analyze (1), note that $|\alpha _{J}-\alpha _{0J}|_{2}=C_{\epsilon}k_{n}$
and $m_{J}(\tau _{0},\alpha _{0})=0$ and when $\delta_0=0$, $m_J(\tau,\alpha_{0J})$ is free of $\tau$. Then there is $c_{3}>0$, 
\begin{eqnarray*}
&&L_{J}(\alpha _{J},\widehat{\tau})-L_{J}(\alpha _{0J},\widehat{\tau})\cr &\geq
&m_{J}(\tau _{0},\alpha _{0J})^{T}(\alpha _{J}-\alpha _{0J})+(\alpha
_{J}-\alpha _{0J})^{T}\frac{\partial ^{2} \mathbb{E}[ \rho (Y,X_{J}(\widehat{\tau}%
)^{T}\alpha _{0J})]}{\partial \alpha _{J}\partial \alpha _{J}^{T}}(\alpha
_{J}-\alpha _{0J})\cr 
&&-|m_{J}(\tau _{0},\alpha _{0J})-m_{J}(\widehat{\tau},\alpha _{0J})|_{2}|\alpha
_{J}-\alpha _{0J}|_{2}-c_{3}|\alpha _{0J}-\alpha _{J}|_{1}^{3}\cr &\geq
&\lambda _{\min } \left(\frac{\partial ^{2} \mathbb{E}[ \rho (Y,X_{J}(\widehat{\tau})^{T}\alpha_{0J})]}{\partial \alpha _{J}\partial \alpha _{J}^{T}} \right)|\alpha _{J}-\alpha
_{0J}|_{2}^{2}\cr 
&&-(|m_{J}(\tau _{0},\alpha _{0J})-m_{J}(\widehat{\tau},\alpha _{0J})|_{2})|\alpha
_{J}-\alpha _{0J}|_{2}-c_{3}s^{3/2}|\alpha _{0J}-\alpha _{J}|_{2}^{3}\cr %
&\geq &c_{1}C_{\epsilon}^{2}k_{n}^{2}-(|m_{J}(\tau _{0},\alpha _{0J})-m_{J}(%
\widehat{\tau},\alpha _{0J})|_{2})C_{\epsilon}k_{n}-c_{3}s^{3/2}C_{\delta
}^{3}k_{n}^{3} \\
&\geq &C_{\epsilon}k_{n}(c_{1}C_{\epsilon}k_{n}-M_n\omega_n^{2}s\cdot
\log n-c_{3}s^{3/2}C_{\epsilon}^{2}k_{n}^{2})\geq c_{1}C_{\delta
}^{2}k_{n}^{2}/3,
\end{eqnarray*}
where the last inequality follows from $M_n\omega_n^{2}s\cdot
\log n<1/3c_{1}C_{\epsilon}k_{n}$ and  $c_{3}s^{3/2}C_{\epsilon}^{2}k_{n}^{2})<1/3c_{1}C_{\epsilon}k_{n}$. These follow from the 
  condition $
s^4\log s+sM_n^4(\log p)^6(\log n)^6=o(n)
$

To analyze (2), by the symmetrization theorem   and the contraction theorem (see, for example, Theorems 14.3 and 14.4 of \cite{bulmann}), there is a Rademacher sequence $\epsilon _{1},...,\epsilon _{n}$
independent of $\{Y_{i},X_{i},Q_{i}\}_{i\leq n}$ such that (note that when $\delta_0=0$, $\alpha_J=\beta_J$,  \begin{equation*}
\nu _{n}\left( \alpha_J ,\tau \right) \equiv\frac{1}{n}\sum_{i=1}^{n}\left[ \rho
\left( Y_{i},X_{J(\beta_0)i}  ^{T}\beta_J \right) -
\mathbb{E} \rho \left(Y,X_{J(\beta_0)} ^{T}\beta_J \right) \right],
\end{equation*}
which is free of $\tau$.)
\begin{eqnarray*}
V_{n} &=&\mathbb{E} \left( \sup_{\tau \in \mathcal{T}_{n}}\sup_{|\alpha _{J}-\alpha
_{0J}|_{2}\leq C_{\epsilon}k_{n}}|\nu _{n}(\alpha _{J},\tau)-\nu _{n}(\alpha _{0J},\tau)| \right) \\
&\leq &2\mathbb{E} \left( \sup_{\tau \in \mathcal{T}_{n}}\sup_{|\alpha _{J}-\alpha
_{0J}|_{2}\leq C_{\epsilon}k_{n}}
\left|\frac{1}{n}\sum_{i=1}^{n}\epsilon _{i}[\rho
(Y_{i},X_{iJ}(\tau )^{T}\alpha _{J})-\rho (Y_{i},X_{iJ}(\tau )^{T}\alpha
_{0J})]\right| \right) \\
&\leq &4L \mathbb{E} \left( \sup_{\tau \in \mathcal{T}_{n}}\sup_{|\alpha _{J}-\alpha
_{0J}|_{2}\leq C_{\epsilon}k_{n}}\left| \frac{1}{n}\sum_{i=1}^{n}\epsilon
_{i}(X_{iJ}(\tau )^{T}\left( \alpha _{J}-\alpha _{0J}\right) )\right| \right),
\end{eqnarray*}%
which is bounded by the sum of the following two terms, $V_{1n}+V_{2n}$,  due to the triangle
inequality and the fact that $|\alpha _{J}-\alpha _{0J}|_{1}\leq |\alpha
_{J}-\alpha _{0J}|_{2}\sqrt{s}$:  when $\delta_0\neq0$ and $\tau_0$ is identifiable, 
\begin{eqnarray*}
V_{1n} &=&4L\mathbb{E} \left( \sup_{\tau \in \mathcal{T}_{n}}\sup_{|\alpha _{J}-\alpha
_{0J}|_{1}\leq C_{\epsilon}k_{n}\sqrt{s}} \left| \frac{1}{n}\sum_{i=1}^{n}\epsilon
_{i}(X_{iJ}(\tau )-X_{iJ}(\tau _{0}))^{T}(\alpha _{J}-\alpha _{0J}) \right| \right) \\
&\leq &4L\mathbb{E} \left( \sup_{\tau \in \mathcal{T}_{n}}\sup_{|\delta _{J}-\delta
_{0J}|_{1}\leq C_{\epsilon}k_{n}\sqrt{s}} \left|\frac{1}{n}\sum_{i=1}^{n}%
\epsilon _{i}X_{iJ(\delta_0)}^{T}(1\{Q_{i}>\tau \}-1\{Q_{i}>\tau _{0}\})(\delta
_{J}-\delta _{0J}) \right| \right) \\
&\leq &4LC_{\epsilon}k_{n}\sqrt{s} \mathbb{E} \left( \sup_{\tau \in \mathcal{T}_{n}}\max_{j\in
J(\delta_0)}\left\vert \frac{1}{n}\sum_{i=1}^{n}\epsilon _{i}X_{ij}(1\{Q_{i}>\tau
\}-1\{Q_{i}>\tau _{0}\})\right\vert \right) \\
&\leq &4LC_{\epsilon}k_{n}\sqrt{s}C_{1}\left\vert J(\delta_0)\right\vert_0 \sqrt{%
\frac{\omega _{n}^{2}s\cdot \log n}{n}},
\end{eqnarray*}%
due to the maximal inequality (for VC class indexed by $\tau$ and $j$); when $\delta_0=0$, $V_{1n}\equiv0.$
\begin{eqnarray*}
V_{2n} &=&4L \mathbb{E}\left( \sup_{|\alpha _{J}-\alpha _{0J}|_{1}\leq C_{\epsilon}k_{n}\sqrt{s%
}} \left|\frac{1}{n}\sum_{i=1}^{n}\epsilon _{i}X_{iJ}(\tau _{0})^{T}(\alpha
_{J}-\alpha _{0J})\right| \right) \\
&\leq &4LC_{\epsilon}k_{n}\sqrt{s} \mathbb{E} \left( \max_{j\in J(\alpha_0)} \left|\frac{1}{n}\sum_{i=1}^{n}\epsilon _{i}X_{ij}(\tau
_{0}) \right| \right)\leq 4LC_{\epsilon} C_{2}k_{n}^2,
\end{eqnarray*}%
due to the Bernstein's moment inequality (Lemma 14.12 of \cite{bulmann} for some $C_{2}>0.$ Therefore, 
$$
V_n\leq 4LC_{\epsilon}k_{n}\sqrt{s}C_{1}\left\vert J(\delta_0)\right\vert_0 \sqrt{%
\frac{\omega _{n}^{2}s\cdot \log n}{n}}+4LC_{\epsilon} C_{2}k_{n}^2<5LC_{\epsilon} C_{2}k_{n}^2, 
$$
where the last inequality is due to  $C_{\epsilon}
 s^3(\log p)^3\log(n)^3=o(n)$.
Therefore, conditioning on the event $\widehat{\tau}\in \mathcal{T}_{n}$ when $\delta_0\neq0$, or for $\widehat\tau\in\mathcal{T}$  when $\delta_0=0$, with
probability at least $1-\epsilon$, $(2)\leq \frac{1}{\epsilon }5LC_{2}C_{\epsilon}k_{n}^{2}$.

In addition, note that $P(\max_{j\in J(\alpha_0)}|w_{j}|=0)=1$, so $(3)=0$ with
probability approaching one. Hence 
\begin{equation*}
\inf_{|\alpha _{J}-\alpha _{0J}|_{2}=C_{\epsilon}k_{n}}\bar{Q}%
_{n}(\alpha _{J})-\bar{Q}_{n}(\alpha _{0J})\geq \frac{c_{1}C_{\epsilon}^{2}k_{n}^{2}}{3}-\frac{1}{\epsilon}5LC_{2}C_{\epsilon}k_{n}^{2}>0.
\end{equation*}%
The last inequality holds for $C_{\epsilon}>\frac{15LC_{2}}{c_{1}\epsilon }$.
By the continuity of $\bar{Q}_{n}$, there is a local minimizer of $%
\bar{Q}_{n}(\alpha _{J})$ inside $\{\alpha _{J}\in \mathbb{R}%
^{s}:|\alpha _{0J}-\alpha _{J}|_{2}\leq C_{\epsilon}k_{n}\}$, which is also a global minimizer due to the convexity. $\hfill $
\end{proof}

On $\mathbb{R}^{2p}$, write 
\begin{equation*}
L_n(\tau, \alpha)=\frac{1}{n}\sum_{i=1}^n\rho(Y_i, X_i(\tau)^T\alpha).
\end{equation*}

For $\bar\alpha_J=(\bar\beta_{J(\beta_0)},\bar\delta_{J(\delta_0)})\equiv (\bar \beta_J,\bar\delta_J)$ in
the previous lemma, define 
\begin{equation*}
\bar\alpha=(\bar\beta_{J}^T,0^T, \bar\delta_{J}^T,0^T)^T.
\end{equation*}
Without introducing confusions, we also write $\bar\alpha=(\bar%
\alpha_J,0)$ for notational simplicity. This notation indicates that $%
\bar\alpha$ has zero entries on the indices outside the oracle index set $J(\alpha_0)$. We prove the following lemma.

\begin{lem}
\label{la.2} With probability approaching one, there is a random
neighborhood of $\bar\alpha$ in $\mathbb{R}^{2p}$, denoted by $%
\mathcal{H}$, so that $\forall\alpha=(\alpha_J,\alpha_{J^c})\in \mathcal{H}$%
, if $\alpha_{J^c}\neq0$, we have $\widetilde S_n(\alpha_J,0)<\widetilde
Q_n(\alpha).$
\end{lem}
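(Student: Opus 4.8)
The plan is to show that in a small neighbourhood of $\bar\alpha$ the SCAD penalty attached to the coordinates outside $J(\alpha_0)$ strictly dominates the change in the empirical loss caused by switching those coordinates off; hence any $\alpha$ with $\alpha_{J^c}\neq0$ is strictly beaten by $(\alpha_J,0)$. Write $L_n(\tau,\alpha)=\frac1n\sum_{i=1}^n\rho(Y_i,X_i(\tau)^T\alpha)$, which is convex in $\alpha$ by Assumption \ref{a:obj-ftn} \ref{a:obj-ftn:itm1}. Since the $J$-block of the penalty in $\widetilde S_n$ does not involve $\alpha_{J^c}$, for every $\alpha=(\alpha_J,\alpha_{J^c})$,
\[
\widetilde S_n(\alpha)-\widetilde S_n((\alpha_J,0))=\big[L_n(\widehat\tau,\alpha)-L_n(\widehat\tau,(\alpha_J,0))\big]+\mu_n\sum_{j\notin J(\alpha_0)}w_jD_j(\widehat\tau)|\alpha_j|,
\]
so it suffices to produce a neighbourhood $\mathcal{H}$ of $\bar\alpha$ on which the right-hand side is strictly positive whenever $\alpha_{J^c}\neq0$.

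First I would lower-bound the penalty block. By Theorem \ref{th2.3} (resp. Theorem \ref{th2.3-delta0} when $\delta_0=0$) the first-step Lasso satisfies $|\widehat\alpha-\alpha_0|_\infty=o_P(\mu_n)$ under the rate conditions of Theorem \ref{th3.1}, so w.p.a.1 $|\widehat\alpha_j|<\mu_n$ for every $j\notin J(\alpha_0)$, hence $w_j=1$; together with $D_j(\widehat\tau)\ge\underline{D}$ from Assumption \ref{a:setting} \ref{a:setting:itm3}, the penalty block is at least $\mu_n\underline{D}\,|\alpha_{J^c}|_1$. Next I would upper-bound the loss difference by convexity: picking any subgradient $g=\frac1n\sum_{i=1}^n\psi_iX_i(\widehat\tau)$ of $L_n(\widehat\tau,\cdot)$ at $\alpha$ with $|\psi_i|\le L$, the subgradient inequality $L_n(\widehat\tau,(\alpha_J,0))\ge L_n(\widehat\tau,\alpha)+\langle g,(0,-\alpha_{J^c})\rangle$ gives $L_n(\widehat\tau,\alpha)-L_n(\widehat\tau,(\alpha_J,0))\le\max_{j\notin J(\alpha_0)}|g_j|\cdot|\alpha_{J^c}|_1$. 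Thus everything reduces to showing that, on a suitable $\mathcal{H}$, w.p.a.1 $\max_{j\notin J(\alpha_0)}|g_j|<\mu_n\underline{D}$ uniformly over $\alpha\in\mathcal{H}$ and over the subdifferential.

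To control the score on $J^c$, split $\frac1n\sum_i\psi_iX_{ij}(\widehat\tau)$ into a centred part and a bias part, the latter being $m_j(\widehat\tau,\alpha)$ (the expected subgradient, well defined by the differentiability-in-expectation assumed at the start of Section \ref{sec:oracle-inference}). The centred part is $O_P(\sqrt{\log(np)/n})=o_P(\mu_n)$ uniformly in $j\le2p$, $\tau\in\mathcal{T}$ and $\alpha$ in the bounded parameter space, by symmetrization/contraction and Bernstein arguments analogous to those in Lemma \ref{lem-emp}, which only use boundedness of $\psi$; in the non-smooth case the set-valuedness of the subdifferential is handled by passing to the worst-case selection. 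For the bias, $m(\tau_0,\alpha_0)=0$, so Assumption \ref{ass3.2} \ref{ass3.2:itm1}--\ref{ass3.2:itm2} yield $|m_j(\widehat\tau,\alpha)|\le L|\alpha-\alpha_0|_1+M_n|\widehat\tau-\tau_0|$, where $|\alpha-\alpha_0|_1\le|\alpha-\bar\alpha|_1+|\bar\alpha-\alpha_0|_1$ with $|\bar\alpha-\alpha_0|_1\le\sqrt s\,|\bar\alpha_J-\alpha_{0J}|_2=O_P(s\sqrt{\log s/n})=o_P(\mu_n)$ by Lemma \ref{la.1}, and $M_n|\widehat\tau-\tau_0|=M_n\,O_P(\omega_n^2s\log n)=o_P(\mu_n)$ by Theorem \ref{th2.3} and the hypothesis $M_n\omega_n^2s\log n\ll\mu_n$. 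Choosing $\mathcal{H}=\{\alpha:|\alpha-\bar\alpha|_1<\epsilon\}$ with $\epsilon$ small makes $L|\alpha-\bar\alpha|_1$ arbitrarily small, so for all large $n$, w.p.a.1 $\max_{j\notin J(\alpha_0)}|g_j|<\mu_n\underline{D}$ on $\mathcal{H}$. (When $\delta_0=0$ the $\widehat\tau$-term vanishes identically because $m_j(\tau,\alpha_0)=0$ for every $\tau$, so $\widehat\tau$ may be arbitrary in $\mathcal{T}$, matching the two cases treated jointly in the proof of Theorems \ref{th3.1} and \ref{th4.2}.) Combining, for every $\alpha\in\mathcal{H}$ with $\alpha_{J^c}\neq0$,
\[
\widetilde S_n(\alpha)-\widetilde S_n((\alpha_J,0))\ge\Big(\mu_n\underline{D}-\max_{j\notin J(\alpha_0)}|g_j|\Big)|\alpha_{J^c}|_1>0,
\]
which is the claim (and, combined with Lemma \ref{la.1} and the convexity of $\widetilde S_n$, forces $\widetilde\alpha=\bar\alpha=(\bar\alpha_J,0)$).

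The main obstacle is the uniform control of the set-valued, $\alpha$-dependent empirical score $\frac1n\sum_i\psi_iX_{ij}(\widehat\tau)$ on $J^c$: one must upgrade the empirical-process bounds of Lemma \ref{lem-emp} to the (sub)gradient process uniformly over $j$, $\tau$ and a neighbourhood of $\bar\alpha$, and in the non-differentiable (quantile) case bound the worst-case subgradient selection, showing the resulting boundary contribution $L\cdot\frac1n\sum_i|X_{ij}(\tau)|\,1\{|Y_i-X_i(\tau)^T\alpha|\le\epsilon_n\}$ is negligible uniformly; alternatively one controls the subgradient at the distinguished point $(\bar\alpha_J,0)$ using the first-order optimality of $\bar\alpha_J$ plus a direct fluctuation bound on the $J^c$ coordinates, and then extends to $\mathcal{H}$ by upper hemicontinuity of the subdifferential. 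Everything else is routine convexity bookkeeping.
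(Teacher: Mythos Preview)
Your overall strategy---show that, on a shrinking neighbourhood of $\bar\alpha$, the $J^c$-penalty strictly dominates the change in empirical loss when $\alpha_{J^c}$ is zeroed out---is exactly the paper's. But the subgradient step is pointed the wrong way. The subgradient inequality at $\alpha$ gives an \emph{upper} bound $L_n(\widehat\tau,\alpha)-L_n(\widehat\tau,(\alpha_J,0))\le\max_{j\notin J}|g_j|\,|\alpha_{J^c}|_1$, whereas your final display $\widetilde S_n(\alpha)-\widetilde S_n((\alpha_J,0))\ge(\mu_n\underline D-\max_j|g_j|)|\alpha_{J^c}|_1$ requires a \emph{lower} bound on that loss difference. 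The fix is simple: evaluate the subgradient at $(\alpha_J,0)$ instead of at $\alpha$, which gives $L_n(\widehat\tau,\alpha)-L_n(\widehat\tau,(\alpha_J,0))\ge-\max_{j\notin J}|g'_j|\,|\alpha_{J^c}|_1$; since $\bar\alpha_{J^c}=0$, the point $(\alpha_J,0)$ also lies in your $\ell_1$-ball $\mathcal H$, so your uniform score bounds still apply.

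Even after this repair, your route differs from the paper's in how the loss increment is controlled, and the paper's execution is cleaner precisely at the place you flagged as the ``main obstacle.'' Rather than bounding the (set-valued, $\alpha$-dependent) empirical score on $J^c$, the paper splits the \emph{loss difference} itself:
\[
L_n(\widehat\tau,(\alpha_J,0))-L_n(\widehat\tau,\alpha)\le\bigl[\mathbb E L_n(\widehat\tau,(\alpha_J,0))-\mathbb E L_n(\widehat\tau,\alpha)\bigr]+\bigl|\nu_n((\alpha_J,0),\widehat\tau)-\nu_n(\alpha,\widehat\tau)\bigr|.
\]
For the population piece, the mean-value theorem on the \emph{expected} loss (differentiable by hypothesis) gives $\sum_{j\notin J}m_j(\widehat\tau,h)\alpha_j$ with $h$ between $\alpha$ and $(\alpha_J,0)$, and then $\max_{j\notin J}|m_j(\widehat\tau,h)|=o_P(\mu_n)$ via exactly the Assumption~\ref{ass3.2} bounds you invoke. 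For the centred piece, symmetrization and contraction applied directly to the loss (not the subgradient) give, for every fixed $\alpha$,
\[
\mathbb E\sup_\tau\bigl|\nu_n((\alpha_J,0),\tau)-\nu_n(\alpha,\tau)\bigr|\le 4L\,\mathbb E\sup_\tau\max_{j\le2p}\Bigl|\tfrac1n\sum_i\epsilon_iX_{ij}(\tau)\Bigr|\cdot|\alpha_{J^c}|_1\le 2\omega_n|\alpha_{J^c}|_1,
\]
using the bound already established in \eqref{eqa.5}. No uniformity in $\alpha$ is needed here because the H\"older step factors out $|\alpha_{J^c}|_1$ explicitly, and there is no subdifferential to worry about. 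This delivers $L_n(\widehat\tau,(\alpha_J,0))-L_n(\widehat\tau,\alpha)\le o_P(\mu_n)|\alpha_{J^c}|_1$ immediately, bypassing the uniform empirical-score and worst-case-selection issues that your approach would still have to resolve.
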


\begin{proof}[Proof of Lemma \ref{la.2}]
Define an $l_{2}$-ball, for $r_{n}=\mu _{n}/\log n,$ 
\begin{equation*}
\mathcal{H}=\{\alpha \in \mathbb{R}^{2p}:|\alpha -\bar{\alpha }%
|_{2}<r_{n}/(2p)\}.
\end{equation*}%
Then $\sup_{\alpha \in \mathcal{H}}|\alpha -\bar{\alpha }%
|_{1}=\sup_{\alpha \in \mathcal{H}}\sum_{l\leq 2p}|\alpha _{l}-\bar{%
\alpha }_{l}|<r_{n}.$ Consider any $\tau \in \mathcal{T}_{n}$. For any $%
\alpha =(\alpha _{J},\alpha _{J^{c}})\in \mathcal{H}$, write 
\begin{eqnarray*}
&&L_{n}(\tau ,\alpha _{J},0)-L_{n}(\tau ,\alpha ) \\
&&= L_{n}(\tau ,\alpha
_{J},0)-\mathbb{E}L_{n}(\tau ,\alpha _{J},0)+\mathbb{E}L_{n}(\tau ,\alpha _{J},0)-L_{n}(\tau
,\alpha )+\mathbb{E}L_{n}(\tau ,\alpha )-\mathbb{E}L_{n}(\tau ,\alpha )\cr
&&\leq \mathbb{E}L_{n}(\tau
,\alpha _{J},0)-\mathbb{E}L_{n}(\tau ,\alpha )+|L_{n}(\tau ,\alpha
_{J},0)-\mathbb{E}L_{n}(\tau ,\alpha _{J},0)+\mathbb{E}L_{n}(\tau ,\alpha )-L_{n}(\tau ,\alpha
)|\cr
&&\leq \mathbb{E}L_{n}(\tau ,\alpha _{J},0)-\mathbb{E}L_{n}(\tau ,\alpha )+|\nu _{n}(\alpha _{J},0,\tau)-\nu _{n}(\alpha,\tau )|.
\end{eqnarray*}

Note that $|(\alpha_J,0)-\bar\alpha|_2^2=|\alpha_J-\bar\alpha_J|_2^2\leq
|\alpha_J-\bar\alpha_J|_2^2+|\alpha_{J^c}-0|_2^2=|\alpha-\bar\alpha|_2^2$. Hence $\alpha\in\mathcal{H}$ implies $(\alpha_J,0)\in\mathcal{H%
}$. In addition, by definition of $\bar\alpha=(\bar\alpha_J,0)$
and $|\bar\alpha_J-\alpha_{0J}|_2=O_P(\sqrt{\frac{s\log s}{n}})$ (Lemma \ref{la.1}), we
have $|\bar\alpha-\alpha_0|_1=O_P(s\sqrt{\frac{\log s}{n}})$, which
also implies 
\begin{equation*}
\sup_{\alpha\in\mathcal{H}}|\alpha-\alpha_0|_1=O_P(s\sqrt{\frac{\log s}{n}}%
)+r_n,
\end{equation*}
where the randomness in $\sup_{\alpha\in\mathcal{H}}|\alpha-\alpha_0|_1$
comes from that of $\mathcal{H}$.

By the mean value theorem, there is $h$ in the segment between $\alpha$ and $%
(\alpha_J,0)$, 
\begin{eqnarray*}
\mathbb{E}L_n(\tau, \alpha_J,0)-\mathbb{E}L_n(\tau,\alpha)&=&\mathbb{E}\rho(Y,
X_J(\tau)^T\alpha_J)-\mathbb{E}\rho(Y,X_J(\tau)^T\alpha_J+X_{J^c}(\tau)^T\alpha_{J^c})%
\cr & =&-\sum_{j\notin J(\alpha_0)}\frac{\partial \mathbb{E}\rho(Y, X(\tau)^Th)}{\partial
\alpha_j}\alpha_j\equiv \sum_{j\notin J(\alpha_0)}m_j(\tau, h)\alpha_j
\end{eqnarray*}
where $m_j(\tau, h)=-\frac{\partial \mathbb{E}\rho(Y, X(\tau)^Th)}{\partial \alpha_j}%
. $ Hence, $\mathbb{E}L_n(\tau, \alpha_J,0)-\mathbb{E}L_n(\tau, \alpha)\leq \sum_{j\notin
J}|m_j(\tau, h)||\alpha_j|. $

Because $h$ is on the segment between $\alpha $ and $(\alpha _{J},0)$, so $%
h\in \mathcal{H}$. So for all $j\notin J(\alpha_0)$, 
\begin{equation*}
|m_{j}(\tau ,h)|\leq \sup_{\alpha \in \mathcal{H}}|m_{j}(\tau ,\alpha )|\leq
\sup_{\alpha \in \mathcal{H}}|m_{j}(\tau ,\alpha )-m_{j}(\tau ,\alpha
_{0})|+|m_{j}(\tau ,\alpha _{0})-m_{j}(\tau _{0},\alpha _{0})|.
\end{equation*}
We now argue that  we can apply Assumption \ref{ass3.2}  \ref{ass3.2:itm2}. Let $$c_{n}=s\sqrt{\left( \log s\right) /n}+r_{n}.$$ 
For any $\epsilon >0$, there is $C_{\epsilon}>0$, with probability at last $%
1-\epsilon$,  $\sup_{\alpha \in 
\mathcal{H}}|\alpha -\alpha _{0}|_{1}\leq C_{\epsilon}c_n$. $\forall \alpha\in\mathcal{H}$,  write $\alpha=(\beta, \delta)$ and $\theta=\beta+\delta$. On the event $|\alpha-\alpha_0|_1\leq C_{\epsilon} c_n$,   we have $|\beta-\beta_0|_1\leq C_{\epsilon} c_n$ and $|\theta-\theta_0|_1\leq C_{\epsilon} c_n$. Hence $\mathbb{E}[(X^T(\beta-\beta_0))^2 1\{Q\leq\tau_0\}]\leq |\beta-\beta_0|_1^2\max_{i,j\leq p}E|X_iX_j|<r^2$, yielding $\beta\in\mathcal{B}(\beta_0, r)$. Similarly, $\theta\in\mathcal{G}(\theta_0, r)$. Therefore, by Assumption \ref{ass3.2} \ref{ass3.2:itm2}, with probability at least $1-\epsilon$,  (note that neither $C_{\epsilon}, L$ nor $c_n$ depend on $\alpha$)
\begin{eqnarray*}
\max_{j\notin J(\alpha_0)}\sup_{\tau \in \mathcal{T}_{n}}\sup_{\alpha \in \mathcal{H}%
}|m_{j}(\tau ,\alpha )-m_{j}(\tau ,\alpha _{0})|\leq L\sup_{\alpha \in 
\mathcal{H}}|\alpha -\alpha _{0}|_{1}\leq L(C_{\epsilon}c_n),\cr\max_{j\leq 2p}\sup_{\tau \in \mathcal{T}_{n}}|m_{j}(\tau
,\alpha _{0})-m_{j}(\tau _{0},\alpha _{0})|\leq 
M_{n}\omega _{n}^{2}s\cdot \log n.
\end{eqnarray*} In particular, when $\delta_0=0$, $m_j(\tau,\alpha_0)=0$ for all $\tau$.
Therefore, when $\delta_0\neq 0$, $$\sup_{j\notin J(\alpha_0)}\sup_{\tau \in \mathcal{T}_{n}}|m_{j}(\tau
,h)|=O_P(c_n+M_{n}\omega _{n}^{2}s\cdot \log
n)=o_P(\mu _{n});$$
when $\delta_0=0$,  $\sup_{j\notin J(\alpha_0)}\sup_{\tau \in \mathcal{T}}|m_{j}(\tau
,h)|=O_P(c_n)=o_P(\mu _{n})$.

Let  $\epsilon_1,...,\epsilon_n$  be  a  Rademacher  sequence  independent of $\{Y_i, X_i, Q_i\}_{i\leq n}$. Then by the  symmetrization and contraction theorems, 
\begin{eqnarray*}
 && \mathbb{E} \left( \sup_{\tau \in \mathcal{T}} |\nu _{n}(\alpha _{J},0, \tau)-\nu _{n}(\alpha,\tau)| \right) \cr
&&\leq 2\mathbb{E} \left( \sup_{\tau \in \mathcal{T}} 
\left|\frac{1}{n}\sum_{i=1}^{n}\epsilon _{i}[\rho
(Y_{i},X_{iJ}(\tau )^{T}\alpha _{J})-\rho (Y_{i},X_{i}(\tau )^{T}\alpha)]\right| \right) \\
&&\leq 4L \mathbb{E} \left( \sup_{\tau \in \mathcal{T}} \left| \frac{1}{n}\sum_{i=1}^{n}\epsilon_{i}[X_{iJ}(\tau )^{T} \alpha _{J}-X_i(\tau)^T\alpha]\right| \right)\cr
&&\leq 4L \mathbb{E} \left( \sup_{\tau \in \mathcal{T}} \left\| \frac{1}{n}\sum_{i=1}^{n}\epsilon_{i}X_{i}(\tau )\right\|_{\max} \right)\sum_{j\notin J(\alpha_0)}|\alpha_j|\leq 2\omega_n\sum_{j\notin J(\alpha_0)}|\alpha_j|,
\end{eqnarray*}%
where the last equality follows from (\ref{eqa.5}).

  Thus uniformly over $\alpha\in\mathcal{H}$, 
  $
L_{n}(\tau ,\alpha _{J},0)-L_{n}(\tau ,\alpha )=o_P(\mu _{n})\sum_{j\notin
J(\alpha_0)}|\alpha _{j}|.
$ 
On the other hand, $$\sum_{j\in J(\alpha_0)}w_{j}\mu _{n}\widehat{D}_{j}|\alpha
_{j}|-\sum_{j}w_{j}\mu _{n}\widehat{D}_{j}|\alpha _{j}|=\sum_{j\notin J(\alpha_0)}\mu
_{n}w_{j}\widehat{D}_{j}|\alpha _{j}|.$$
 Also, with probability approaching one, $%
w_{j}=1$ and $\widehat{D}_{j}\geq \overline{D}$ for all $j\notin J(\alpha_0)$. Hence with
probability approaching one, $\widetilde{Q}_{n}(\alpha _{J},0)-\widetilde{Q}%
_{n}(\alpha )$ equals 
\begin{equation*}
L_{n}(\widehat{\tau},\alpha _{J},0)+\sum_{j\in J(\alpha_0)}\widehat{D}_{j}w_{j}\lambda
_{n}|\alpha _{j}|-L_{n}(\widehat{\tau},\alpha )-\sum_{j\leq 2p}\widehat{D}%
_{j}w_{j}\omega_n|\alpha _{j}|\leq -\underline{D}\frac{\mu _{n}}{2}%
\sum_{j\notin J(\alpha_0)}|\alpha _{j}|<0.  \text{\qed}
\end{equation*}

\noindent
\textbf{Proof of Theorems \ref{th3.1} and \ref{th4.2}.}
By Lemmas \ref{la.1} and \ref{la.2},  with probability approaching one,  for any $\alpha =(\alpha _{J},\alpha
_{J^{c}})\in \mathcal{H}$, 
\begin{equation*}
\widetilde{S}_{n}(\bar \alpha_J, 0)=\bar Q_n(\bar\alpha_J)\leq \bar Q_n(\alpha_J)=\widetilde S_n(\alpha_J, 0)\leq \widetilde S_n(\alpha).
\end{equation*}%
Hence $(\bar\alpha_J,0)$ is a local minimizer of $\widetilde S_n$, which is also a global minimizer due to the convexity. This implies that with probability approaching one,  $\widetilde\alpha=(\widetilde\alpha_J,\widetilde\alpha_{J^c})$ satisfies: $\widetilde\alpha_{J^c}=0$, and  $\widetilde\alpha_J=\bar\alpha_J$, so 
\begin{equation*}
\left\vert \widetilde{\alpha }_{J}-\alpha _{0J}\right\vert _{2}=O_P \left( \sqrt{%
\frac{s\log s}{n}} \; \right),\quad \left\vert \widetilde{\alpha }_{J}-\alpha _{0J}\right\vert _{1}=O_P \left( s\sqrt{%
\frac{\log s}{n}} \; \right).
\end{equation*}
\end{proof}

%\section{Proofs for Section \ref{sec:AN}}\label{sec:proof-AN}

\begin{proof}[Proof of Lemma \ref{lem-AI}]
Noting that
\[
\rho\left(  Y_{i},X_{i}^{T}\beta+X_{i}^{T}\delta1\left\{  Q_{i}>\tau\right\}
\right)  =\rho\left(  Y_{i},X_{i}^{T}\beta\right)  1\left\{  Q_{i}\leq
\tau\right\}  +\rho\left(  Y_{i},X_{i}^{T}\beta+X_{i}^{T}\delta\right)
1\left\{  Q_{i}>\tau\right\}  ,
\]
we have, for $\tau>\tau_{0},$%
\begin{align*}
&  D_{n}\left(  \alpha,\tau\right) \\
&  \equiv \left\{ Q_{n}^{\ast}\left(  \alpha,\tau\right)  -Q_{n}^{\ast}\left(  \alpha
,\tau_{0}\right) \right\}  -\left\{  Q_{n}^{\ast}\left(  \alpha_{0},\tau\right)
-Q_{n}^{\ast}\left(  \alpha_{0},\tau_{0}\right)  \right\} \\
&  =\frac{1}{n}\sum_{i=1}^{n}\left[  \rho\left(  Y_{i},X_{i}^{T}\beta\right)
-\rho\left(  Y_{i},X_{i}^{T}\beta_{0}\right)  \right]  1\left\{  \tau
_{0}<Q_{i}\leq\tau\right\} \\
&  -\frac{1}{n}\sum_{i=1}^{n}\left[  \rho\left(  Y_{i},X_{i}^{T}\beta
+X_{i}^{T}\delta\right)  -\rho\left(  Y_{i},X_{i}^{T}\beta_{0}+X_{i}^{T}%
\delta_{0}\right)  \right]  1\left\{  \tau_{0}<Q_{i}\leq\tau\right\} \\
&=: D_{n1}\left(  \alpha,\tau\right) - D_{n2}\left(  \alpha,\tau\right).
\end{align*}
To prove this lemma, we consider empirical processes
\[
\mathbb{G}_{nj}\left(  \alpha_J,\tau\right)  \equiv\sqrt{n}\left(  D_{nj}\left(  \alpha_J
,\tau\right)  - \mathbb{E} D_{nj}\left(  \alpha_J,\tau\right)  \right), \ \ (j = 1,2),
\]
and apply the maximal inequality in Theorem 2.14.2 of \cite{VW}. 

First, for $\mathbb{G}_{n1}\left(  \alpha_J,\tau\right)$, we consider  the following class of functions indexed by $(\beta_J,\tau)$:
\[
\mathcal{F}_n \equiv \{ \left(  \rho\left(
Y_{i},X_{iJ}^{T}\beta_J \right)  -\rho\left(  Y_{i},X_{iJ}^{T}\beta_{0J}\right)
\right)  1\left(  \tau_{0}<Q_{i}\leq\tau\right): 
|\beta_J - \beta_{0J}|_2 \leq K r_n  \text{ and } \left\vert \tau-\tau_{0}\right\vert \leq Ks_{n} \}.
\]
Note that the
Lipschitz property of $\rho$ yields that%
\[
\left\vert \rho\left(  Y_{i},X_{iJ}^{T}\beta_J \right)  -\rho\left(  Y_{i}%
,X_{iJ}^{T}\beta_{0J}\right)  \right\vert 1\left\{  \tau_{0}<Q_{i}\leq
\tau\right\}  \leq \left\vert X_{iJ}^{T}\right\vert _{2} |\beta_J - \beta_{0J}|_2 
1\left\{
\left\vert Q_{i}-\tau_{0}\right\vert \leq Ks_{n}\right\}  .
\]
Thus, we let the envelope function be $F_{n}(X_{iJ},Q_i)\equiv\left\vert X_{iJ}\right\vert _{2}Kr_{n}1\left\{  \left\vert
Q_{i}-\tau_{0}\right\vert \leq Ks_{n}\right\}  \ $ and note that its $L_{2}$
norm is $O\left(  \sqrt{s}r_{n}\sqrt{s_{n}}\right).$ \

To compute the bracketing integral 
$$
J_{[]}\left(1,  \mathcal{F}_{n}, L_2 \right) \equiv \int_0^1 
\sqrt{1 + \log N_{[]} \left( \varepsilon \| F_{n} \|_{L_2}, \mathcal{F}_{n},  L_2 \right)} d\varepsilon,
$$
note that 
its $2\varepsilon$ bracketing
number is bounded by the product of the $\varepsilon$ bracketing numbers of two classes
$\mathcal{F}_{n1} \equiv\left\{  \rho\left(  Y_{i},X_{iJ}^{T}\beta_J \right)  -\rho\left(  Y_{i},X_{iJ}%
^{T}\beta_{0}\right) : |\beta_J - \beta_{0J}|_2 \leq K r_n  \right\}  $ and 
$\mathcal{F}_{n2} \equiv\{ 1\left(  \tau_{0}<Q_{i}\leq
\tau\right) : \left\vert \tau-\tau_{0}\right\vert \leq Ks_{n} \}$
by Lemma 9.25 of \cite{Kosorok} since both classes are bounded w.p.a.1 (note that w.p.a.1,
$\left\vert X_{iJ}\right\vert _{2}Kr_{n} < C < \infty$ for some constant $C$). 
That is, 
\begin{align*}
N_{[]} \left( 2\varepsilon \| F_{n} \|_{L_2}, \mathcal{F}_{n},  L_2 \right)
\leq N_{[]} \left( \varepsilon \| F_{n} \|_{L_2}, \mathcal{F}_{n1},  L_2 \right)
 N_{[]} \left( \varepsilon \| F_{n} \|_{L_2}, \mathcal{F}_{n2},  L_2 \right).
\end{align*}

Let $F_{n1}(X_{iJ})\equiv\left\vert X_{iJ}\right\vert _{2}Kr_{n}$ and  $ l_n(X_{iJ}) \equiv \left\vert X_{iJ}\right\vert _{2}  $. Note that by Theorem 2.7.11 of \cite{VW}, the  Lipschitz property of $\rho$ implies that 
\begin{align*}
N_{[]} \left( 2\varepsilon \| l_n \|_{L_2}, \mathcal{F}_{n1},  L_2 \right)
\leq N( \varepsilon, \{\beta_J :  |\beta_J - \beta_{0J}|_2 \leq K r_n \} , |\cdot|_2),
\end{align*}
which in turn implies that, for some constant $C$,
\begin{align*}
N_{[]} \left( \varepsilon \| F_{n} \|_{L_2}, \mathcal{F}_{n1},  L_2 \right)
&\leq N \left( \frac{\varepsilon \| F_{n} \|_{L_2}}{2\| l_{n} \|_{L_2}} , \{\beta_J :  |\beta_J - \beta_{0J}|_2 \leq K r_n \} , |\cdot|_2 \right) \\
&\leq C \left(  \frac{\sqrt{s}}{\varepsilon \sqrt{s_n}}  \right)^{s} = C \left(\frac{\sqrt{n}}{\varepsilon (\log p)^{3/2} (\log n)}\right)^s,
\end{align*}
where the last inequality holds since a $\varepsilon $-ball contains a hypercube with side length $\varepsilon / \sqrt{s}$ in the $s$-dimensional Euclidean space. On the other hand, for the second class of functions $\mathcal{F}_{n2}$
with the envelope function $F_{n2}(Q_i)\equiv1\left\{  \left\vert
Q_{i}-\tau_{0}\right\vert \leq Ks_{n}\right\}$, we have that 
\begin{align*}
N_{[]} \left( \varepsilon \| F_{n} \|_{L_2}, \mathcal{F}_{n2},  L_2 \right) 
&\leq C \frac{\sqrt{s_n}}{\varepsilon \| F_{n} \|_{L_2}}
= \frac{C}{\varepsilon \sqrt{s} r_n }
=  \frac{C \sqrt{n}}{\varepsilon s \sqrt{\log s} },
\end{align*}
for some constant $C$. Combining these results together yields that
\begin{align*}
N_{[]} \left( \varepsilon \| F_{n} \|_{L_2}, \mathcal{F}_{n},  L_2 \right)
\leq
\frac{C^2 \sqrt{n}}{\varepsilon s \sqrt{\log s} } \left(\frac{\sqrt{n}}{\varepsilon (\log p)^{3/2} (\log n)}\right)^s
\leq C^2 \varepsilon^{-s-1} n^{s/2}
\end{align*}
for all sufficiently large $n$.
Then we have that 
\begin{align*}
J_{[]}\left(1,  \mathcal{F}_{n}, L_2 \right) \leq C^2 (\sqrt{s \log n} + \sqrt{s})
\end{align*}
for all sufficiently large $n$.
Thus, 
by the maximal inequality in Theorem 2.14.2 of \cite{VW},
\begin{align*}
n^{-1/2} \; \mathbb{E} \sup_{\mathcal{A}_n \times \mathcal{T}_n}\left\vert \mathbb{G}_{n1}\left(
\alpha_J,\tau\right)  \right\vert
&\leq O\left[  n^{-1/2}\sqrt{s}r_{n}\sqrt{s_{n}} (\sqrt{s\log n} + \sqrt{s}) \right]  \\
&= O \left[ \frac{s^{3/2}}{n^{3/2}} \sqrt{\log s}  (\log p)^{3/2} (\log n) (\sqrt{s \log n} + \sqrt{s}) \right] 
\\
&=o\left(  n^{-1}\right),
\end{align*}
where the last equality follows from the restriction that 
$s^4 (\log s)  (\log p)^3 (\log n)^3  =o\left(  n \right)$.
Identical arguments also apply to $\mathbb{G}_{n2}\left(  \alpha_J,\tau\right)$.

Turning to $\mathbb{E} D_{n}\left(  \alpha,\tau\right)  ,$ note that by the condition that  $\frac{\partial}{\partial\alpha}E\left[
\rho\left(  Y,X^{T}\alpha\right)  |Q=t\right]  $ exists for all $t$ in a
neighborhood of $\tau_{0}$ and all its elements are continuous and bounded
below and above, we have that for some mean value $\tilde{\beta}_J$ between $\beta_J$ and $\beta_{0J}$, 
\begin{align*}
&  \left\vert \mathbb{E}\left(  \rho\left(  Y,X_J^{T}\beta_J\right)  -\rho\left(
Y,X_J^{T}\beta_{0J}\right)  \right)  1\left\{  \tau_{0}<Q\leq\tau\right\}
\right\vert \\
&  =\left\vert \mathbb{E}\left[  \frac{\partial}{\partial\beta}\mathbb{E}\left[  \rho\left(
Y,X^{T}\tilde{\beta}_J\right)  |Q\right]  1\left\{  \tau_{0}<Q\leq\tau\right\}
\right]  \left(  \beta-\beta_{0}\right)  \right\vert \\
&  
=O\left(  s r_{n} s_{n}\right)  \\
&=
O \left[ \frac{s^{5/2}}{n^{3/2}} \sqrt{\log s}  (\log p)^3 (\log n)^2  \right] \\
&=o\left(  n^{-1}\right),
\end{align*}
where the last equality follows  from the restriction that $s^{5} (\log s)  (\log p)^6 (\log n)^4 =o\left(  n \right)$.
Since the same holds for the other term in $\mathbb{E}D_{n},$ $\sup\left\vert
\mathbb{E} D_{n}\left(  \alpha,\tau\right)  \right\vert =o\left(  n^{-1}\right)  $ as desired.
\end{proof}

\section{Proofs of Section \ref{sec:app}}\label{sec:proof-examples}

\subsection{Proof of Lemma \protect\ref{l4.1}}

\begin{proof}[Verification of Assumption \ref{a:obj-ftn} \ref{a:obj-ftn:itm1}]
The loss function for quantile regression is convex and satisfies the Liptschitz condition.
\end{proof}

\begin{proof}[Verification of  Assumption \ref{a:obj-ftn} \ref{a:obj-ftn:itm2}]
Note that $\rho (Y,t)=h_{\gamma
}(Y-t)$, where $h_{\gamma }(t)=t(\gamma -1\{t\leq 0\})$. By (B.3) of \cite{BC11}, 
\begin{align}\label{Knight-ineq}
h_{\gamma }(w-v)-h_{\gamma }(w)=-v(\gamma-1\{w\leq 0\})+\int_{0}^{v}(1\{w\leq
z\}-1\{w\leq 0\})dz
\end{align}
where $w=Y-X(\tau_0)^{T}\alpha_0$ and $v=X(\tau_0)^{T}(\alpha-\alpha_0)$. Note that 
\begin{equation*}
\mathbb{E}[ v(\gamma -1\{w\leq 0\})|Q]  =-\mathbb{E}[X(\tau_0)^{T}(\alpha-\alpha_0)(\gamma
-1\{U\leq 0\})|Q]=0,
\end{equation*}%
since $\mathbb{P}(U\leq 0|X,Q)=\gamma$. 
Let $F_{Y|X,Q}$ denote the CDF of the conditional distribution $Y|X,Q$.
Then
\begin{eqnarray*}
&&\mathbb{E}\left[ \rho (Y,X(\tau _{0})^{T}\alpha )-\rho (Y,X(\tau _{0})^{T}\alpha_{0})|Q \right] %
\\
&&=\mathbb{E}\left[ \int_{0}^{X(\tau_0)^{T}(\alpha-\alpha_0)}(1\{U \leq z)-1\{U \leq
0 \})dz \Big| Q  \right] \\
&&= \mathbb{E}\left[ \int_{0}^{X(\tau_0)^{T}(\alpha-\alpha_0)}[F_{Y|X,Q}(X(\tau_0)^{T}\alpha_{0}+z|X,Q)-F_{Y|X,Q}(X(\tau_0)^{T}\alpha_{0}|X,Q)]dz\bigg{|}Q\right] \\
&&\geq 0,
\end{eqnarray*}%
where the last inequality follows immediately from the fact that $F_{Y|X,Q}(\cdot|X,Q)$ is the CDF. 
Hence, we have verified Assumption \ref{a:obj-ftn} \ref{a:obj-ftn:itm2}.
\end{proof}

\begin{proof}[Verification of  Assumption \ref{a:obj-ftn} \ref{a:obj-ftn:itm3}]
Following the arguments analogous those used in (B.4) of \cite{BC11},
the mean value expansion implies: 
\begin{eqnarray*}
&&\mathbb{E}\left[ \rho (Y,X(\tau _{0})^{T}\alpha )-\rho (Y,X(\tau _{0})^{T}\alpha_{0})|Q \right] %
\\
&&=
\mathbb{E}\left\{ \int_{0}^{X(\tau_0)^{T}(\alpha-\alpha_0) } \left[z f_{Y|X,Q}(X(\tau _{0})^{T}\alpha_{0}|X,Q)+%
\frac{z^{2}}{2}\tilde{f}_{Y|X,Q}(X(\tau _{0})^{T}\alpha_{0}+t|X,Q) \right]dz\bigg{|}Q\right\} %
\\
&&=\frac{1}{2}(\alpha-\alpha_0) ^{T}\mathbb{E} \left[ X(\tau_0)X(\tau_0)^{T}f_{Y|X,Q}(X(\tau _{0})^{T}\alpha_{0}|X,Q)|Q \right](\alpha-\alpha_0) \\
&&+\mathbb{E}\left\{ \int_{0}^{X(\tau_0)^{T}(\alpha-\alpha_0)}\frac{z^{2}}{2}%
\tilde{f}_{Y|X,Q}(X(\tau _{0})^{T}\alpha_{0}+t|X,Q)dz\bigg{|}Q\right\} 
%\\&&\equiv H_1(\alpha,Q) + H_2(\alpha,Q)
\end{eqnarray*}%
for some intermediate value $t$ between $0$ and $z$.
By condition \ref{ass3.4-a:itm2} of Assumption \ref{ass3.4-a}, 
\begin{align*}
|\tilde{f}_{Y|X,Q}(X(\tau _{0})^{T}\alpha_{0}+t|X,Q)| \leq C_{1} 
\ \ \text{ and } \ \ f_{Y|X,Q}(X(\tau _{0})^{T}\alpha_{0}|X,Q) \geq C_2.
\end{align*} 
Hence,  taking the expectation on $\{ Q \leq \tau_0 \}$ gives
\begin{align*}
&\mathbb{E}\left[ \rho (Y,X^{T}\beta )-\rho (Y,X^{T}\beta_{0}) 1\{Q \leq \tau_0 \} \right] %
\\
&\geq \frac{C_2}{2} \mathbb{E}[( X^{T}(\beta-\beta_0))^{2} 1\{Q \leq \tau_0 \}]- \frac{C_{1}}{6}\mathbb{E}[(X^{T}(\beta-\beta_0))^{3}1\{Q \leq \tau_0 \} ] 
\\
&\geq \frac{C_2}{4} \mathbb{E}[| X^{T}(\beta-\beta_0)|^{2} 1\{Q \leq \tau_0 \}],
\end{align*}% 
where the last inequality follows from 
\begin{align}\label{nl-term}
\frac{C_2}{4} \mathbb{E}[| X^{T}(\beta-\beta_0)|^{2} 1\{Q \leq \tau_0 \}] \geq \frac{C_{1}}{6}\mathbb{E}[|X^{T}(\beta-\beta_0)|^{3}1\{Q \leq \tau_0 \} ].
\end{align}
To see why \eqref{nl-term} holds,
note that
by \eqref{nl-condition-qr}, for any nonzero $\beta \in \mathcal{B}(\beta_0,r^\ast_{QR})$, 
\begin{align*}
  \frac{\mathbb{E}[| X^{T}(\beta-\beta_0)|^{2} 1\{Q \leq \tau_0 \}]^{3/2}}{\mathbb{E}[| X^{T}(\beta-\beta_0)|^{3} 1\{Q \leq \tau_0 \}]} \geq  r^\ast_{QR} \frac{2C_1}{3C_2} \geq  \frac{2C_1}{3C_2}\mathbb{E}[| X^{T}(\beta-\beta_0)|^{2} 1\{Q \leq \tau_0 \}]^{1/2},
\end{align*}
which proves \eqref{nl-term} immediately. Thus, we have shown that 
Assumption \ref{a:obj-ftn} \ref{a:obj-ftn:itm3} holds for $r_1(\eta)$ with $\eta^\ast = C_2/4$ and $r^\ast = r^\ast_{QR}$ defined 
in \eqref{nl-condition-qr} in Assumption \ref{ass3.4-a}.
The case for $r_2(\eta)$ is similar and hence is omitted. 
\end{proof}

\begin{proof}[Verification of  Assumption \ref{a:obj-ftn} \ref{a:obj-ftn:itm4}]
We again start from \eqref{Knight-ineq} but  with different choices of $(w,v)$ such that  $w=Y-X(\tau_0)^{T}\alpha_0$ and $v = X^{T}\delta
_{0}[1\{Q \leq \tau _{0}\} - 1\{Q>\tau _{0}\}]$.
Then arguments similar to those used in verifying Assumptions \ref{a:obj-ftn} \ref{a:obj-ftn:itm2}-\ref{a:obj-ftn:itm3} yield that 
for $\tau < \tau_0$,
\begin{align}\label{qr-important-lower-bound}
& \mathbb{E}\left[  \rho \left( Y,X^{T}\theta _{0}\right) -\rho \left( Y,X^{T}\beta
_{0}\right) | Q= \tau  \right] \\
&= \mathbb{E}\left\{ \int_{0}^{X^{T}\delta_{0}} z f_{Y|X,Q}(X^{T}\beta
_{0}+t|X,Q) dz\bigg{|}Q = \tau \right\} \\
&\geq \mathbb{E}\left\{ \int_{0}^{\widetilde{\varepsilon}  (X^{T}\delta_{0})} z f_{Y|X,Q}(X^{T}\beta
_{0}+t|X,Q) dz\bigg{|}Q = \tau \right\} \\
&\geq 
\frac{\widetilde{\varepsilon}^2  C_3}{2} \mathbb{E}\left[ (X^{T}\delta_{0})^2 |Q = \tau \right],
\end{align}%
where $t$ is an intermediate value $t$ between $0$ and $z$.
Here, if the extra condition such that $M_3^{-1}  < \mathbb{E}[(X^T\delta_0)^2|Q=\tau] \leq    M_3 $ for some $M_3 > 0$ does not hold, we need to rely on 
\eqref{threshold-assumption-QR} in Assumption  \ref{ass3.4-a:itm4} to prove the last inequality in \eqref{qr-important-lower-bound}.
Thus, we have that 
\begin{align*}
\mathbb{E}\left[ \left( \rho \left( Y,X^{T}\theta _{0}\right) -\rho \left( Y,X^{T}\beta
_{0}\right) \right) 1\left\{ \tau <Q\leq \tau _{0}\right\} \right] & \geq
 \frac{\widetilde{\varepsilon}^2  C_3}{2} \mathbb{E}\left[  (X^{T}\left( \beta _{0}-\theta _{0})\right) ^{2}1\left\{ \tau <Q\leq \tau
_{0}\right\} \right].
\end{align*}
The case that $\tau > \tau_0$ is similar.  
\end{proof}

\begin{proof}[Verification of  Assumption \ref{ass3.2}]
Recall that $m_j(\tau,\alpha)=\mathbb{E} [X_j(%
\tau)(1\{Y-X(\tau)^T\alpha\leq0\}-\gamma)] $. 
Hence, note that $m_j(\tau_0,%
\alpha_0)=0,$ for all $j\leq 2p$.
For condition \ref{ass3.2:itm1} of
Assumption \ref{ass3.2}, for all $j\leq 2p$, 
\begin{align*}
&| m_j(\tau,\alpha_0)- m_j(\tau_0,\alpha_0)| \\
&= |\mathbb{E} X_j(\tau)[1\{Y\leq
X(\tau)^T\alpha_0\}-1\{Y\leq X(\tau_0)^T\alpha_0\}]|\cr 
&=|\mathbb{E} X_j(\tau)[\mathbb{P} (Y%
\leq X(\tau)^T\alpha_0|X,Q)-\mathbb{P} (Y\leq X(\tau_0)^T\alpha_0|X,Q)]|\cr 
&\leq C
\mathbb{E} |X_j(\tau)||(X(\tau)-X(\tau_0))^T\alpha_0| \\
&=C
\mathbb{E} |X_j(\tau)||X^T\delta_0(1\{Q>\tau\}-1\{Q>\tau_0\})|\cr 
&\leq
C\mathbb{E} |X_j(\tau)||X^T\delta_0| (1\{\tau<Q<\tau_0\}+1\{\tau_0<Q<\tau\} )\cr %
&\leq C
(\mathbb{P} (\tau_0<Q<\tau)+\mathbb{P} (\tau<Q<\tau_0))\sup_{\tau, \tau' \in \mathcal{T}_0}\mathbb{E} (|X_j(\tau) X^T\delta_0||Q=\tau') \\
&\leq C
(\mathbb{P} (\tau_0<Q<\tau)+\mathbb{P} (\tau<Q<\tau_0))
\sup_{\tau, \tau' \in \mathcal{T}_0} [\mathbb{E} (|X_j(\tau)|^2||Q=\tau')]^{1/2}  [\mathbb{E} (|X^T\delta_0|^2|Q=\tau') ]^{1/2} \\
&\leq C M_2 K_2 |\delta_0|_2 |\tau_0-\tau| 
\end{align*}
for some constant $C$, where the last inequality follows from Assumptions  \ref{a:setting} \ref{a:setting:itm1}, \ref{a:setting:itm4}, \ref{a:dist-Q} \ref{a:dist-Q:itm1}, and
\ref{a:threshold} \ref{a:threshold:itm1}. Therefore, we have verified  condition \ref{ass3.2:itm1} of Assumption \ref{ass3.2}
with $M_n = C M_2 K_2 |\delta_0|_2$.

We now verify condition \ref{ass3.2:itm2} of Assumption \ref{ass3.2}. For all $j$ and $%
\tau$ in a neighborhood of $\tau_0$, 
\begin{eqnarray*}
&&| m_j(\tau,\alpha)- m_j(\tau,\alpha_0)|=|\mathbb{E} X_j(\tau)(1\{Y\leq
X(\tau)^T\alpha\}-1\{Y\leq X(\tau)^T\alpha_0\})|\cr &&=|\mathbb{E} X_j(\tau)(\mathbb{P} (Y\leq
X(\tau)^T\alpha|X,Q)-\mathbb{P} (Y\leq X(\tau)^T\alpha_0|X,Q))|\cr &&\leq
C\mathbb{E} |X_j(\tau)||X(\tau)^T(\alpha-\alpha_0)|\leq
C|\alpha-\alpha_0|_1\max_{j\leq 2p, i\leq 2p}\mathbb{E} |X_j(\tau)X_i(\tau)|,
\end{eqnarray*}
which implies the result immediately in view of Assumption \ref{a:setting}  \ref{a:setting:itm4}. Finally, it is straightforward to verify  condition \ref{ass3.2:itm3}.
%The Lipschitz continuity in part (iii) is well known in th quantile regression, and is straightforward to check (e.g., Fan et al. 2014).
\end{proof}

\subsection{Proof of Lemma \protect\ref{l4.2}}

We shall let $C>0$ denote a generic constant.

\begin{proof}[Verification of Assumption \ref{a:obj-ftn} \ref{a:obj-ftn:itm1}]
The loss function for logistic regression is convex and satisfies the Liptschitz condition.
\end{proof}

\begin{proof}[Verification of  Assumption \ref{a:obj-ftn} \ref{a:obj-ftn:itm2}]
Recall that $g(t)=\exp(t)/(1+\exp(t))$; then  for all $\alpha$,
$$
\mathbb{E} [ \rho (Y,X(\tau _{0})^{T}\alpha )|Q]=\mathbb{E}[f(g(X(\tau_0)^T\alpha), t_0)|Q],\quad f(t, t_0)=-t_0\log t-(1- t_0)\log(1-t),
$$
where $t_0=g(X(\tau_0)^T\alpha_0)$.
Note that $f(t, t_0)\geq f(t_0, t_0)$ for all $t>0$. Hence, we have verified the assumption.
\end{proof}

\begin{proof}[Verification of  Assumption \ref{a:obj-ftn} \ref{a:obj-ftn:itm3}]
Note that $\forall \beta \in \mathcal{B}(\beta_0, r)$,
$$
\mathbb{E} \left(   \rho \left( Y,X^{T}\beta \right)   1\left\{ Q\leq \tau
_{0}\right\} \right)=\mathbb{E} \left(f(g(X^T\beta), g(X^T\beta_0))    1\left\{ Q\leq \tau
_{0}\right\} \right).
$$
 Let  $t_0=g(X^T\beta_0)$,  then  $\partial_{t}f(t, t_0)|_{t=t_0}=0$. Let $t=g(X^T\beta)$. By Taylor's expansion, there are $\lambda\in[0,1]$ and $\tilde t\in(0,1)$ such that $f(t, t_0)-f(t_0, t_0)=\partial^2_tf(\tilde t, t_0)(t-t_0)^2/2$, which implies, for $\tilde\beta=\lambda\beta+(1-\lambda)\beta_0$,
 $$
\mathbb{E} [ \{ \rho(Y,X^T\beta)-\rho(Y,X^T\beta_0) \} 1\{Q\leq \tau_0\}]= \frac{1}{2}\mathbb{E} [\partial^2_tf(\tilde t, t_0)g'(X^T\tilde \beta)^2(X^T\beta-X^T\beta_0)^2  1\{Q\leq \tau_0\}].
$$
By Assumption \ref{ass3.5},  $\partial^2_tf(\tilde t, t_0)=t_0/\tilde t^2+ (1-t_0)/(1-\tilde t)^2>C$ and $\epsilon<g(X^T\tilde\beta)<1-\epsilon$, so $g'(X^T\tilde \beta)^2=g(X^T\tilde\beta)(1-g(X^T\tilde\beta))> \epsilon^2$.
 Hence $$
\mathbb{E} [ \{\rho(Y,X^T\beta)-\rho(Y,X^T\beta_0)\} 1\{Q\leq \tau_0\}\geq \frac{C\epsilon^2}{2}\mathbb{E}[ (X^T\beta-X^T\beta_0)^21\{	Q\leq \tau_0\}].
$$
So  the assumption holds with $\eta^*=C\epsilon^2/2$. The inequality $r_2(\eta^\ast) \geq r^\ast$ can be proved using the same argument.
 \end{proof}

\begin{proof}[Verification of  Assumption \ref{a:obj-ftn} \ref{a:obj-ftn:itm4}]
Note that for all $\tau>\tau_0$, note that for $t_0=g(X^T\theta_0)$, and $t=g(X^T\beta_0)$,
\begin{eqnarray*}
 \mathbb{E}\left[  \rho  ( Y,X^{T}\beta _{0}  ) -\rho( Y,X^{T}\theta _{0}) |Q=\tau \right] =\mathbb{E}[f(t, t_0)-f(t_0, t_0)|Q=\tau].
\end{eqnarray*}
Using the same argument as verifying Assumption  \ref{a:obj-ftn} \ref{a:obj-ftn:itm2}-\ref{a:obj-ftn:itm3},   there exists a $C>0$ such that  the right hand side is bounded below by 
$ C\mathbb{E}[g'(X^T\tilde\beta)^2(X^T\delta_0)^2|Q=\tau]$, where for some $\lambda>0$, $\tilde\beta=\lambda\beta_0+(1-\lambda)\theta_0$.
We now consider lower bound $g'(X^T\tilde\beta)$.  By Assumption \ref{ass3.5},  almost surely there is $\epsilon>0$, $\epsilon<g(X^T\beta_0), g(X^T\theta_0)<1-\epsilon$. By the monotonicity of $g(t)$, 
 and $
\min\{ X\beta_0, X\theta_0\}\leq X^T\tilde\beta\leq \max\{ X\beta_0, X\theta_0\}$,  we have
  $\epsilon<g(X^T\tilde\beta)<1-\epsilon$. Moreover,   $g'(t)=g(t)(1-g(t))$, so 
   $g'(X^T\tilde\beta)^2>\epsilon^4$.

    In addition, $\inf_{\tau>\tau_0}\mathbb{E}[(X^T\delta_0)^2|Q=\tau]$ is bounded away from zero.
Hence there is   $C^*>0$, $ \mathbb{E}\left[  \rho  ( Y,X^{T}\beta _{0}  ) -\rho( Y,X^{T}\theta _{0}) |Q=\tau \right] >C^*$.
Therefore, we have, for some $C>0$, \begin{align*}
\mathbb{E}\left[ \left( \rho \left( Y,X^{T}\theta _{0}\right) -\rho \left( Y,X^{T}\beta
_{0}\right) \right) 1\left\{ \tau <Q\leq \tau _{0}\right\} \right] & \geq
C\mathbb{E}\left[  (X^{T}\delta _{0}) ^{2}1\left\{ \tau <Q\leq \tau
_{0}\right\} \right].
\end{align*}
The case of $\tau < \tau_0$ is similar.
\end{proof}

\begin{proof}[Verification of  Assumption \ref{ass3.2}]
 For part \ref{ass3.2:itm1} of Assumption \ref{ass3.2}, by the mean value theorem,
for all $j\leq 2p,$ 
\begin{align*}
&|m_{j}(\tau ,\alpha _{0})-m_{j}(\tau _{0},\alpha _{0})|=\left\vert \mathbb{E}\left\{ 
\frac{g(X(\tau _{0})^{T}\alpha _{0})-g(X(\tau )^{T}\alpha _{0})}{g(X(\tau
)^{T}\alpha _{0})(1-g(X(\tau )^{T}\alpha _{0}))}g'(X(\tau)^T\alpha)X_j(\tau)\right\} \right\vert \cr
&\leq C\sup_{t}|g'(t)|^2\mathbb{E}|X^{T}\delta _{0}(1\{Q>\tau _{0}\}-1\{Q>\tau \})X_{j}(\tau )|\cr
&\leq (\mathbb{P} (\tau_0<Q<\tau)+\mathbb{P} (\tau<Q<\tau_0))\sup_{\tau, \tau' \in \mathcal{T}_0}\mathbb{E} (|X_j(\tau) X^T\delta_0||Q=\tau') \\
&\leq C
(\mathbb{P} (\tau_0<Q<\tau)+\mathbb{P} (\tau<Q<\tau_0))
\sup_{\tau, \tau' \in \mathcal{T}_0} [\mathbb{E} (|X_j(\tau)|^2||Q=\tau')]^{1/2}  [\mathbb{E} (|X^T\delta_0|^2|Q=\tau') ]^{1/2} \\
&\leq C M_2 K_2 |\delta_0|_2 |\tau_0-\tau| 
\end{align*}
for some constant $C$, where the last inequality follows from Assumptions  \ref{a:setting} \ref{a:setting:itm1}, \ref{a:setting:itm4}, \ref{a:dist-Q} \ref{a:dist-Q:itm1}, and
\ref{a:threshold} \ref{a:threshold:itm1}. Therefore, we have verified  condition \ref{ass3.2:itm1} of Assumption \ref{ass3.2}
with $M_n = C M_2 K_2 |\delta_0|_2$.

 For part \ref{ass3.2:itm2}, since $\epsilon<g(X(\tau)^T\alpha)<1-\epsilon,$ by the mean value theorem,
there is $Z$, 
\begin{eqnarray*}
&&|m_{j}(\tau ,\alpha )-m_{j}(\tau ,\alpha _{0})|\leq |\mathbb{E}\left\{ g(X(\tau
_{0})^{T}\alpha _{0})\frac{g(X(\tau )^{T}\alpha _{0})-g(X(\tau )^{T}\alpha )%
}{g(X(\tau )^{T}\alpha )g(X(\tau )^{T}\alpha _{0})}g'(X(\tau)^T\alpha)X_j(\tau)\right\} |\cr
&&+|\mathbb{E}\left\{ (1-g(X(\tau _{0})^{T}\alpha _{0}))\frac{%
g(X(\tau )^{T}\alpha )-g(X(\tau )^{T}\alpha _{0})}{(1-g(X(\tau )^{T}\alpha
))(1-g(X(\tau )^{T}\alpha _{0}))}g'(X(\tau)^T\alpha)X_j(\tau)\right\} |%
\cr
&&+|\mathbb{E}\left\{ \left[ \frac{g(X(\tau _{0})^{T}\alpha _{0})}{g(X(\tau
)^{T}\alpha _{0})}-\frac{1-g(X(\tau _{0})^{T}\alpha _{0})}{1-g(X(\tau
)^{T}\alpha _{0})}\right]  (g'(X(\tau)^T\alpha_0)-g'(X(\tau)^T\alpha))X_{j}(\tau )\right\} |\cr
&&\leq C\max_{j,m\leq 2p}\mathbb{E}|X_{j}(\tau )X_{m}(\tau
)||\alpha -\alpha _{0}|_{1}\cr
&&+|\mathbb{E}\left\{ \left[ \frac{g(X(\tau
_{0})^{T}\alpha _{0})}{g(X(\tau )^{T}\alpha _{0})}-\frac{1-g(X(\tau
_{0})^{T}\alpha _{0})}{1-g(X(\tau )^{T}\alpha _{0})}\right] g''
(Z)X_{j}(\tau )X(\tau )^{T}(\alpha _{0}-\alpha )\right\} |\cr
&&\leq C\max_{j,m\leq 2p}\mathbb{E}|X_{j}(\tau )X_{m}(\tau )||\alpha _{0}-\alpha |_{1},
\end{eqnarray*}%
which implies the result immediately in view of Assumption \ref{a:setting}  \ref{a:setting:itm4}. Finally, condition \ref{ass3.2:itm3} can be verified by straightforward calculations.
 \end{proof}

%\textbf{Verifications when $\delta_0=0$.}

%When  $\delta_0=0$, we only need to verify  Assumptions  \ref{ass2.2-delta0} and \ref{ass4.2}, which can be verified using the same arguments of verifying Assumptions \ref{ass2.2} and \ref{ass3.2}.

 \singlespacing

\bibliographystyle{ims}
\bibliography{liaoBib-SL}

\end{document}